\documentclass{birkjour}

\usepackage{hyperref}
\usepackage{amsfonts,amssymb,amsmath,exscale,bbm}
\usepackage{empheq}
\usepackage{footnpag} 
\usepackage[all,knot]{xy}  
\usepackage{color}
\usepackage{graphicx}
\usepackage{epsfig}
\usepackage{subfig}
\usepackage{enumerate}

\usepackage{bbold}



\usepackage{amsmath}
\usepackage{amsfonts}
\usepackage{graphicx}
\usepackage{psfrag}
\usepackage{array}
\usepackage{bbm}
\usepackage{hyperref}
\usepackage{amsthm}
\theoremstyle{definition}
\newtheorem{definition}{Definition}

\newtheorem{proposition}{Proposition}


\newcommand{\cA}{{\mathcal A}}
\newcommand{\cB}{{\mathcal B}}

\newcommand{\cG}{{\mathcal G}}
\newcommand{\cH}{{\mathcal H}}
\newcommand{\cI}{{\mathcal I}}

\newcommand{\cM}{{\mathcal M}}

\newcommand{\cO}{{\mathcal O}}

\newcommand{\cS}{{\mathcal S}}
\newcommand{\cT}{{\mathcal T}}

\newcommand{\cZ}{{\mathcal Z}}


\def\inv{{\mbox{\tiny -1}}}


\newcommand{\fa}{\mathbb{a}}
\newcommand{\fb}{\mathbb{b}}
\newcommand{\fc}{\mathbb{c}}
\newcommand{\fd}{\mathbb{d}}
\newcommand{\fe}{\mathbb{e}}
\newcommand{\ff}{\mathbb{f}}
\newcommand{\fg}{\mathbb{g}}

\newcommand\beq{\begin{equation}}
\newcommand\eeq{\end{equation}}
\newcommand{\be}{\begin{equation}}
\newcommand{\ee}{\end{equation}}
\newcommand{\bes}{\begin{eqnarray}}
\newcommand{\ees}{\end{eqnarray}}

\def\vphi{{\varphi}}
\def\vphib{\overline{{\varphi}}}

\newcommand{\one}{\mbox{$1 \hspace{-1.0mm}  {\bf l}$}}
\def\Phib{\overline{\Phi}}
\def\omegab{\overline{\omega}}
\def\tS{\widetilde{S}}
\def\tC{\widetilde{C}}
\def\tu{\tilde{u}}

\def\bu{{\bf{u}}}

\newcommand{\su}{\mathfrak{su}}

\newcommand{\SU}{\mathrm{SU}}

\newcommand{\U}{\mathrm{U}}



\newcommand\acts\triangleright


\def\e{\mbox{e}}
\def\extd{\mathrm {d}}

\def\nn{{\nonumber}}

\newcommand\restr[2]{{
  \left.\kern-\nulldelimiterspace 
  #1 
  \vphantom{\big|} 
  \right|_{#2} 
  }}


\newtheorem{theo}{Theorem}
\newtheorem{lemma}[theo]{Lemma}


\begin{document}


\title[Discrete Renormalization Group for $\SU(2)$ TGFT]{Discrete Renormalization Group for $\SU(2)$ Tensorial Group Field Theory}

\author{Sylvain Carrozza}\email{sylvain.carrozza@cpt.univ-mrs.fr}\address{Centre de Physique Th\'eorique\\ 
CNRS UMR7332, Universit\'e d'Aix-Marseille, 13288 Marseille cedex 9, France}

\begin{abstract}
\bigskip
This article provides a Wilsonian description of the perturbatively renormalizable Tensorial Group Field Theory introduced in arXiv:1303.6772 [hep-th] (Commun. Math. Phys. 330, 581-637). It is a rank-3 model based on the gauge group $\SU(2)$, and as such is expected to be related to Euclidean quantum gravity in three dimensions. By means of a power-counting argument, we introduce a notion of dimensionality of the free parameters defining the action. General flow equations for the dimensionless bare coupling constants can then be derived, in terms of a discretely varying cut-off, and in which all the so-called melonic Feynman diagrams contribute. Linearizing around the Gaussian fixed point allows to recover the splitting between relevant, irrelevant, and marginal coupling constants. Pushing the perturbative expansion to second order for the marginal parameters, we are able to determine their behaviour in the vicinity of the Gaussian fixed point. Along the way, several technical tools are reviewed, including a discussion of combinatorial factors and of the Laplace approximation, which reduces the evaluation of the amplitudes in the UV limit to that of Gaussian integrals.  
\end{abstract}


\maketitle

\section{Introduction}

Group Field Theory (GFT) \cite{freidel_gft, daniele_rev2006, daniele_rev2011, Krajewski_rev} is an approach to Quantum Gravity which lies at the crossroad of Tensor Models \cite{Ambjorn_tensors, gross, Sasakura:1990fs, razvan_jimmy_rev, tt2}, Loop Quantum Gravity (LQG) \cite{ashtekar_book, rovelli_book, thiemann_book, gambini_pullin_book, bojowald_book}, and Spin Foam Models \cite{perez_review2012, Baez:1997zt, oriti2001spacetime, Alexandrov:2011ab}. This research program aims at addressing questions which are notoriously difficult in loop quantum gravity, such as the construction of the dynamical sector of the theory and of its continuum limit \cite{daniele_hydro}, by means of quantum and statistical field theory techniques. A GFT is nothing but a field theory defined on a (compact) group manifold, with specific non-local interactions. The latter are chosen in such a way that the Feynman expansion generates cell complexes of a given dimension, interpreted as discrete space-time histories. In particular, the amplitudes of any spin foam model can be generated by a suitably constructed GFT, which is what triggered interest in the GFT formalism \cite{dPFKR, GFT_rovelli_reisenberg}. Hence GFTs can be viewed as a natural way of completing the definition of spin foam models, which by themselves do not associate unambiguous amplitudes to boundary states. Note that alternatively to the summing strategy implemented in GFT, one can instead look for a definition of the refinement limit of spin foams \cite{bianca_cyl, bianca_review, bahr2012}. A more direct route from canonical LQG to GFT has recently been proposed \cite{daniele_2nd}: from this perspective, the GFT formalism defines a second-quantized version of LQG, and hence should be especially relevant to the analysis of the many-body sector of the theory (see \cite{gfc_letter, gfc_full} for cosmological applications of these ideas). 

Thanks to recent breakthroughs in the field of tensor models, triggered by the pioneering work of Gurau \cite{razvan_colors}, who found a generalization of the $1/N$ expansion of matrix models \cite{RazvanN, RazvanVincentN, razvan_complete}, standard field theory techniques are currently being developed and generalized to more and more complicated GFTs. The common ingredient to all the models studied so far is \emph{tensor invariance} \cite{r_vir, universality, uncoloring, tt2, tt3, tt4}: it provides a generalized notion of locality for GFTs, which as we have just mentioned are non-local in the ordinary sense. They then differ by: the rank of the tensor fields, identified with the space-time dimension; the space in which the indices of the tensor fields live; and the propagator. Theories based on tensors with indices in the integers\footnote{Equivalently, such models can also be viewed as GFTs on the group $\U(1)$.} and trivial propagator are referred to as Tensor Models. A wealth of results about their $1/N$ expansions has now been accumulated \cite{critical, dr_dually, v_revisiting, v_new, jr_branched}, from single to multiple scalings in $N$ \cite{wjd_double, stephane_double, bgr_double}, and up to the rigorous non-perturbative level \cite{razvan_beyond, delepouve_borel}. Similar models with non-trivial propagators are referred to as Tensorial Field Theories \cite{tensor_4d, josephsamary, joseph_d2, joseph_etera}: they are indeed genuine field theories, for which full-fledge renormalization methods take over the large $N$ expansion. Finally, Tensorial Group Field Theories (TGFT) are Tensorial Field Theories based on (compact) Lie groups\footnote{In view of the correspondence between simple GFTs and tensor models resulting from harmonic analysis, we have in mind a theory in which the group structure plays a significant enough role.}. The only models of this type available in the literature so far are the so-called TGFTs with \emph{gauge invariance condition} \cite{cor_u1, cor_su2, samary_vignes, thesis}. This paper will focus on one such example, a rank-$3$ renormalizable model based on the gauge group $\SU(2)$. In addition to being technically relevant to spin foam models and LQG in general, it is expected to be related to Euclidean quantum gravity in three dimensions, though the full correspondence is unclear at present: while tensor invariance seems to provide a perfectly viable discretization prescription, the quantum gravity interpretation (if any) of the Laplace-type propagator has not been investigated in details.  

The aim of this article is two-folds. The first objective is to pave the way towards general renormalization group techniques \`a la Wilson, which will allow to better understand the theory spaces of TGFTs, their flows and fixed points. Secondly, we want to determine the properties of the Gaussian fixed point of the specific model we are considering. Indeed, the examples worked out so far \cite{josephaf, samary_beta} suggest that asymptotic freedom might be a reasonably generic property of Tensorial Field Theories. The occurrence of asymptotic freedom in such models is surprising at first, because they are not gauge theories. However, the non-locality of the interactions is responsible for wave-function renormalization terms which are absent from ordinary scalar field theories, and which typically dominate over the coupling constants renormalization terms. Hence the $\beta$-functions can be negative despite the absence of non-Abelian gauge symmetries.

In section \ref{sec:model}, we introduce the model and the main results of \cite{cor_su2} which are relevant to the present publication. In addition, we provide a detailed analysis of the symmetry factors appearing in the Feynman expansion, which significantly simplifies the calculations reported on in the later sections. In section \ref{sec:rgflow} we introduce a discrete version of Wilson's renormalization group. It is based on the introduction of dimensionless coupling parameters, and differs in this sense from the analyses performed in previous works \cite{josephaf, samary_beta, cor_su2}. A new notion of reducibility is also proposed, which is to some extent the correct generalization of $1$-particle reducibility from ordinary local field theories to TGFTs. In section \ref{sec:gaussian}, we linearize the flow equations in the vicinity of the Gaussian fixed point. This allows to classify the coupling constants in terms of their relevance, and recover the fact that this model is renormalizable up to order six interactions. We then explicitly compute the eigendirections associated to this linear system, and deduce the functional relationship between the marginal coupling constants and the other renormalizable constants in the asymptotic UV region. In section \ref{sec:as}, the flow equations for the marginal coupling constants are pushed to second order in perturbation theory. Although the calculations underlying the results of this section are rather lengthy and technical, they are as far as we know the first of their kind in a non-Abelian model and are therefore included in full details. Finally, the qualitative properties of the flow equations in the vicinity of the Gaussian fixed point are investigated in section \ref{sec:portrait}. Relying on a continuous version of the discrete flow, we will be able to completely settle the question of asymptotic freedom when $u_{6,1}$ and $u_{6,2}$ have same signs, and in particular when they are strictly positive.

%

\section{The model and its divergences}\label{sec:model}

In this section, we introduce the TGFT model studied in \cite{cor_su2}, and summarize some of the key steps in the proof of its renormalizability. This paper was to a large extent based on a Bogolioubov recursion relation, which defined the renormalized series. The analysis was greatly simplified, thanks to a refined notion of graph connectedness, called \textit{face-connectedness}. As already noted in \cite{cor_su2}, this structure is not appropriate in the effective Wilsonian language, where connected graphs in the ordinary sense\footnote{This ordinary notion was called \textit{vertex-connectedness} in \cite{cor_su2}.} must be summed over. Since the purpose of this paper is to investigate further the renormalization group flow equations of this model, we will only outline the definition of the effective series, in which the usual graph-theoretic notion of connectedness is at play.  

\subsection{Definition and Feynman expansion}

We are interested in a group field theory for a field $\vphi \in L^{2} (\SU(2)^{\times 3})$ and its complex conjugate $\vphib$. The free theory is defined by a Gaussian measure $\extd \mu_C$, with covariance $C$, that is to say:
\beq
\int \extd \mu_C (\vphi , \vphib) \, \vphi (g_1, g_2 , g_3) \vphib ( g_1' , g_2' , g_3') = C(g_1 , g_2 , g_3 ; g_1' , g_2' , g_3')\,.
\eeq
This covariance can be perturbed by non-Gaussian interaction terms, encapsulated in an action $S(\vphi , \vphib)$, so as to define the (Euclidean) interacting partition function:
\beq
\cZ \equiv \int \extd \mu_C (\vphi , \vphib) \, \exp( - S (\vphi , \vphib) )\,.
\eeq

\

In tensorial GFTs, $S$ is assumed to be a weighted sum of \textit{connected tensor invariants}. By analogy with space-time based quantum field theories, this prescription is referred to as a \textit{locality principle}, and is used as such for renormalization purposes. Connected tensor invariants in dimension $d$ are in one-to-one correspondence with $d$-\textit{colored graphs} (also called $d$-\textit{bubbles}), which are bipartite edge-colored closed graphs with fixed valency $d$ at each node. In our $3$-dimensional context, a $3$-colored graph is a connected graph with two types of nodes (black or white), and edges labeled by integers $\ell \in \{ 1, 2 , 3\}$ (the colors), in such a way that: a) any edge connects a white node to a black one; b) at any node, exactly three edges with distinct colors meet. Simple examples are provided in Figure \ref{ex_coloredgraphs}. 
\begin{figure}[h]
\begin{center}
\includegraphics[scale=0.5]{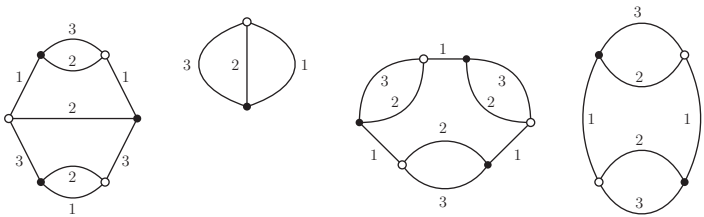}
\caption{Examples of $3$-colored graphs.}
\label{ex_coloredgraphs}
\end{center}
\end{figure}
The unique invariant $I_b (\vphi , \vphib)$ associated to a given $3$-bubble $b$ is constructed as follows: a) each white (resp. black) node represents a field $\vphi$ (resp. a conjugate field $\vphib$); b) a color-$\ell$ edge between the nodes $n$ and $\overline{n}$ indicates a convolution with respect to the $\ell^{{\rm{th}}}$ variables of the fields located at $n$ and $\overline{n}$ respectively. 

\noindent{\bf Example.} The colored graph on the right side of Figure \ref{ex_coloredgraphs} represents the following invariant integral:
\beq
\int [\extd g ]^6 \, \vphi(g_1 , g_2 , g_3 ) \vphib(g_4 , g_2 , g_3 ) \vphi(g_4 , g_5 , g_6 ) \vphib(g_1 , g_5 , g_6 )\,. 
\eeq

\noindent With these definitions, the interaction part of the action $S$ can be written as
\beq\label{action1}
S(\vphi , \vphib) = \sum_{b \in \cB} \frac{t_{b}}{k(b)} I_b (\vphi , \vphib)\,,
\eeq
where $\cB$ is the set of all bubbles, and $t_b \in \mathbb{R}$ is the coupling constant associated to $b$. In order to simplify the counting of Feynman graphs, we divided each coupling constant $t_b$ by a combinatorial factor $k(b)$, defined as the \emph{number of automorphisms} of the bubble graph $b$. 
\begin{definition}
Let $b$ be a colored graph. An \emph{automorphism} of $b$ is a permutation $\sigma$ of its nodes\footnote{This definition coincides with the more general concept of graph automorphism, even if a graph automorphism is usually thought of as a couple $(\sigma_n , \sigma_e)$ of permutations, respectively of the nodes and of the edges. When imposing compatibility with the colored structure, $\sigma_e$ becomes redundant, hence our definition in terms of a single permutation.}, such that: 
\begin{enumerate}[(i)]
\item $\sigma$ conserves the nature of the nodes; 
\item if $n$ and $\overline{n}$ are connected by an edge of color $\ell$, then so do $\sigma(n)$ and $\sigma(\overline{n})$.
\end{enumerate}
\end{definition}
At this stage, it is natural to assume that the colors have no physical role other than imposing combinatorial restrictions on the interactions, and we will therefore require $S$ to be invariant under color permutations. This can be formalized as follows. The group $S_3$ of permutations of the color set $\{1,2,3\}$ acts naturally on $\cB$: for any $b \in \cB$, $\sigma.b$ is the bubble obtained from $b$ by permutation of the color labels as dictated by $\sigma$. The invariance of $S$ under color permutation is simply the statement that:
\beq
\forall b \in \cB\,, \; \forall \sigma \in S_3 \,, \; t_{\sigma.b} = t_b\,.
\eeq

\

The second ingredient of the model is the covariance $C$. In \cite{cor_u1, cor_su2}, it was motivated from two basic requirements: first, it should impose the so-called \textit{gauge invariance condition} of spin foam models; second, it should have a rich enough spectrum so as to provide an abstract notion of scale. The gauge invariance condition is an invariance of the field\footnote{Note that there is no gauge symmetry involved at the field theory level, and that the nomenclature arises from the lattice gauge theory interpretation of the \emph{amplitudes}.} under an arbitrary simultaneous translation of its variables:
\beq\label{gauge_invariance}
\forall h \in \SU(2), \qquad \vphi( g_1 h , g_2 h , g_3 h ) = \vphi (g_1 , g_2 , g_3 )\,.
\eeq
In the GFT context we can however work with generic fields, and impose (\ref{gauge_invariance}) through the covariance. The latter should therefore contain the projector $P$ on the space of gauge invariant fields, defined by the kernel:   
\beq
P( g_1 , g_2 , g_3 ; g_1' , g_2' , g_3' ) = \int \extd h \, \prod_{i = 1}^{3} \delta( g_i h g_i'^{\inv})\,.
\eeq
In order to get a non-trivial spectrum, we combine it with the operator 
\beq
\widetilde{C} \equiv \left( m^2 - \sum_\ell \Delta_\ell \right)^{-1} \,,
\eeq
where $\Delta_\ell$ denotes the Laplace operator on $\SU(2)$ acting on the color-$\ell$ variables. At this stage, this should be seen as a conservative natural choice, partially motivated by a study of the $2$-point radiative corrections of the Boulatov-Ooguri models \cite{Valentin_Joseph}, and formal analogies with the Osterwalder-Schrader axioms \cite{tt2, tt3}. The covariance $C$ is then defined as\footnote{It is easily seen that $\widetilde{C}$ and $P$ commute.}
\beq
C = \widetilde{C} P = P \widetilde{C} \,,
\eeq  
whose kernel can be written in the Schwinger representation:
\beq\label{def_propa}
C ( g_1 , g_2 , g_3 ; g_1' , g_2' , g_3' ) = \int \extd \alpha \, \e^{- m^2 \alpha} \int \extd h \, \prod_{i=1}^{3} K_{\alpha} ( g_i h g_i'^{\inv})\,, 
\eeq
where $K_\alpha$ is the heat kernel on $\SU(2)$ at time $\alpha$. This covariance is ill-defined at coinciding points $g_i = g_i'$, and this can be regulated by cutting-off the contribution of the integral over $\alpha$ in the neighborhood of $0$. The divergences in this cut-off, generic in the perturbation expansion of the theory, can be analyzed in details. 

\

The unnormalized $N$-point functions $\cZ\cS_N$ can be expanded perturbatively in terms of Feynman amplitudes, which are indexed by generalized $4$-colored graphs. The new type of lines, of color $0$ and represented as dashed lines, is introduced to represent the propagators; they connect the elementary bubble vertices generated by $S$. These lines can also appear as external legs, contrary to the internal bubble lines, and this is the only respect in which the Feynman graphs differ from generic $4$-colored graphs. Given a Feynman graph $\cG$, we define $N(\cG)$ as the number of external legs, and $n_b (\cG)$ as the number of bubble vertices of type $b$. Then 
\beq
\cZ \cS_N = \sum_{\cG \vert N(\cG) = N} \frac{1}{k(\cG)} \left( \prod_{b \in \cB} (- t_b )^{n_b (\cG)} \right)\, \cA_\cG \,,
\eeq
where $k(\cG)$ is a symmetry factor associated to $\cG$, and $\cA_\cG$ its amplitude. A simple counting shows that $k(\cG)$ is nothing but the number of automorphisms of $\cG$, which generalizes the definition of $k(b)$ for bubbles (hence the notation). 
\begin{definition}
Let $\cG$ be a Feynman graph. An \emph{automorphism} of $\cG$ is a permutation $\sigma$ of its nodes, such that: 
\begin{enumerate}[(i)]
\item $\sigma$ conserves the nature of the nodes; 
\item if $n$ and $\overline{n}$ are connected by an edge of color $\ell$, then so do $\sigma(n)$ and $\sigma(\overline{n})$;
\item if a node $n$ (resp. $\overline{n}$) is connected to an external leg $l$, then $\sigma(n) = n$ (resp. $\sigma(\overline{n}) = \overline{n}$).
\end{enumerate}
\end{definition}
\begin{proposition}
The symmetry factor $k(\cG)$ associated to an arbitrary Feynman graph $\cG$ is the order of its group of automorphisms. 
\end{proposition}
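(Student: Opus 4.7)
The plan is to derive the coefficient of each Feynman graph $\cG$ in the perturbative expansion of $\cZ\cS_N$ directly from Wick's theorem, and to identify the resulting symmetry factor as $1/|\mathrm{Aut}(\cG)|$ via an orbit--stabilizer argument. First I would expand $\exp(-S)$ in powers of the invariants $I_b$ to get
$$\cZ\cS_N = \sum_{\{n_b\}}\prod_{b\in\cB} \frac{1}{n_b!}\left(-\frac{t_b}{k(b)}\right)^{\!n_b}\!\! \int \extd\mu_C\,(\text{ext.\ insertions})\, \prod_{b\in\cB}(I_b)^{n_b}.$$
Viewing each of the $n_b$ factors of $I_b$ as a distinct labeled instance of the bubble $b$ with an arbitrary but fixed numbering of its nodes, the Gaussian integral reduces, by Wick's theorem, to a sum over pairings of the labeled $\vphi$ and $\vphib$ insertions (external legs included). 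Each pairing defines a \emph{fully labeled} Feynman graph $\cG^\ast$ which carries the amplitude $\cA_\cG$ of its underlying unlabeled graph $\cG$, obtained by forgetting the internal labels.

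Next I would introduce the wreath product
$$\Gamma_{\{n_b\}} \,:=\, \prod_{b\in\cB}\bigl(\mathrm{Aut}(b)\wr S_{n_b}\bigr), \qquad |\Gamma_{\{n_b\}}| = \prod_{b\in\cB} k(b)^{n_b}\,n_b!\,,$$
acting on labeled contractions by simultaneously relabeling copies of each vertex type and applying a bubble automorphism inside each copy. Two labeled graphs project onto the same $\cG$ if and only if they belong to a single $\Gamma_{\{n_b\}}$-orbit, since these two moves exhaust the freedom in reconstructing $\cG^\ast$ from $\cG$. By orbit--stabilizer,
$$|\Gamma_{\{n_b\}}\cdot \cG^\ast| \,=\, \frac{|\Gamma_{\{n_b\}}|}{|\mathrm{Stab}_{\Gamma_{\{n_b\}}}(\cG^\ast)|}\,,$$
and the critical identification is that the stabilizer consists precisely of those combined relabelings which map every color-$0$ Wick line to itself, i.e.\ of node-permutations of $\cG$ preserving node type, respecting color-$\ell$ edges, and fixing external legs---the three conditions in the definition of $\mathrm{Aut}(\cG)$.

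Every contraction in the orbit of $\cG^\ast$ contributes the same amplitude $\cA_\cG$, so the total weight of the orbit equals $(|\Gamma_{\{n_b\}}|/|\mathrm{Aut}(\cG)|)\,\cA_\cG$, which cancels exactly the $\prod_b (n_b!\,k(b)^{n_b})^{-1}$ prefactor from the expansion, leaving the announced formula with $k(\cG) = |\mathrm{Aut}(\cG)|$. The main obstacle I foresee is not the orbit--stabilizer bookkeeping itself but the faithful identification of the stabilizer with $\mathrm{Aut}(\cG)$: one has to verify that the distinct labels on different vertex copies preclude any hidden swaps between copies of the same type, that $\mathrm{Aut}(b)$ acts faithfully on the nodes of $b$ (implicit in the footnote following the definition of $\mathrm{Aut}(b)$), and that the rigid labels on external legs correspond exactly to condition (iii) of the definition. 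Once these rigidity checks are in place the argument is essentially mechanical.
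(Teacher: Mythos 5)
Your proposal is correct and follows essentially the same route as the paper: both expand the exponential, act on the set of labeled Wick contractions with the group $\prod_b \mathrm{Aut}(b)\wr S_{n_b}$ of order $\prod_b k(b)^{n_b} n_b!$, and apply orbit--stabilizer to identify the stabilizer of a labeled contraction with $\mathrm{Aut}(\cG)$, so that the orbit size cancels the $\prod_b (k(b)^{n_b} n_b!)^{-1}$ prefactor. Your treatment is slightly more explicit about the wreath-product structure and the rigidity checks for the stabilizer identification, but the argument is the same.
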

\begin{proof}
Consider a Feynman graph $\cG$, with $N$ labeled external legs, and $n_b$ bubbles of type $b$ for any $b \in \cB$. Define then the group
\beq
G = \prod_{b \in \cB \vert n_b \neq 0} {\mathrm{Aut}}(b)^{n_b} \times S_{n_b}  \,,
\eeq
where $S_{n}$ is the permutation group of $n$ elements, and ${\mathrm{Aut}}(b)$ is the group of automorphisms of $b$. Its order is
\beq
\vert G \vert = \prod_{b \in \cB} k(b)^{n_b} n_b !  
\eeq
One can fix an arbitrary labeling of all the nodes of all the vertices appearing in $\cG$. Any Wick contraction with the same vertices appearing in the Feynman expansion is then represented by a set of $N + \sum_b n_b N_b$ pairs of labels. Call $X$ the set of all possible such pairings. $G$ acts naturally on $X$, by permutation of identical bubbles and by automorphism on each individual bubble. The Feynman graph $\cG$ can be identified with the orbit $G.x$ of some $x \in X$, and the stabilizer $G_x$ to the group of automorphisms ${\mathrm{Aut}}(\cG)$. Taking the factors coming from the perturbative expansion of the exponentials and from the definition of the action into account, we therefore obtain:
\bes
\frac{1}{k(\cG)} &=& \vert G.x \vert \times\prod_{b \in \cB} \left(\frac{1}{k(b)}\right)^{n_b} \frac{1}{n_b !} 
 = \frac{\vert G \vert}{\vert G_x \vert} \times\prod_{b \in \cB} \left(\frac{1}{k(b)}\right)^{n_b} \frac{1}{n_b !} \nn \\ 
&=& \frac{1}{\vert G_x \vert} = \frac{1}{\vert {\mathrm{Aut}}(\cG) \vert}\,.
\ees
\end{proof}

It is immediate to remark that, because of the colored structure of the Feynman graphs, the symmetry factor of any connected and non-vacuum graph (i.e. with at least one external leg) $\cG$ is simply $k(\cG) = 1$. This implies in particular that the connected and normalized Schwinger functions $\cS^{(c)}_N$ expand as: 
\beq
\cS^{(c)}_N = \sum_{\cG \; \rm{connected}\vert N(\cG) = N} \left( \prod_{b \in \cB} (- t_b )^{n_b (\cG)} \right)\, \cA_\cG \,.
\eeq

\noindent{\bf{Examples.}} The graphs $\cG_1$, $\cG_2$ and $\cG_3$ represented in Figure \ref{ex_sym} have different symmetry factors. $\cG_1$ is a connected and non-vacuum graph with external legs, hence $k(\cG_1) = 1$. On the other hand, its amputated version $\cG_2$ admits one non-trivial automorphism, and therefore $k(\cG_2) =2$. Finally, the vacuum graph $\cG_3$ has a symmetry factor $k(\cG_3) = 3$.  
\begin{figure}[h]
  \centering
  \subfloat[$\cG_1$]{\label{symg1}\includegraphics[scale=0.5]{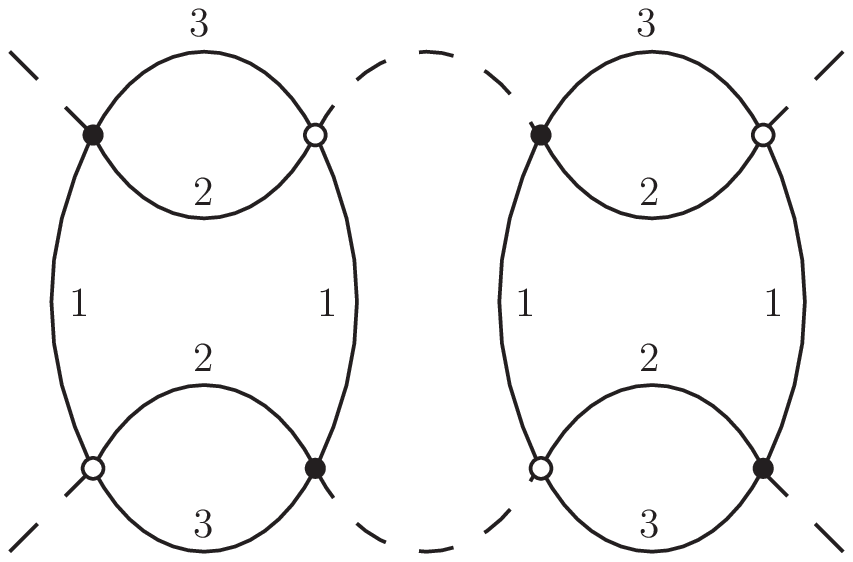}}     
  \subfloat[$\cG_2$]{\label{symg2}\includegraphics[scale=0.5]{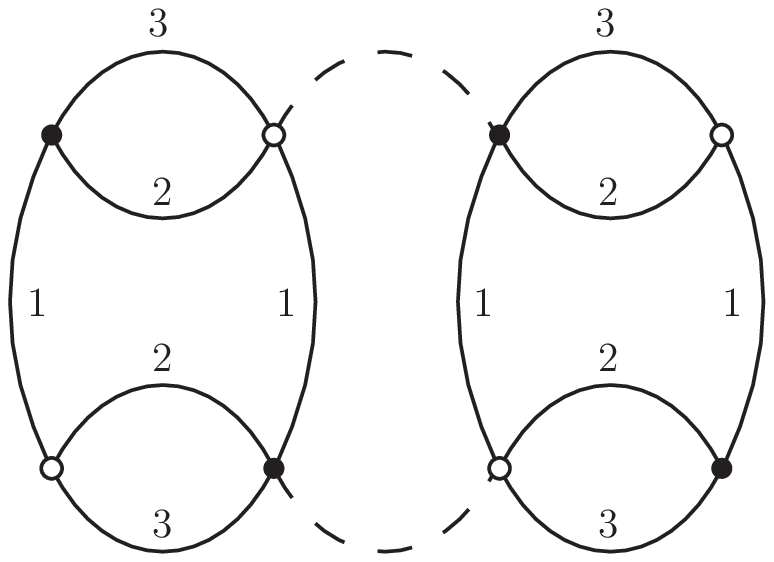}}     
  \subfloat[$\cG_3$]{\label{symg3}\includegraphics[scale=0.5]{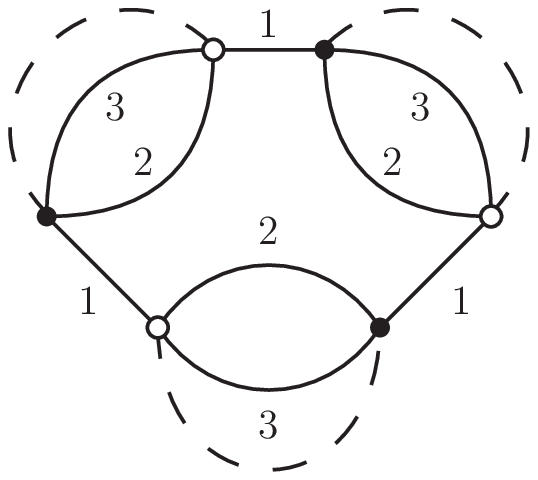}}
  \caption{Three graphs with different symmetry factors: $k(\cG_1) = 1$, $k(\cG_2) = 2$ and $k(\cG_3) = 3$.}\label{ex_sym}
\end{figure}

In order to write the explicit expression of $\cA_\cG$, we need to introduce further graph-theoretic definitions and notations. We note $L(\cG)$ the set of color-$0$ \textit{internal} edges in $\cG$, and $N(\cG)$ the set of external legs, which we will sometimes simply call \textit{lines} and \textit{legs}. We can furthermore partition $N(\cG) = N_{\circ} (\cG) \cup N_{\bullet} (\cG)$, where $N_{\circ} (\cG)$ (resp. $N_{\bullet} (\cG)$) is the set of legs hooked to white (resp. black) nodes. A \textit{face of color} $\ell$ is a non-empty subset $f \subset L(\cG)$ which, upon addition of color-$\ell$ edges only, can be completed into a maximally connected subset of color-$\ell$ edges and (internal, color-$0$) lines. Such a maximally connected subset may form a \textit{closed} loop, we say that $f$ is \textit{internal} (or closed) in this case, and $f$ is \textit{external} (or open) otherwise. The set of internal faces and external faces of $\cG$ are respectively noted $F(\cG)$, and $F_{ext}(\cG)$. We will also need to keep track of direct identifications of boundary variables through single colored lines. We call these \emph{empty external faces} because they play a similar role as external faces; their set is denoted $F_{ext}^{\emptyset}(\cG)$. When no confusion 
arises, we will use the same symbol to denote the cardinality of one of the sets defined so far and the set itself. We can fix an arbitrary orientation of the lines $l \in L(\cG)$ and faces $f \in F(\cG) \cup F_{ext} (\cG)$, and encode their adjacency relations into a matrix $\epsilon_{lf}= \pm 1$ or 0. However, in order to make the expressions more explicit, it is convenient to fix the orientations so that: a) $l \in L(\cG)$ is positively oriented from the white to the black end nodes it connects; b) $\epsilon_{lf} = 1$ if $l \in f$, and $0$ otherwise\footnote{The fact that the faces can always be oriented in such a way is a particularity of tensorial GFTs.}. This canonical orientation allows to define the \textit{source} and \textit{target} of an external face. The function $s: F_{ext} (\cG) \cup F_{ext}^{\emptyset} (\cG) \rightarrow N_{\bullet}(\cG) \times \{ 1, 2, 3\}$ maps $f$ to $(e,\ell)$, where $\ell$ is the color of $f$ and $e$ is the external leg connected to its source. We define $t: F_{ext} (\cG) \cup F_{ext}^{\emptyset} (\cG) \rightarrow N_{\circ}(\cG) \times 
\{ 1, 2, 3\}$ in a similar way.   
The bare amplitude of a graph $\cG$ is a function of $3 N(\cG)$ external group elements $\{ g^{ext}_{(e,\ell)}\}$, which formally writes: 
\begin{eqnarray}\label{ampl_ab}
\cA_\cG (g^{ext}_{(e,\ell)}) &=& \left[ \prod_{l \in L(\cG)} \int \extd \alpha_{l} \, e^{- m^2 \alpha_l} \int \extd h_l \right] 
\left( \prod_{f \in F (\cG)} K_{\alpha(f)}\left( \overrightarrow{\prod_{l \in f}} h_l \right) \right) \nn\\
&& \int [\extd g_{(e,l)}] \, \prod_{e \in N_{\bullet} (\cG)} C(g^{ext}_{(e,\ell)}; g_{(e,\ell)}) \, \prod_{e \in N_{\circ} (\cG)} C(g_{(e,\ell)}; g^{ext}_{(e,\ell)}) \nn\\
&& \left( \prod_{f \in F_{ext}(\cG)} K_{\alpha(f)} \left( g_{s(f)}
\left[\overrightarrow{\prod_{e \in f}} {h_e} \right] g_{t(f)}^{\inv} \right) \right) \nn\\
&& \left( \prod_{f \in F_{ext}^{\emptyset} (\cG)} \delta\left( g_{s(f)} g_{t(f)}^{\inv} \right) \right) \,.
\end{eqnarray}
We clearly separated the contributions of the internal faces (first line), from the external propagators (second line\footnote{We could simplify this expression by means of the symmetry $C(g_i ; g_i') = C(g_i' ; g_i)$, but we think that the present expression is better suited for what will come next.}), and how these are connected to the holonomy variables in the bulk through the external faces (third and fourth line).

\subsection{Divergences and power-counting}

In \cite{cor_su2}, the scale ladder provided by the parameter $\alpha$ was used as a basis for renormalization theory. Divergences from high scales (i.e. from the region $\alpha \approx 0$) generate counter-terms at lower scales (i.e. bigger $\alpha$ parameters), and the theory is renormalizable if those can be reabsorbed into a finite number of coupling constants. The divergences can be most easily understood in the multiscale representation of the Feynman amplitudes. This consists in slicing the range of the Schwinger parameter $\alpha$, according to a geometric progression. To this effect, one fixes an arbitrary constant $M > 1$ and define the propagator $C_i$ at scale $i$ by:
\beq 
C_i ( g_1 , g_2 , g_3 ; g_1' , g_2' , g_3' ) = \int_{M^{-2i}}^{{M^{-2(i-1)}}} \extd \alpha \, \e^{- m^2 \alpha} \int \extd h \, \prod_{\ell=1}^{3} K_{\alpha} ( g_\ell h g_\ell'^{\inv})\,,
\eeq
if $i \neq 0$ and
\beq
C_0 ( g_1 , g_2 , g_3 ; g_1' , g_2' , g_3' ) = \int_{1}^{+ \infty} \extd \alpha \, \e^{- m^2 \alpha} \int \extd h \, \prod_{\ell=1}^{3} K_{\alpha} ( g_\ell h g_\ell'^{\inv})\,.
\eeq
This induces a decomposition of the amplitude of a graph $\cG$ as a sum indexed by internal \textit{scale attributions} $\mu \equiv \{ i_l \vert \; l \in L(\cG) \}$:
\beq\label{ampl_multi}
\cA_\cG = \sum_{\mu} \cA_{\cG , \mu}\,.
\eeq
The amplitude at scale $\mu$ $\cA_{\cG,\mu}$ is simply obtained from (\ref{ampl_ab}) by restricting the $\alpha_l$ integrals to the slices $i_l$. Note that the external legs are left untouched, and by convention they are attributed the scale label $i = - 1$. The sum over $\mu$ is regulated thanks to the introduction of a \textit{cut-off} $\rho$ on the scale labels $i$, which can be removed only after renormalization. The main interest of the decomposition (\ref{ampl_multi}) is that it allows to compute rigorous bounds on $\vert \cA_\cG \vert$ by a systematic optimization procedure, which yields simple bounds on $\vert \cA_{\cG , \mu} \vert$ as functions of $\mu$.  

Given a couple $(\cG , \mu)$, one can construct a set of \textit{high subgraphs}, defined as the maximally connected subgraphs with internal scales strictly smaller than the external scales. To this effect, let us define $\cG_i$ as the subgraph made of the lines of $\cG$ with scales higher or equal to $i$. Its connected components\footnote{As already mentioned before, in this paper we use the ordinary graph-theoretic notion of connectedness, which is referred to as vertex-connectedness in \cite{cor_su2}.} can be labeled $\cG_i^{(1)}, \ldots , \cG_i^{(k(i))}$, where $k(i)$ is the number of connected components. The $\cG_i^{(k)}$'s are exactly the high subgraphs at scale $i$: they are connected; their internal lines have scales higher or equal to $i$; and their external legs have scales strictly smaller than $i$. An important property of the high subgraphs is that they form an \textit{inclusion tree}. That is to say that two high subgraphs $\cH_1 \subset \cG$ and $\cH_2 \subset \cG$ are either line-disjoint, or one 
is included into the other; and furthermore all the high subgraphs are by definition included in $\cG$, which is itself a high subgraph (at scale $i = 0$). These high subgraphs are ultimately responsible for the nested structure of divergences, and when successively integrated out, make the coupling constants run with respect to $i$. More precisely, only the \textit{divergent} high subgraphs have to be taken into account in the renormalization group equations. They are determined by a precise power-counting theorem, and we refer the reader to \cite{cor_u1, cor_su2} for details. For the purpose of this paper, we only need to know that the \textit{divergent} high subgraphs $\cH$ are characterized by the inequality $\omega(\cH) \geq 0$, where $\omega$ is the \textit{superficial degree of divergence}, defined as \cite{lin_GFT, cor_u1, cor_su2}:
\beq
\omega(\cH) = - 2 L(\cH) + 3 (F(\cH) - R(\cH))\,, 
\eeq  
and $R(\cH)$ is the rank of the $L(\cH) \times F(\cH)$ incidence matrix $\epsilon_{lf}$ of $\cH$. Alternatively, as proven in \cite{cor_su2}, the divergence degree can be conveniently re-expressed in terms of the numbers of $2k$-valent vertices $n_{2k}$, the number of external legs $N$ and an integer $\rho \leq 0$:
\beq
\omega = 3 - \frac{N}{2}  + 3 \rho + \sum_{k \in \mathbb{N}\setminus\{0\}} (k - 3) n_{2k}\,.  
\eeq 
Note that in \cite{cor_su2} the bare action was assumed to stop at $6$-valent bubble interactions, in which case the last sum reduces to $- 2 n_2 - n_4$. The condition $\rho = 0$ characterizes the class of \textit{melonic graphs}\footnote{In this paper, we always assume $N \neq 0$. When $N = 0$, $\rho$ may also be $1$ and $\rho(\cH) = 0$ does not imply that $\cH$ is melonic.}, which contains the divergent graphs as a subset when $n_{2k} = 0$ for $k \geq 4$ (see Table \ref{div}). 
\begin{table}[h]
\centering
\begin{tabular}{| l | c | c | c || r |}
    \hline
    $N$ & $n_2$ & $n_4$ & $\rho$ & $\omega$  \\ \hline\hline
 6 & 0 & 0 & 0 & 0 \\ \hline
 4 & 0 & 0 & 0 & 1 \\ 
 4 & 0 & 1 & 0 & 0 \\ \hline
 2 & 0 & 0 & 0 & 2 \\ 
 2 & 0 & 1 & 0 & 1 \\
 2 & 0 & 2 & 0 & 0 \\
 2 & 1 & 0 & 0 & 0 \\ 
    \hline
  \end{tabular}
\caption{Non-vacuum divergent graphs when $n_{2k} = 0$ for any $k \geq 4$.}
\label{div}
\end{table} 

\

Let us conclude this section by recalling how melonic graphs are defined.
\begin{definition}
 Let $\cG$ be a graph. For any integer $k$ such that $1 \leq k \leq 4$, a \emph{$k$-dipole} is a line of $\cG$ linking two nodes $n$ and
$\overline{n}$ which are connected by exactly $k - 1$ additional colored lines.
\end{definition}

\begin{definition}
Let $\cG$ be a graph. The \emph{contraction of a $k$-dipole} $d_k$ is an operation consisting in:
\begin{enumerate}[(i)]
 \item deleting the two nodes $n$ and $\overline{n}$ linked by $d_k$, together with the $k$ lines that connect them;
 \item reconnecting the resulting $d - k + 1$ pairs of open legs according to their colors.
\end{enumerate}
We call $\cG / d_k$ the resulting graph. 
\end{definition}

See Figure \ref{dipoles} for examples of $k$-dipoles and their contractions.

\begin{figure}[ht]
\begin{center}
\includegraphics[scale=0.5]{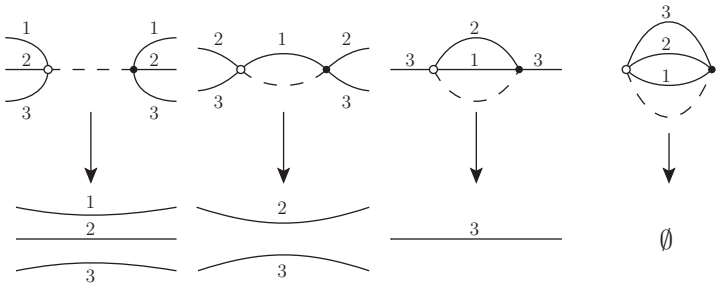}
\caption{From left to right: $1$-, $2$-, $3$- and $4$-dipoles, together with their contractions.}
\label{dipoles}
\end{center}
\end{figure}

\begin{definition}
We call \emph{contraction of a subgraph} $\cH \subset \cG$ the successive contractions of all the lines of $\cH$. The resulting graph is independent of the order in which the lines of $\cH$ are contracted,
and is noted $\cG / \cH$.
\end{definition}

Given a graph $\cG$, and some lines $l_1 , \ldots , l_k \in L(\cG)$, we will denote by $\{ l_1, \ldots , l_k\}$ the minimal subgraph of $\cG$ containing the lines $l_1, \ldots, l_k$. 

\begin{definition}
A \emph{melopole} is a single-vertex graph $\cG$ such that there is at least one ordering of its $k$ lines as $l_1, \cdots , l_k$ such that $ \{l_1 ,  \dots , l_{i} \} / \{l_1 , \dots , l_{i-1} \} $ is a $3$-dipole for $1 \le i \le k$. See Figure \ref{ex_melop}.
\end{definition}

\begin{figure}[ht]
\begin{center}
\includegraphics[scale=0.5]{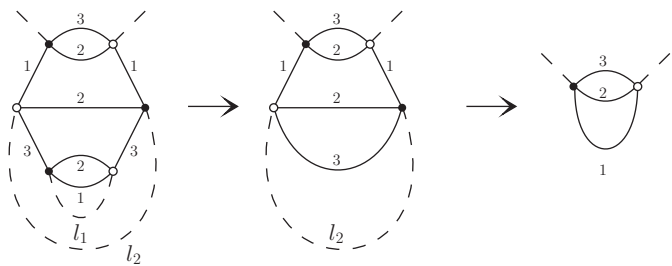}
\caption{A melopole with two lines. $\{ l_1 \}$ and $\{l_1 , l_2 \} / \{l_1 \}$ are $3$-dipoles, as illustrated by the successive contractions of $l_1$ and $l_2$.}
\label{ex_melop}
\end{center}
\end{figure}

\begin{definition}
A \emph{melonic graph} 
is a connected\footnote{This definition slightly differs from that of \cite{cor_su2}, in the sense that melonic graphs were defined to be face-connected there, which is a stronger condition.} graph $\cG$ containing at least one maximal tree $\cT$ such that $\cH / \cT$ is a melopole.
\end{definition}

A simple example of melonic graph is provided in Figure \ref{ex_melonic}.

\begin{figure}[ht]
\begin{center}
\includegraphics[scale=0.4]{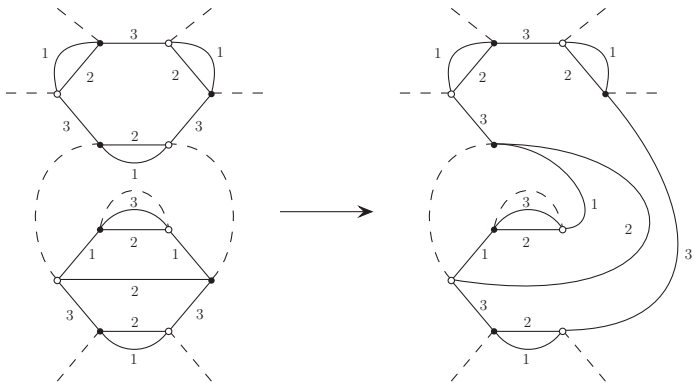}
\caption{A melonic graph which reduces to a melopole after contraction of a single tree line.}
\label{ex_melonic}
\end{center}
\end{figure}

\subsection{Contraction operators}

In the multiscale representation of the amplitudes, the divergences can be extracted by the action of so-called \emph{contraction operators}. In order to define them, let us introduce generalized amplitudes
\begin{eqnarray}\label{ampl_t}
\cA_\cG (g^{ext}_{(e,\ell)}; t) &\equiv& \left[ \prod_{l \in L(\cG)} \int \extd \alpha_{l} \, e^{- m^2 \alpha_l} \int \extd h_l \right] 
\left( \prod_{f \in F (\cG)} K_{\alpha(f)}\left( \overrightarrow{\prod_{l \in f}} h_l \right) \right) \nn\\
&& \int [\extd g_{(e,l)}] \, \prod_{e \in N_{\bullet} (\cG)} C(g^{ext}_{(e,\ell)}; g_{(e,\ell)}(t)) \, \prod_{e \in N_{\circ} (\cG)} C(g_{(e,\ell)}; g^{ext}_{(e,\ell)}) \nn\\
&& \left( \prod_{f \in F_{ext}(\cG)} K_{\alpha(f)} \left( g_{s(f)}
\left[\overrightarrow{\prod_{e \in f}} {h_e} \right] g_{t(f)}^{\inv} \right) \right) \nn \\
&& \left( \prod_{f \in F_{ext}^{\emptyset} (\cG)} \delta\left( g_{s(f)} g_{t(f)}^{\inv} \right) \right) \,.
\end{eqnarray}
where $t \in [0,1]$ and\footnote{For any $g \in \SU(2)$, $X_g$ denotes the Lie algebra element with the smallest norm such that $\exp(X_g) = g$.}
\bes
\forall f \in F_{ext}(\cG) \,, \qquad g_{s(f)}(t) &\equiv& g_{t(f)} \exp\left(t X_{g_{t(f)}^{\inv} g_{s(f)}}\right) \,,\\
\forall f \in F_{ext}^{\emptyset}(\cG) \,, \qquad g_{s(f)}(t) &\equiv& g_{s(f)} \,.
\ees
We can then express any amplitude $\cA_{\cG, \mu}$ as a Taylor expansion with respect to the parameter $t$:
\bes\label{taylor}
\cA_{\cG , \mu} = \cA_{\cG , \mu} ( \cdot ;  1) &=& \cA_{\cG , \mu} (\cdot ; 0) + \sum_{k = 1}^{\omega(\cG)} \frac{1}{k !} \cA_{\cG , \mu}^{(k)} ( \cdot ; 0) \nn \\
&&+ \int_{0}^{1} \extd t \, \frac{(1 - t)^{\omega(\cG)}}{\omega(\cG ) !} \cA_{\cG , \mu}^{(\omega(\cG ) + 1)} (\cdot ; t)\,.
\ees 
The interest of such an expression is that the remainder can be shown to be power-counting convergent \cite{cor_su2}, and can therefore be dispensed with. Moreover, one shows that
\beq
\cA_{\cG , \mu} (g^{ext}_{(e,\ell)}; 0)  \propto \cA_{\cG / \{ l_1 , \ldots , l_k \}, \mu} (g^{ext}_{(e,\ell)}; 0)\,, 
\eeq
where $l_1, \ldots , l_k$ are the lines of $\cG$. We can therefore implicitly define the contraction operator $\tau$ by the equation:
\beq
\cA_{\cG , \mu} (g^{ext}_{(e,\ell)}; 0)  = [\tau \cA_{\cG , \mu} ] \cA_{\cG / \{ l_1 , \ldots , l_k \}, \mu} (g^{ext}_{(e,\ell)}; 0)\,.
\eeq
More explicitly, the heat-kernels associated to the external faces can be integrated out, yielding:
\beq
[\tau \cA_{\cG , \mu} ] = \left[ \prod_{l \in L(\cG)} \int \extd \alpha_{l} \, e^{- m^2 \alpha_l} \int \extd h_l \right] 
\left( \prod_{f \in F (\cG)} K_{\alpha(f)}\left( \overrightarrow{\prod_{l \in f}} h_l \right) \right)\,.
\eeq
Finally, if $\overline{\cG}$ is obtain from $\cG$ by amputation of its external legs, we define:
\beq
[\tau \cA_{\overline{\cG} , \mu} ] \equiv [\tau \cA_{\cG , \mu} ]\,.
\eeq

Let us now consider higher order terms. To begin with, the first order always identically vanishes:
\beq
\cA_{\cG , \mu}^{(1)} (\cdot ; 0) = 0\,.
\eeq
The model studied in the present paper generates quadratically divergent graphs at most, which means we will not need to go beyond second order. Moreover, all the quadratically divergent graphs have $N=2$ external legs and are melonic. The amplitudes we need to expand to second order have therefore the following structure:
\bes\label{factorization_w}
\cA_{\cG , \mu} (g_\ell^{ext}, \overline{g}_\ell^{ext} ; t ) &=& \left[ \prod_{l \in L(\cG)} \int \extd \alpha_{l} \, e^{- m^2 \alpha_l} \int \extd h_l \right] \nn \\ 
&& \left( \prod_{f \in F (\cG)} K_{\alpha(f)}\left( \overrightarrow{\prod_{l \in f}} h_l \right) \right) \nn\\
&& \int [\extd g_{\ell}] [\extd \overline{g}_{\ell}] \, C(g^{ext}_{\ell}; g_{\ell}(t)) \,  C(\overline{g}_{\ell}; \overline{g}^{ext}_{\ell}) \nn\\
&& \left( \prod_{f \in F_{ext}(\cG)} K_{\alpha(f)} \left( g_{s(f)}
\left[\overrightarrow{\prod_{e \in f}} {h_e} \right] g_{t(f)}^{\inv} \right) \right) \nn \\
&& \left( \prod_{f \in F_{ext}^{\emptyset} (\cG)} \delta\left( g_{s(f)} g_{t(f)}^{\inv} \right) \right)
\ees
where $F_{ext}(\cG) \cup F_{ext}^{\emptyset}(\cG)$ has three elements. Suppose for instance that \linebreak $F_{ext} (\cG) = \{ f_1 \}$, with $f_1$ of color $1$. Then on can prove that \cite{cor_su2}:
\beq
\frac{1}{2} \cA_{\cG , \mu}^{(2)} (g_\ell^{ext}, \overline{g}_\ell^{ext} ; 0 ) = \tau^{(2)} \cA_{\cG , \mu}  \times
\int [\extd g_\ell^{ext}]^3  C(g_\ell^{ext};g_\ell)
\Delta_{g_1}
C(g_\ell;\overline{g}_\ell^{ext})\,,
\eeq
where we have defined
\bes
\tau^{(2)} \cA_{\cG , \mu} &\equiv& \frac{1}{6} \left[ \prod_{l \in L(\cG)} \int \extd \alpha_{l} \, e^{- m^2 \alpha_l} \int \extd h_l \right] 
\left( \prod_{f \in F (\cG)} K_{\alpha(f)}\left( \overrightarrow{\prod_{l \in f}} h_l \right) \right) \nn \\
&& \int \extd g \vert X_g \vert^2\, K_{\alpha(f_1)} \left( g
\overrightarrow{\prod_{e \in f_1}} {h_e} \right) \,.
\ees
By use of the Leibniz rule, together with the fact that terms containing first derivatives vanish, the last two equations can be generalized to
\beq
\frac{1}{2} \cA_{\cG , \mu}^{(2)} (g_\ell^{ext}, \overline{g}_\ell^{ext} ; 0 ) = 
\int [\extd g_\ell^{ext}]^3  C(g_\ell^{ext};g_\ell)
 \left( \sum_\ell [\tau^{(2)}_\ell \cA_{\cG , \mu}] \Delta_{g_\ell} \right)
C(g_\ell;\overline{g}_\ell^{ext})\,,
\eeq
where
\bes
\tau^{(2)}_\ell \cA_{\cG , \mu} &=& \frac{1}{6} \left[ \prod_{l \in L(\cG)} \int \extd \alpha_{l} \, e^{- m^2 \alpha_l} \int \extd h_l \right] 
\left( \prod_{f \in F (\cG)} K_{\alpha(f)}\left( \overrightarrow{\prod_{l \in f}} h_l \right) \right) \nn \\
&&\int \extd g \vert X_g \vert^2\, K_{\alpha(f_\ell)} \left( g
\overrightarrow{\prod_{e \in f_\ell }} {h_e} \right)
\ees
if there exists a $f_\ell \in F_{ext} (\cG)$ of color $\ell$, and
\beq
\tau^{(2)}_\ell \cA_{\cG , \mu} = 0
\eeq
otherwise.
Again, this definition makes no reference to external legs, and can therefore be generalized to amputated graphs: if $\overline{\cG}$ is obtained by amputation of the external legs of $\cG$, then
\beq
\tau^{(2)}_\ell \cA_{\overline{\cG} , \mu} \equiv \tau^{(2)}_\ell \cA_{\cG , \mu}\,.
\eeq

\section{Renormalization group flow}\label{sec:rgflow}

In this section, we define general flow equations for Wilson's effective action. They involve: a) a step by step integration of UV slices which retains only the tensor invariant contributions of high divergent graphs, as already outlined above; and b) a large scale approximation ($i \to + \infty$) of the coefficients entering the equations. Because we are working with a field theory defined on a compact group rather than Euclidean space, finite size effects make the renormalization group non-autonomous\footnote{I thank Daniele Oriti for discussions on this point.} i.e. the flow equations have an explicit dependence in the cut-off. However, such effects can be neglected in the deep UV regime, where only the local structure of $\SU(2)$ is probed.

\subsection{Wilson's effective action and dimensionless coupling constants}

In ordinary quantum field theories, Wilson's effective action is best described in terms of \textit{dimensionless} coupling constants. In GFT, and more generally in quantum gravity, such a notion is \textit{a priori} empty since all the fields and coupling constants are already strictly speaking dimensionless. Dimensionful quantities should only appear in the spectra of quantum observables such as the length, area or volume operators. For example, in canonical loop quantum gravity, the area of a surface punctured by a single spin-network link with spin $j$ is $8 \pi \gamma \sqrt{j (j+1)} \ell_P^2$ where $\gamma$ is the (dimensionless) Immirzi parameter and $\ell_P$ is the Planck length. In this respect, the spins themselves can be attributed a canonical dimension. In our context, let us attribute a unit canonical dimension 
\beq
[j] \equiv 1
\eeq
to spin variables. 
Since the propagator decays quadratically in the spins, one immediately infers the dimension of the mass: $[m] = 1$. As for the canonical dimensions of all the other coupling constants, they can be deduced from the power-counting. For example, Table \ref{div} shows that four-valent coupling constants will receive linearly divergent contributions of order $M^i$ at scale $i$. Hence they are to be thought of as coupling constants with unit canonical dimension. More generally, we are lead in this way to define
\beq\boxed{
[ t_b ] = 3 - \frac{N_b}{2} \equiv d_b \,, }
\eeq  
for an arbitrary coupling constant $t_b$, where $N_b$ is the valency of the bubble $b$. Note that such a definition of canonical dimension has also recently been introduced in matrix and tensor models, for similar purposes \cite{astrid_tim}. We define the effective action at scale $M^{i}$ as a sum over all possible bubbles
\bes
S_i (\vphi , \vphib) &=& \sum_b t_{b,i} \frac{ I_b (\vphi , \vphib) }{k(b)}\\
&=& \sum_b  u_{b,i} M^{d_b i} \frac{I_b (\vphi , \vphib)}{k(b)} \,, \label{S_dl}
\ees
where $t_{b,i}$ (resp. $u_{b,i}$) are the dimensionful (resp. dimensionless) coupling constants. We furthermore again assume color permutation invariance, and set up $t_{2, i} = 0$. The latter is consistent provided that we allow the mass parameter of the covariance to vary with the scale. We will denote by $C_{i, m} = P \overline{C}_{i,m}$ the covariance in the slice $i$ with mass $m$, and by $C^{i}_{m} = P \overline{C}^{i}_{m} = \underset{k \leq i}{\sum} C_{k, m}$ the full covariance with cut-off $i$. The dimensionless mass coupling at scale $i$ is 
\beq
u_{2,i} \equiv \frac{m_i^2}{M^{2i}}\,.
\eeq 
In the following, we will use perturbative expansions with respect to the dimensionless coupling constants. The degree of divergence
\beq
\omega = 3 - \frac{N}{2} + \sum_{k \in \mathbb{N}} (3 - k) n_{2k}  + 3 \rho\,,
\eeq
tells us how Feynman amplitudes diverge in an expansion with respect to the dimensionful parameters $t_{b,i}$. Taking into account the additional scalings appearing in equation (\ref{S_dl}), we immediately infer a modified degree of divergence
\beq\label{degree2}
\overline{\omega} = 3 - \frac{N}{2} + 3 \rho
\eeq 
for the new expansion in the $u_{b,i}$'s. From now on, (a non-vacuum) graph $\cG$ will be said to be divergent whenever $\overline{\omega}(\cG) \geq 0$, that is whenever $N \leq 3$ and $\rho = 0$. The new classification of divergent graphs is summarized in Table \ref{div2}. 

\begin{table}[h]
\centering
\begin{tabular}{| l | c || r |}
    \hline
    $N$ & $\rho$ & $\omegab$  \\ \hline\hline
 6 & 0 & 0 \\ \hline 
 4 & 0 & 1 \\ \hline
 2 & 0 & 2 \\ 
    \hline
  \end{tabular}
\caption{Non-vacuum divergent graphs in the expansion with respect to dimensionless coupling constants.}
\label{div2}
\end{table}

\subsection{Discrete renormalization group flow}

In order to determine the effective action $S_{i-1}$ and the mass coupling $u_{2,i-1}$ from $S_i$ and $u_{2,i}$, we proceed in two steps. The first step consists in integrating out fluctuations at scale $i$ to deduce the effective action $\tilde{S}_{i-1}$ before wave-function renormalization:
\beq\label{effective_a}
  K_{i-1} \exp\left(-\tS_{i-1}(\Phi, \Phib) + R_{i-1}(\Phi, \Phib)\right) \equiv \int \extd \mu_{C_{i, m_i}} (\vphi , \vphib) \, \exp\left(- S_i (\Phi + \vphi , \Phib + \vphib)\right)\,,
\eeq
where the rest term $R_{i-1} (\Phi, \Phib) = \cO ( M^{-i} )$ is a sum of contributions which are suppressed at large $i$, and $K_{i-1}$ is a possibly large constant due to vacuum divergences. $R_{i-1}$ contains in particular the Feynman graphs with $\omegab \leq - 1$, and the convergent Taylor remainders associated to the non-tensorial parts of the non-vacuum divergent graphs. $\tS_{i-1}$ may be written as 
\beq
\tS_{i-1} = CT_{\vphi, i-1} S_\vphi + CT_{m, i-1} S_2 + \sum_{b\vert N_b \neq 2} \tu_{b,i-1} M^{ d_b (i-1)} \frac{I_b}{k(b)}  \,,
\eeq
in terms of intermediate dimensionless coupling constants $\tu_{b,i}$. The additional wave-function and mass terms
\bes
S_{\vphi} (\vphi , \vphib) &=& \int [\extd g ]^3 \, \vphi(g_1 , g_2 , g_3 ) \left( - \sum_{l = 1}^{3} \Delta_\ell \right) \vphib(g_1 , g_2 , g_3 )\,,\\
S_{2} (\vphi , \vphib) &=& \int [\extd g ]^3 \, \vphi(g_1 , g_2 , g_3 ) \vphib(g_1 , g_2 , g_3 )\,,
\ees
respectively parameterized by a dimensionless constant $CT_{\vphi, i-1}$ and a dimension $2$ constant $CT_{m, i-1} = \tu_{2, i - 1} M^{2(i -1)}$, are generated by the $2$-valent divergent graphs. They need to be reabsorbed into the covariance $C^{i - 1}_{m_{i-1}}$ via a field renormalization, which is the second step of the procedure. To this effect, let us define the operator 
\beq
M_{i-1} = - CT_{m , i -1} + CT_{ \vphi , i - 1} \sum_\ell \Delta_\ell \,,
\eeq
and call $\tC$ the covariance of the measure:
\beq
\extd \mu_{C^{i-1}_{m_{i}}} (\Phi , \overline{\Phi} ) \exp\left( \int [\extd g_\ell] [\extd g_\ell'] \Phi(g_1 , g_2 , g_3) \, M_{i-1} (g_\ell ; g_\ell' ) \, \overline{\Phi}(g_1' , g_2' , g_3') \right) \,.
\eeq
It can be computed by summing over connected $2$-point functions, in the following way:
\bes
\tC &=& C^{i-1}_{m_{i}} + C^{i-1}_{m_{i}} M_{i-1} C^{i-1}_{m_{i}} + C^{i-1}_{m_{i}} M_{i-1} C^{i-1}_{m_{i}} M_{i-1} C^{i-1}_{m_{i}} + \ldots \nn \\
&=& P \left( \overline{C}^{i-1}_{m_i} + \overline{C}^{i-1}_{m_i} M_{i-1} \overline{C}^{i-1}_{m_i} + \overline{C}^{i-1}_{m_i} M_{i-1} \overline{C}^{i-1}_{m_i} M_{i-1} \overline{C}^{i-1}_{m_i} + \ldots\right) \nn \\
&=& P \frac{\overline{C}^{i-1}_{m_i}}{1 - \overline{C}^{i-1}_{m_i} M_{i-1}} \,.
\ees
The expression of $\overline{C}^{i-1}_{m_i}$ is:
\bes
\overline{C}^{i-1}_{m_i} &=& \int_{M^{- 2 (i-1)}}^{+\infty} \extd \alpha \, \exp\left(- \alpha ( m_{i}^2 - \sum_\ell \Delta_\ell )\right)\nn \\
&=& \frac{ \exp\left( - M^{- 2 (i-1)} ( m_{i}^2 - \sum_\ell \Delta_\ell )\right)}{m_{i}^2 - \sum_\ell \Delta_\ell} \,, 
\ees
the second line making the smooth cut-off on the spins explicit. Thanks to this decay we can discard the higher powers of the mass and Laplace operators generated by the exponential terms in the denominator of $\tC$. Using moreover the approximation $m_i^2 \sim m_{i-1}^2$ in the numerator, we are lead to:
\bes
\tC &\approx& \frac{P}{Z_{i-1}} \frac{\exp\left( - M^{- 2 (i-1)} ( m_{i-1}^2 - \sum_\ell \Delta_\ell )\right)}{m^2_{i-1} - \sum_\ell \Delta_\ell} = \frac{1}{Z_{i - 1}} C^{i-1}_{m_{i-1}} \,,\\
Z_{i-1} &\equiv& 1 + CT_{\vphi , i-1} \,, \label{zi} \\
m^2_{i-1} &\equiv& \frac{m_{i}^2 +  CT_{m , i-1}}{1 +  CT_{\vphi , i-1}} \,. \label{mass_phys}
\ees
We have thus determined the mass at scale $i-1$. The other coupling constants are obtained after the field redefinition
\beq
\Phi \to \frac{\Phi}{\sqrt{Z_{i-1}}} \,.
\eeq
The powers of $Z_{i-1}$ subsequently appearing in the interaction part of the action must be reabsorbed into the coupling constants:
\beq\label{rescale_couplings}
u_{b , i - 1} \equiv \frac{\tu_{b , i}}{{Z_{i-1}}^{N_b / 2}} 
\eeq
entering the effective action $S_{i-1}$. Together with the covariance $C^{i-1}_{m_{i-1}}$, they parameterize the theory once the cut-off has been lowered to $(i-1)$.

\subsection{Reducible graphs}

In space-time based quantum field theories, momentum conservation allows to discard $1$-particle irreducible graphs, because their contributions are strongly suppressed\footnote{They are even identically zero if one works with a sharp slicing of the momenta.} when the difference between the scale of the probes $i_0$ and that of the internal lines $i$ grows very large. Due to the combinatorial non-locality introduced by the interactions, momentum conservation has slightly stronger implications in TGFT, which we wish to elaborate upon here.  

\

The harmonic decomposition of the field $\vphi$ is
\beq
\vphi(g_1 , g_2 , g_3 ) = \sum_{\{ j_\ell , a_\ell , b_\ell \}} \vphi^{\{ j_\ell\}}_{\{ a_\ell , b_\ell \}} \prod_\ell \sqrt{2 j_\ell + 1} D^{j_\ell}_{a_\ell , b_\ell} (g_\ell)\,,
\eeq 
where $D^j$ is the Wigner matrix associated to the irreducible representation $j$. The full propagator $\ref{def_propa}$ is diagonal with respect to the modes $\vphi^{\{ j_\ell\}}_{\{ a_\ell , b_\ell \}}$:
\beq
C^{\{ j_\ell\}; \{ j_\ell'\}}_{\{ a_\ell , b_\ell \};\{ a_\ell' , b_\ell' \}} = \frac{P^{\{ j_\ell \}}_{\{a_\ell , b_\ell\}} }{m^2 + \sum_\ell j_\ell ( j_\ell + 1)} \prod_\ell \delta_{j_\ell j_\ell'} \delta_{a_\ell a_\ell'} \delta_{b_\ell b_\ell'}\,,
\eeq
where
\beq
P^{\{ j_\ell \}}_{\{a_\ell , b_\ell \}} = \int \extd h \prod_\ell D^{j_\ell}_{a_\ell b_\ell}(h)\,.
\eeq
The kernel of a tensor invariant $I_b$ in this spin representation is simply a product of delta functions, identifying the indices $\{ j_\ell , a_\ell , b_\ell \}$ pairwise along colored edges. As a result, these indices are conserved along the faces of the Feynman graphs. This is the counterpart of momentum conservation in local quantum field theories. Note that the key difference between these two conservation rules is of combinatorial nature, and therefore it is natural to expect that the class of graphs which are suppressed due to combinatorial obstructions in TGFTs is not exactly the same as in local quantum field theories.

\

Let us introduce the following notion of reducibility in TGFT.
\begin{definition}
Let $\cG$ be a connected graph. We say that $\cG$ is \textit{reducible} if it possesses a line which does not appear in any of the internal faces:
\beq\label{irred_def}
\exists l \in L(\cG)\,, \qquad \forall f \in F(\cG)\,,\; l \notin f\,. 
\eeq
If not, $\cG$ is said to be \textit{irreducible}.
\end{definition}

\noindent{\bf Examples.} A $1$-particle reducible graph is always reducible, but the converse is not true (see Figure \ref{ex_reduce}).

\begin{figure}[h]
  \centering
  \subfloat[$\cH_1$]{\label{1pr_melonic}\includegraphics[scale=0.5]{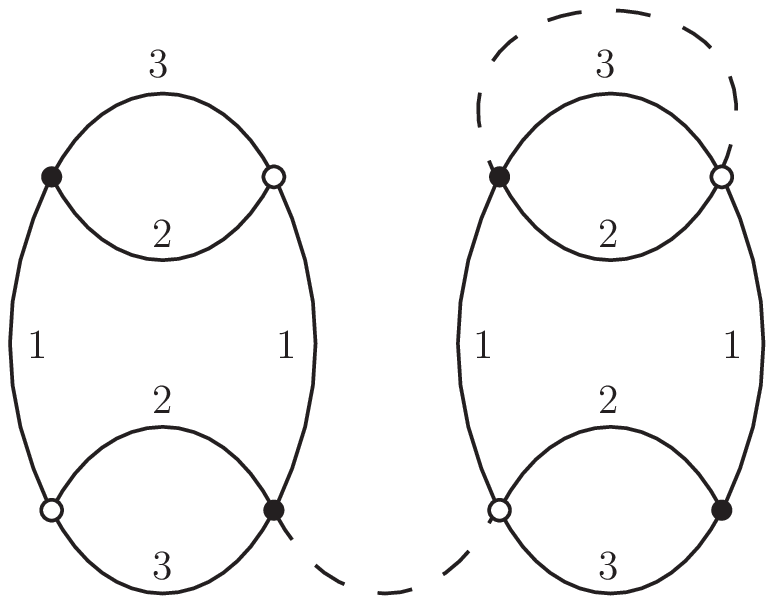}}     
  \subfloat[$\cH_2$]{\label{reduce_1pi}\includegraphics[scale=0.5]{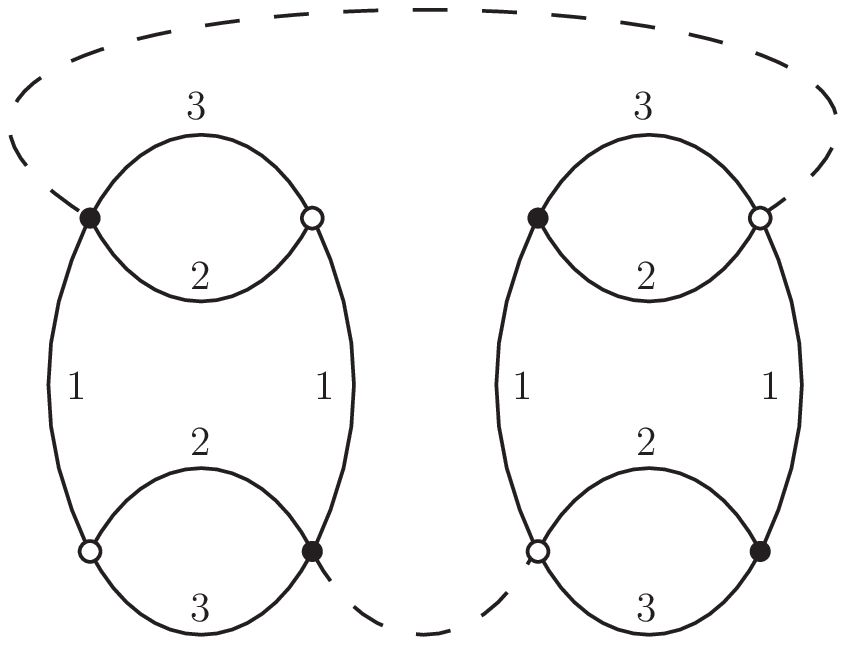}}
  \caption{$\cH_1$ is melonic, $1$-particle reducible, and reducible. $\cH_2$ is $1$-particle irreducible, but reducible.}\label{ex_reduce}
\end{figure}

In the particular model we are considering, all divergent graphs are melonic. It turns out that $1$-particle reducibility and reducibility are equivalent in this restricted context.  
\begin{proposition}
Let $\cG$ be a melonic graph. Then
\beq
\cG \; {\mathrm{irreducible}} \; \Leftrightarrow \; \cG \; 1{\mathrm{-particle}}\; {\mathrm{irreducible}} \,.
\eeq
\end{proposition}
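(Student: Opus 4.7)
The direction ``$1$-particle reducible $\Rightarrow$ reducible'' is a general feature of colored graphs and needs no assumption of melonicity: if a line $l$ were a bridge of $\cG$ and simultaneously lay in a closed internal face $f$ of color $\ell$, the alternating sequence of color-$0$ lines and color-$\ell$ edges constituting $f \setminus \{l\}$ would furnish an alternative path between the two endpoints of $l$ in $\cG \setminus l$, contradicting the bridge property. The content of the proposition therefore lies in the converse, and it is convenient to prove the sharper \emph{line-by-line} statement: in a melonic graph, a fixed line $l$ is a bridge if and only if it lies in no internal face.

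For the converse direction, I would induct on the number $L(\cG)$ of color-$0$ internal lines. Observe first that the hypothesis ``$l$ lies in no internal face'' prevents $l$ from being a $3$-dipole, since a $3$-dipole always bounds two short internal $2$-cycles of the two colors parallel to it. In the base case $L(\cG) = 1$, one checks by direct inspection that the single internal line is either encased in three internal $2$-cycles of a single bubble (excluded by the hypothesis) or is the bridge joining two bubbles. For the inductive step, I would exploit the defining property of a melonic graph, namely the existence of a maximal tree $\cT$ with $\cG / \cT$ a melopole. If $\cG = \cT$ is itself a tree of bubbles, every line is a bridge and the conclusion is trivial. Otherwise, $\cG/\cT$ admits a $3$-dipole; this dipole lifts to a line $l' \in L(\cG) \setminus L(\cT)$, necessarily distinct from $l$, whose contraction (together with the two parallel color edges it induces, possibly via the tree) yields a strictly smaller melonic graph $\cG'$ on which the induction hypothesis can be applied.

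The crux of the argument is to verify that the relevant structure around $l$ is preserved under this contraction: (i) $l$ remains a color-$0$ line of $\cG'$; (ii) $l$ is a bridge in $\cG$ iff it is a bridge in $\cG'$, which holds because dipole contraction is a local operation that does not affect the global connectivity of lines disjoint from $l'$; and (iii) the internal faces of $\cG$ containing $l$ biject, color by color, with the internal faces of $\cG'$ containing $l$. Property (iii) is the main technical obstacle and requires a short diagrammatic analysis of the $3$-dipole contraction: the operation erases precisely the two short internal faces of colors $\ell_1, \ell_2$ bounding $l'$ (neither of which contains $l$ since $l \neq l'$), merges two segments of the unique face of the remaining color $\ell_3$ passing through $l'$, and leaves all other faces untouched. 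Combining (i)--(iii) with the induction hypothesis then yields that $l$ is a bridge in $\cG'$, hence in $\cG$, closing the argument.
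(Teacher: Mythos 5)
Your easy direction is fine: a bridge cannot lie on a closed internal face, since the face minus the bridge supplies an alternative path between its endpoints. The genuine problem is in the inductive step of the converse. The line $l'$ you peel off is a $3$-dipole of $\cG/\cT$, \emph{not} of $\cG$: the two colored edges parallel to $l'$ may only appear after the tree has been contracted, in which case $l'$ is merely a $1$- or $2$-dipole of $\cG$ itself. Your local analysis in (iii) (``erase the two short faces, merge the third'') presupposes a genuine $3$-dipole and does not describe what contracting such an $l'$ does to the faces of $\cG$. Worse, point (ii) fails in general: contracting a line that is only a $1$-dipole loop at a bubble can split that bubble into several connected components (delete two nodes, reconnect three pairs of colored half-edges), and this can change which of the remaining lines are bridges of the ambient graph. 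The parenthetical ``neither of which contains $l$ since $l \neq l'$'' is also not a valid inference on its own --- an internal face can contain many lines --- though it is harmless here because $l$ is assumed to lie on no internal face at all. To repair the induction you would have to show that one can always select a line distinct from $l$ that is either a tree line joining two distinct bubbles or a genuine $3$-dipole of $\cG$, and establishing that is precisely where the melonic structure must be invoked; as written, the step is not established.

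The paper avoids the induction altogether and is much shorter. It observes that every non-tree line of the melonic tree $\cT$ becomes a $3$-dipole at some stage of the melopole contraction and therefore lies on internal faces of $\cG$; hence a line $l$ lying on no internal face must belong to $\cT$. Contracting $\cT \setminus \{l\}$ then yields two vertices joined by $l$ alone, so $l$ is a bridge of $\cG/(\cT\setminus\{l\})$ and hence of $\cG$ (edge contraction cannot create new connections). If you want to keep your line-by-line formulation, that is the argument to adapt rather than a dipole-by-dipole induction.
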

\begin{proof}
The non-trivial implication is $\Leftarrow$. Let $\cG$ be melonic. Let us suppose that $\cG$ is reducible, and that $l \in L(\cG)$ is a reducible line\footnote{That is to say that $l$ verifies the condition of (\ref{irred_def}).}. Then there exists a maximal tree $\cT$ such that $\cG / \cT$ is a melopole. It must be that $l \in \cT$ since $l$, being contained in no internal face, cannot form a $3$-dipole in $\cG/\cT$. Then $\cG / (\cT \setminus \{l \})$ consists in two vertices connected by $l$, and is therefore $1$-particle reducible. Hence $\cG$ is itself $1$-particle reducible. 
\end{proof}

\

Let us now explain why, similarly to $1$-particle reducible graphs in ordinary field theories, the contribution of reducible graphs can be neglected in the renormalization group flow of TGFTs. 
Let $\cG$ be such a graph. Suppose moreover that the internal lines are in the slice $i$, while the boundary data have spins in the slice $i_0 \leq i$. By conservation of the spins along the external faces of $\cG$, the sum over the spins $\{j_1, j_2 , j_3\}$ associated to a reducible line $l \in L(\cG)$ is effectively zero except in the region where both
\beq
\forall i \in \{ 1, 2, 3\}\,, \qquad j_i \sim M^{i_0}
\eeq 
and
\beq
\sum_{i = 1}^3 j_i (j_i + 1) \sim M^{2i}
\eeq
hold. These two conditions cannot be satisfied at the same time when $i - i_0$ grows large, and therefore the amplitude of $\cG$ vanishes in this limit.

\subsection{Melonic flow equations}

Now that the general set up has been introduced, let us derive the flow equations. Given a bubble $b$, let us note $\cM(b)$ the set of melonic (one particle) irreducible graphs which give back $b$ after contraction of their internal lines. We will also note $\cA_{i}(u_2, \cG)$ the amplitude of a graph $\cG$ with respect to the covariance $C_{i,m}$, where $m$ is defined by $m^2 = M^{2i} u_2$, and $[\tau \cA_{i}(u_2 , \cG)]$ its amputated value. The key coefficients entering the flow equations will be expressed in terms of the finite values:
\beq
a(u_2 , \cG)  \equiv \lim_{i \to + \infty} 
\frac{[\tau \cA_{i}(u_2 , \cG)]}{M^{\omega(\cG) i}} < + \infty
\,.
\eeq

\

Let $b \in \cB$ be a bubble of arbitrary valency. Identifying the melonic term in $\cM(b)$ appearing on the right side of equation (\ref{effective_a}), we immediately find: 
\bes
\frac{- \tu_{b, i-1}}{k(b)} M^{d_b (i-1) } &=& \frac{- u_{b, i}}{k(b)} M^{d_b i } \\
&+& \sum_{\cG \in \cM(b)} \frac{1}{k(\cG)} \left( \prod_{b'} (-u_{b',i} M^{d_{b'} i})^{n_{b'}(\cG)} \right) [\tau \cA_{i}(u_{2,i} , \cG)]\,,  \nn
\ees
and therefore:
\beq
\tu_{b, i-1} M^{- d_b } = u_{b, i} - \sum_{\cG \in \cM(b)} \frac{k(b)}{k(\cG)} \left( \prod_{b'} (-u_{b',i})^{n_{b'}(\cG)} \right) \frac{[\tau \cA_{i}(u_{2,i} , \cG)]}{M^{\omega(\cG) i}}\,. 
\eeq

When $i$ is large enough, we can finally approximate the $i$-dependent terms by their limit values, yielding:
\beq\boxed{
\tu_{b, i-1} M^{- d_b } = u_{b, i} - \sum_{\cG \in \cM(b)} \frac{k(b)}{k(\cG)} a(u_{2,i} , \cG) \left( \prod_{b'} (-u_{b',i})^{n_{b'}(\cG)} \right)\,.  }
\eeq

\

Let us now turn to the computation of the wave-function renormalization. By color invariance, we can focus on terms proportional to $\Delta_\ell$ at arbitrary fixed $\ell$.  Let us call $\cM_2$ the set of irreducible and connected $2$-point melonic graphs. For any graph $\cG \in \cM_2$, we can define
\beq
w_\ell (u_2 , \cG)  \equiv \lim_{i \to + \infty} 
\frac{[\tau^{(2)}_\ell \cA_{i}(u_2 , \cG)]}{M^{(\omega(\cG)-2) i}} < + \infty
\,.
\eeq
Then $Z_{i-1}$ can formally be written:
\beq\boxed{
Z_{i-1} = 1 + \sum_{\cG \in \cM_{2}} \frac{1}{k(\cG)} w_\ell (u_2 , \cG) \left( \prod_{b} (-u_{b,i})^{n_{b}(\cG)} \right) \,, }
\eeq     
for any fixed $\ell \in \{1 , 2 , 3\}$.

\section{Relevant and irrelevant directions around the Gaussian fixed point}\label{sec:gaussian}

In this section, we study the behavior of the model around the Gaussian fixed point:
\beq 
\forall b \in \cB\,, \qquad u_b = 0\,.
\eeq

\subsection{Linearized flow equations}

The wave function renormalization does not contribute at linear order around the Gaussian fixed point. One can also set $u_2 = 0$ in all the coefficients $a(u_2 , \cG)$ at this order.
\begin{definition}
Given two bubbles $b, b' \in \cB$, we say that $b$ is smaller than $b'$
($b \leq b'$) if and only if there exists a single-vertex melonic graph on $b'$ which contracts to $b$. If in addition $b \neq b'$, we say that $b$ is strictly smaller than $b'$ ($b < b'$). This defines a partial order relation on $\cB$.
\end{definition}
Given two bubbles $b,b' \in \cB$ such that $b < b'$, let us designate by $\cM(b,b')$ the set of melonic graphs with a single bubble $b'$ and which contract to $b$. We then define the quantity
\beq
\lambda(b,b') \equiv \sum_{\cG \in \cM(b,b')} \frac{k(b)}{k(\cG)} a(0, \cG)\,,
\eeq
which is strictly positive. 
Let us also set $\lambda(b, b) = 1$ for any $b \in \cB$, and $\lambda(b,b') = 0$ for any $b,b'$ such that the condition $b \leq b'$ does not hold.
The linearized flow equations around the Gaussian fixed point are:
\bes\label{lin_flow}
\forall b \in \cB\,, \qquad u_{b, i-1} 
&=& M^{d_b} \sum_{b' \geq b} \lambda(b,b') u_{b',i} + \cO(u^2)
\ees

\subsection{Irrelevant, marginal and relevant directions}

We will say that a coupling constant $t_b$ and its dimensionless counterpart $u_b$ are \emph{renormalizable} if $[t_b ] \geq 0$, and are \emph{non-renormalizable} otherwise. Just like in ordinary field theories, the infrared physics is largely independent of the values of the non-renormalizable coupling constants, because they correspond to stable directions of the Gaussian fixed point. On the contrary, a renormalizable coupling constant $u_b$ is: \emph{unstable} against perturbations around the Gaussian fixed point if $[ u_ b ] > 0$; \emph{marginal} if $[u_b] = 0$, in which case stability requires further discussion. 

\

In our discrete setting, this is a consequence of the form of the linearized recursive relation (\ref{lin_flow}). The linear operator mapping the vector $(u_{b,i})_{b\in \cB}$ to $(u_{b,i-1})_{b\in \cB}$ is triangular superior with respect to the order relation $\leq$ on $\cB$. Its diagonal is $(M^{d_b})_{b \in \cB}$. We can find eigendirections $\sigma_b \equiv (\sigma_b (b'))_{b' \in \cB}$ such that $\sigma_b (b) = 1$ and $\sigma_b (b') = 0$ whenever $b' \leq b$ does not hold.
These eigenvectors verify
\beq\label{first_order}
\forall b \in \cB\,, \qquad \sigma_{b , i-1} \approx M^{d_b} \sigma_{b , i}\,. 
\eeq
Unstable directions correspond to $M^{d_b} > 1$ (i.e. $d_b > 0$), and stable ones to $M^{d_b} < 1$ (i.e. $d_b < 0$). When $d_b = 0$, the evolution of linear perturbations in the direction $\sigma_b$ is trivial, and therefore the stability properties of this direction is determined by higher order contributions to the full flow equations.

All the directions $\sigma_b$ such that $N_b \geq 8$ are stable, and hence \textit{irrelevant} in the renormalization sense. Setting perturbations in these directions to $0$ is equivalent to imposing $u_b = 0$ for any $b$ of valency higher or equal to $8$, which we assume in the rest of this paper. 

\

Finally there is one type of $6$-valent interactions which can be also discarded from the outset. Such bubbles, represented in Figure \ref{sing_bubble} cannot generate irreducible melonic graphs. These is fortunate because they represent topologically singular elementary cells (their boundary is a $2$-torus), and therefore they are difficult to interpret geometrically. The fact that we are free to set their coupling constant to $0$ is particularly interesting.

\begin{figure}[h]
\begin{center}
\includegraphics[scale=0.5]{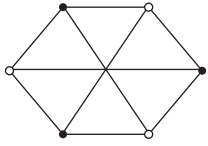}
\caption{Structure of the topologically singular $6$-valent bubbles.}
\label{sing_bubble}
\end{center}
\end{figure}

\

All in all, we have reached the conclusion that the effective action can be well-approximated by three coupling constants: one associated to $4$-valent bubbles, and two associated to $6$-valent interactions. Explicitly, we have:
\beq\label{drawing_sym}
S_i = \frac{t_{4,i}}{2} S_{4} + \frac{t_{6,1,i}}{3} S_{6,1} + t_{6,2,i} S_{6,2}\,,
\eeq
where
\bes\label{color_sym}
S_{4} (\vphi , \vphib) &=& \int [\extd g ]^6 \, \vphi(g_1 , g_2 , g_3 ) \vphib(g_1 , g_2 , g_4 ) \vphi(g_5 , g_6 , g_3 ) \vphib(g_5 , g_6 , g_4 ) \nn \\
&&+ \; {\rm color} \; {\rm permutations}  \,,  \\
S_{6,1} (\vphi , \vphib) &=&  \int [\extd g ]^9 \, \vphi(g_1 , g_2 , g_7 ) \vphib(g_1 , g_2 , g_9 ) \vphi(g_3 , g_4 , g_9 ) \nn \\
&&\vphib(g_3 , g_4 , g_8 ) \vphi(g_5 , g_6 , g_8 ) \vphib(g_5 , g_6 , g_7 ) \\
&& + \; {\rm color} \; {\rm permutations}  \,, \nn \\
S_{6, 2} (\vphi , \vphib) &=& \int [\extd g ]^9 \, \vphi(g_1 , g_2 , g_3 ) \vphib(g_1 , g_2 , g_4 ) \vphi(g_8 , g_9 , g_4 ) \nn \\
&&\vphib(g_7 , g_9 , g_3 ) \vphi(g_7 , g_5 , g_6 ) \vphib(g_8 , g_5 , g_6 ) \nn \\
&& + \; {\rm color} \; {\rm permutations} \,.   
\ees
Each term in $S_i$ is represented by a drawing from the top line of Figure \ref{int} (where colors are left implicit). In $S_4$, $S_{6,1}$ or $S_{6,2}$, there are exactly three distinct bubbles contributing, which correspond to the three possible colorings of the graphs $(4)$, $(6,1)$ or $(6,2)$ respectively. Such bubbles have therefore identical coupling constants $t_b$ by color invariance.  
In addition there are four possible types of counter-terms one generates when lowering the scale from $i$ to $i-1$. The first is mass, associated to the intermediate parameter $CT_{m,i-1}$ and reabsorbed into $u_{2,i-1}$.
The three others are wave-function counter-terms, depending on the color on which the Laplace operator is inserted (which is graphically represented by a cross), associated to a unique parameter $CT_{\vphi , i-1}$ or equivalently $Z_{i-1}$. 

\begin{figure}[h]
\begin{center}
\includegraphics[scale=0.5]{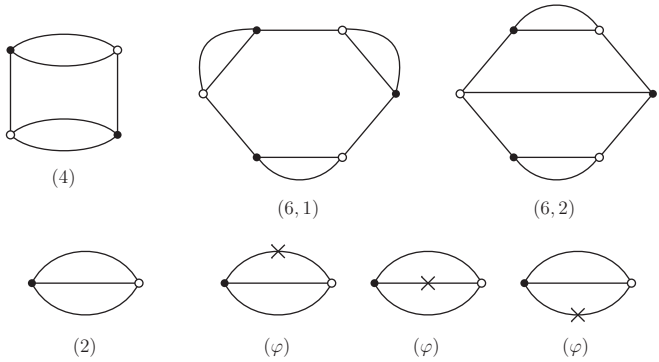}
\caption{Interactions and counter-terms.}
\label{int}
\end{center}
\end{figure}

\subsection{Computation of the relevant and marginal eigendirections}

Let us now investigate how the color invariant parameters $u_2$, $u_4$, $u_{6,1}$ and $u_{6,2}$ vary with the scale $i$.  We will also denote the eigendirections of this system $\sigma_2$, $\sigma_4$, $\sigma_{6,1}$ and $\sigma_{6,2}$. In order to determine them, we first need to compute the coefficients appearing in the recursive equation:
{\footnotesize{
\beq\label{gauss_lin}
\left( \begin{array}{c}
u_{2,i-1} \\
u_{4,i-1} \\
u_{6,1,i-1} \\ 
u_{6,2,i-1} \end{array} \right) = 
\left( \begin{array}{cccc} 
M^2 & M^2 \lambda(2,4) & M^2 \lambda(2, (6,1)) & M^2 \lambda(2, (6,2)) \\
0 & M & M \lambda(4, (6,1)) & M \lambda(4, (6,2)) \\
0 & 0 & 1 & 0 \\
0 & 0 & 0 & 1
\end{array}\right)
\left( \begin{array}{c}
u_{2,i} \\
u_{4,i} \\
u_{6,1,i} \\ 
u_{6,2,i} \end{array} \right)
\eeq}}

\begin{figure}[h]
	\begin{center}
  \includegraphics[scale=0.6]{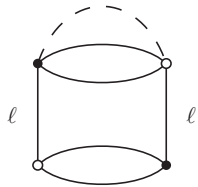}   	       
  \caption{The graphs $G_{1}^{\ell}$ contributing to $\lambda( 2, 4 )$.}\label{g1l}
  \end{center}
\end{figure}

The unique type of graph contributing to $\lambda( 2, 4 )$ is represented in Figure \ref{g1l}. There is one such graph per color $\ell$, which we label $G_{1}^{\ell}$. Each such graph has a symmetry factor $k(G_{1}^{\ell}) = 1$. One therefore finds
\beq
\lambda(2, 4) = 3 k(2) \, a(0 , G_{1}^{\ell})
= 3 \, a(0 , G_{1}^{\ell}) \,.
\eeq  
The value of $a(0 , G_{1}^{\ell})$, which we denote $\fa$, can be straightforwardly computed:
\bes
\fa &=& \frac{1}{M^i} \int_{M^{-2 i}}^{M^{-2 (i-1) }} \extd \alpha \int \extd h \, [K_\alpha (h)]^2 = \frac{1}{M^i} \int_{M^{-2 i}}^{M^{-2 (i-1) }} \extd \alpha  \, K_{2\alpha} (\one)\\
&=& \frac{\sqrt{4 \pi}}{M^i} \int_{M^{-2 i}}^{M^{-2 (i-1) }} \extd \alpha \, (2 \alpha)^{-3/2}
= \sqrt{4 \pi} \int_{1}^{M^{2}} \extd \alpha \, (2 \alpha)^{-3/2} \\
&=& \sqrt{2 \pi} \left( 1 - \frac{1}{M} \right)\,.
\ees
Hence we have shown that:
\beq
\lambda(2, 4) = 3 \fa = 3 \sqrt{2\pi} \left( 1 - \frac{1}{M} \right)\,.
\eeq

\begin{figure}[h]
	\begin{center}
  \includegraphics[scale=0.6]{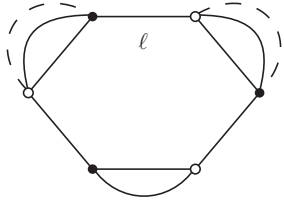}   	       
  \caption{The graphs $G_{2}^{\ell}$ contributing to $\lambda( 2, (6,1) )$.}\label{g2l}
  \end{center}
\end{figure}

We call $G_{2}^{\ell}$ the graphs contributing to $\lambda(2, (6,1))$, as represented in Figure \ref{g2l}. 
There are three such graphs, each of them having a symmetry factor $k=1$. 
Remark moreover that
\beq
a(0, G_{2}^{\ell}) = \fa^2 \,,
\eeq
and therefore
\beq
\lambda(2, (6,1)) = 3 \fa^2 = 6 \pi \left( 1 - \frac{1}{M} \right)^2\,.
\eeq

\begin{figure}[h]
  \centering
  \subfloat[$G_{3}^\ell$]{\label{g62l}\includegraphics[scale=0.6]{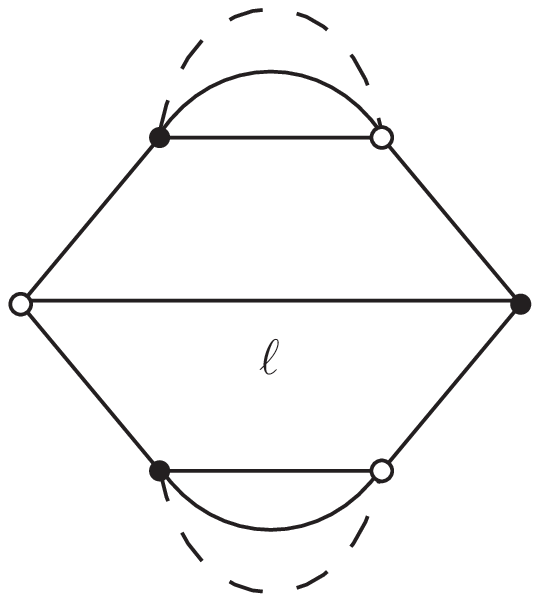}}     
  \subfloat[$G_{4}^{\ell \ell'}$]{\label{g62ll}\includegraphics[scale=0.6]{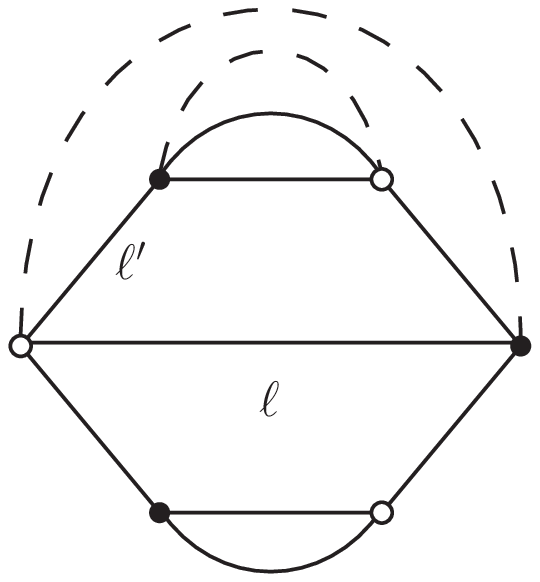}}
  \caption{Graphs contributing to $\lambda(2, (6,2))$.}\label{mass62}
\end{figure}

Two types of graphs contribute to $\lambda(2, (6,2))$, they are represented in Figure \ref{mass62}. The value of a $G_{3}^{\ell}$ is again $\fa^{2}$, and its symmetry factor \linebreak $k(G_3^\ell) = 1$. 
These graphs therefore contribute with a term $3 \fa^{2}$ to $\lambda(2, (6,2))$. 
The evaluation of a graph $G_{4}^{\ell \ell'}$ is
\bes
\fb &=& \lim_{i \to + \infty} \frac{1}{M^i} \int_{M^{- 2 i}}^{M^{- 2 (i-1)}} \extd \alpha_1 \extd \alpha_2 \int \extd h_1 \extd h_2 \, \nn \\
&&[K_{\alpha_1} (h_1)]^2 K_{\alpha_1 + \alpha_2} (h_1 h_2) K_{\alpha_2} (h_2) \\
&=& \lim_{i \to + \infty} \frac{1}{M^i} \int_{M^{- 2 i}}^{M^{- 2 (i-1)}} \extd \alpha_1 \extd \alpha_2 \int \extd h_1 \,[K_{\alpha_1} (h_1)]^2 K_{\alpha_1 + 2 \alpha_2} (h_1 )\,.
\ees
In order to evaluate this limit, we resort to a Laplace approximation (see the Appendix), which turns the integral over $g \in \SU(2)$ into a Gaussian integral over a vector $X \in \mathbb{R}^3 \sim \su(2)$. This yields
\beq
\fb = \frac{\sqrt{4 \pi}^3}{16 \pi^2} \int_{1}^{M^2} \extd \alpha_1 \extd \alpha_2 \int_{\mathbb{R}^3} \extd X \frac{\exp\left( - \frac{X^2}{2 \alpha_1}\right)}{{\alpha_1}^3} \frac{\exp\left( - \frac{X^2}{4 (\alpha_1 + 2 \alpha_2)}\right)}{(\alpha_1 + 2 \alpha_2)^{3/2}}\,,
\eeq
and the Gaussian integral can be explicitly performed to give
\bes
\fb &=& 4 \pi \int_{1}^{M^2} \frac{\extd \alpha_1 \extd \alpha_2}{{\alpha_1}^{3/2} (3 \alpha_1 + 4 \alpha_2 )^{3/2}}  \\
&=& \pi \left( \sqrt{7} - \sqrt{3 + \frac{4}{M^2}} + \frac{\sqrt{7}}{M^2} - \frac{\sqrt{3 + 4 M^2}}{M^2} \right) \,. \nn
\ees
Since there are three graphs of the type $G_{4}^{\ell \ell'}$, with trivial symmetry factors, the net contribution to $\lambda(2 , (6,2))$ is $3 \fb$. 
We have thus obtained:
\beq
\lambda(2 , (6,2)) = 3 \fa^2 + 3 \fb\,. 
\eeq

\begin{figure}[h]
	\begin{center}
  \includegraphics[scale=0.6]{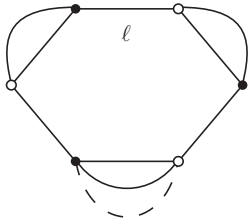}   	       
  \caption{The graphs $G_{5}^{\ell}$ contributing to $\lambda( 4, (6,1) )$.}\label{g5l}
  \end{center}
\end{figure}

The graphs to be taken into account for the computation of $\lambda( 4, (6,1) )$ are those of Figure \ref{g5l}. Their value is $\fa$. 
Two graphs $G_{5}^{\ell}$ and $G_{5}^{\ell'}$ with $\ell \neq \ell'$ contribute to the flows of distinct $4$-valent bubbles. The symmetry factor being $k(G_{5}^{\ell}) = 1$, one therefore finds
\beq
\lambda( 4, (6,1) ) = k(4) \fa = 2 \fa = 2 \sqrt{2\pi} \left( 1 - \frac{1}{M}\right)\,.
\eeq

\begin{figure}[h]
	\begin{center}
  \includegraphics[scale=0.6]{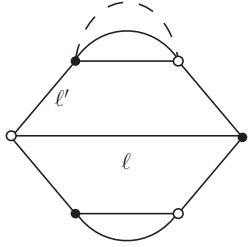}   	       
  \caption{The graphs $G_{6}^{\ell \ell'}$ contributing to $\lambda( 4, (6,2) )$.}\label{g6ll}
  \end{center}
\end{figure}

Similarly, $\lambda(4, (6,2))$ is determined by the graphs $G_{6}^{\ell \ell'}$ (see Figure \ref{g6ll}), which have value $\fa$. 
Two such graphs (mapped into one another by the exchange $\ell \leftrightarrow \ell'$) correspond to a single $4$-valent effective bubble, and \linebreak $k(G_{6}^{\ell \ell'}) = 1$. This shows that
\beq
\lambda(4, (6,2)) = 4 \fa = 4 \sqrt{2\pi} \left( 1 - \frac{1}{M} \right)
\eeq 

\

The eigendirections associated to the dynamical system (\ref{gauss_lin}) can be readily computed. One finds the following (unnormalized) vectors:
\beq
\sigma_2 = \left( \begin{array}{c}
1 \\
0 \\
0 \\ 
0 \end{array} \right) \,, \qquad
\sigma_4 = \left( \begin{array}{c}
- \frac{3 M \fa}{M - 1} \\
1 \\
0 \\ 
0 \end{array} \right) = \left( \begin{array}{c}
- 3 \sqrt{2\pi} \\
1 \\
0 \\ 
0 \end{array} \right) \,, 
\eeq
\bes
\sigma_{6,1} &=& \left( \begin{array}{c}
 \frac{3 M^2}{(M - 1)^2} \fa^2 \\
- \frac{2 M \fa}{M - 1} \\
1 \\ 
0 \end{array} \right) = \left( \begin{array}{c}
6 \pi \\
- 2 \sqrt{2\pi} \\
1 \\ 
0 \end{array} \right) \,, \\
\sigma_{6,2} &=& \left( \begin{array}{c}
 3 M^2 \frac{ (3 \fa^2 - \fb) M + \fa^2 + \fb }{(M^2 - 1)(M-1)} \\
- \frac{4 M \fa}{M - 1} \\
1 \\ 
0 \end{array} \right) \approx \left( \begin{array}{c}
(18 - 3 \sqrt{7} + 3 \sqrt{3}) \pi \\
- 4 \sqrt{2\pi} \\
0 \\ 
1 \end{array} \right) \,.
\ees
In the last equation, we used the approximation $M \gg 1$ to get rid of the $M$ dependence of the first entry. $\sigma_2$ and $\sigma_4$ correspond to unstable and hence relevant directions, while $\sigma_{6,1}$ and $\sigma_{6,2}$ are marginal. The stability of the latter is determined by higher order contributions to the flow equations, which we analyze in the next section. The unstable directions in the $(u_2, u_4)$ plane are represented in Figure \ref{u2u4}.

\

In the deep UV, the exponential suppression of the parameters of the theory along these directions imposes a non-trivial functional dependence of the parameters $u_2$ and $u_4$ from $u_{6,1}$ and $u_{6,2}$. Indeed, assuming that the components of the vector $(u_{2}, u_{4} , u_{6,1} , u_{6,2} )^T$ in the directions $\sigma_2$ and $\sigma_4$ are small but non-vanishing at small scale $i_0$, the first-order equations (\ref{first_order}) impose that they must be respectively of order $M^{-2 (i - i_0 )}$ and $M^{- (i - i_0 )}$ at scales $i$. When $i$ grows very large, we can therefore assume that $(u_{2}, u_{4} , u_{6,1} , u_{6,2} )^T$ vanishes in the directions $\sigma_2$ and $\sigma_4$. This yields:
\bes
u_{2,i} &\underset{i \to + \infty}{\approx}& 6 \pi \, u_{6,1,i} + (18 - 3 \sqrt{7} + 3 \sqrt{3}) \pi \, u_{6,2,i} \,, \label{u2_UV} \\
u_{4,i} &\underset{i \to + \infty}{\approx}& - 2 \sqrt{2 \pi} \, u_{6,1,i} - 4 \sqrt{2 \pi} \, u_{6,2,i} \,. \label{u4_UV}
\ees

\noindent Hence, the behaviour of the flow in this asymptotic region is determined by that of the two marginal coupling constants.

\begin{figure}[h]
\begin{center}
\includegraphics[scale=0.5]{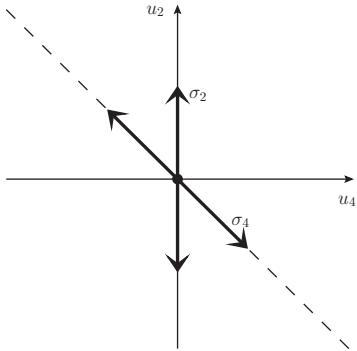}
\caption{Relevant directions in the $(u_2, u_4)$ plane.}
\label{u2u4}
\end{center}
\end{figure}

\section{Second order contributions to the flow of marginal coupling constants}\label{sec:as}

We now focus on second order contributions to the flows of $u_{6,1}$ and $u_{6,2}$. For clarity of the presentation, we successively compute the coupling constants counter-terms and the wave-function renormalization. This partial results are then combined to determine the flow equations.

\subsection{Coupling constants renormalization}

\begin{figure}[h]
  \centering
  \subfloat[$H_{1+}^\ell$]{\label{h1lp}\includegraphics[scale=0.5]{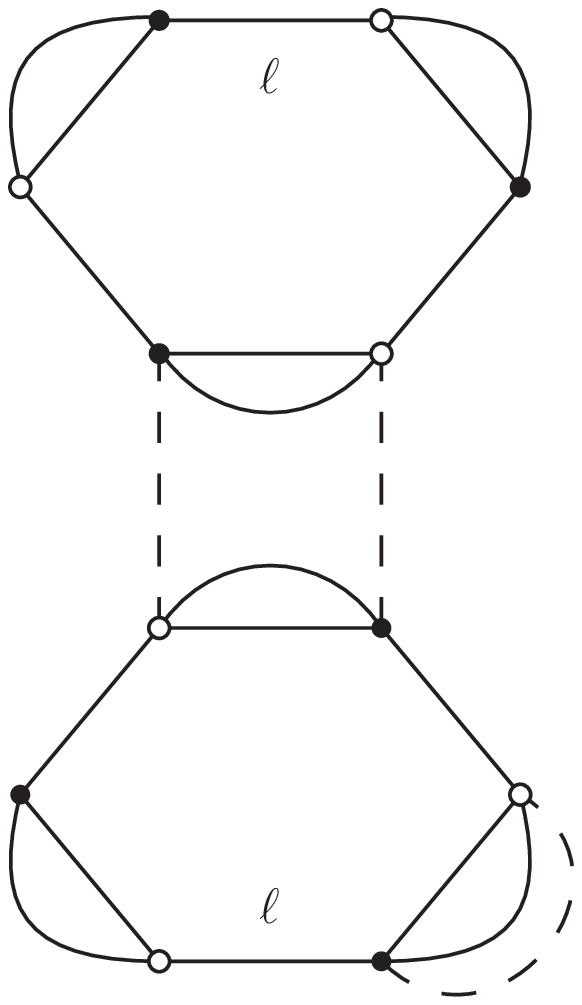}}
  \subfloat[$H_{1-}^\ell$]{\label{h1lm}\includegraphics[scale=0.5]{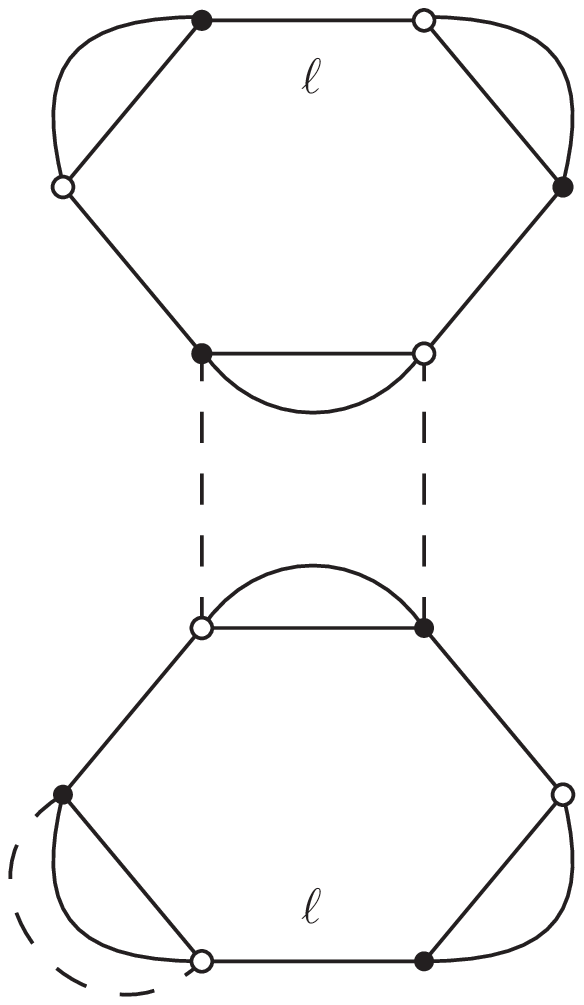}}     
  \subfloat[$H_{2}^{\ell \ell'}$]{\label{h2ll}\includegraphics[scale=0.5]{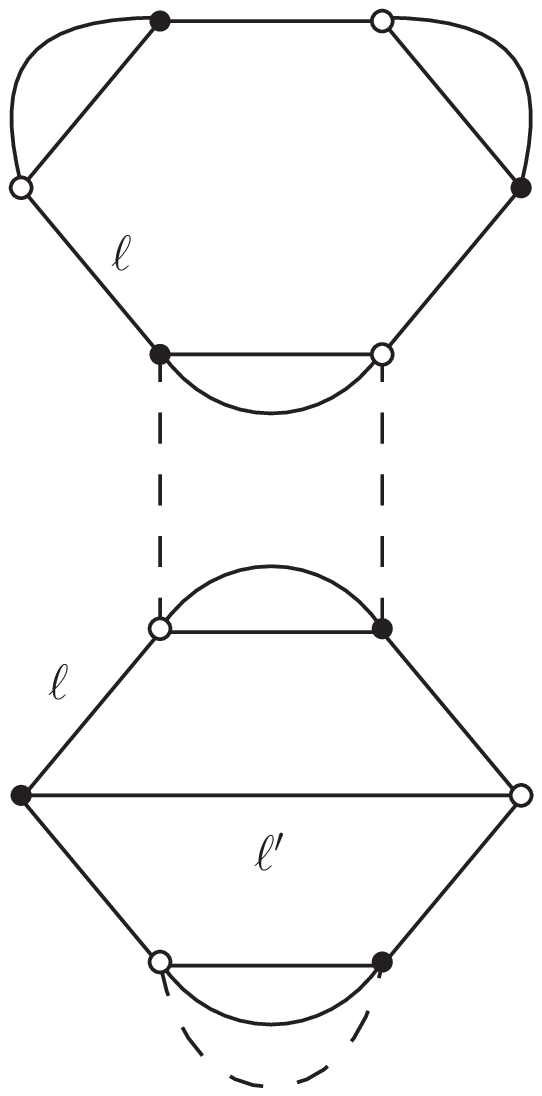}} \\
  \subfloat[$H_{3}^{\ell \ell'}$]{\label{h3ll}\includegraphics[scale=0.5]{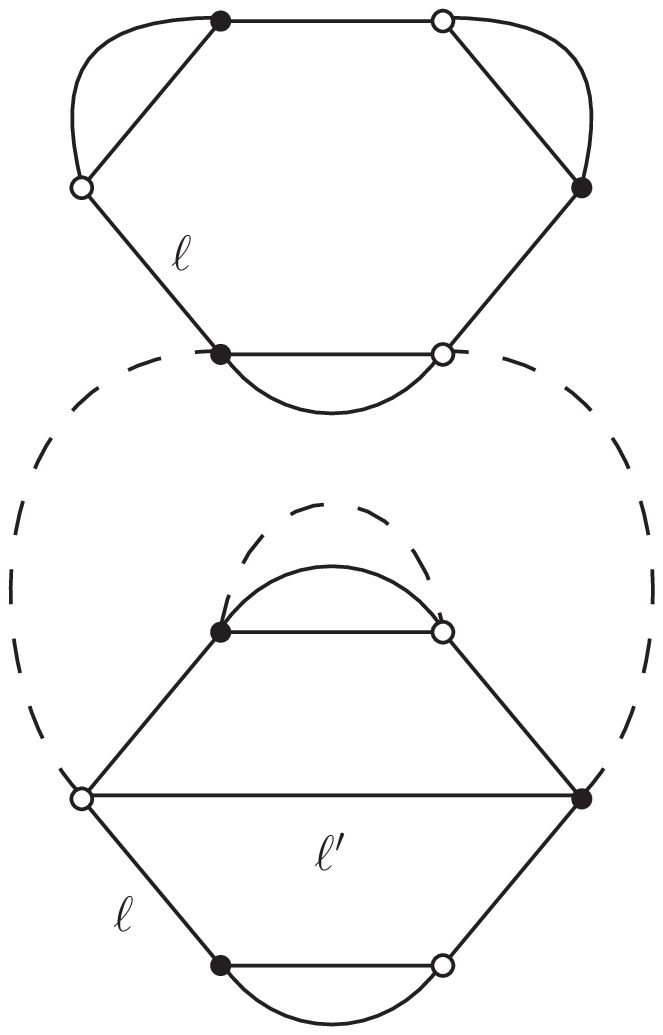}}
  \subfloat[$H_{4}^{\ell}$]{\label{h4l}\includegraphics[scale=0.5]{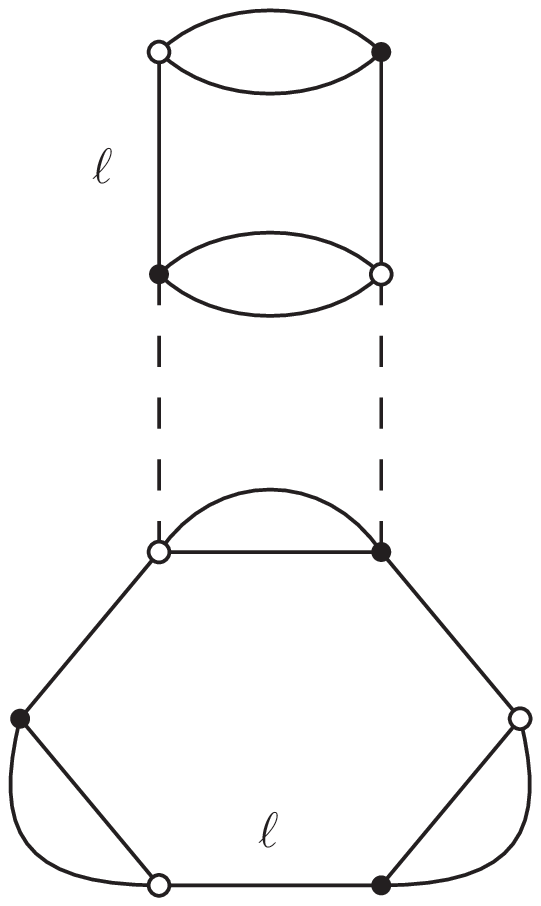}}
  \caption{Second order contributions to $\tu_{6,1}$.}\label{61_2}
\end{figure}
%

As far as $\tu_{6,1}$ is concerned, one needs to evaluate the graphs represented in Figure \ref{61_2}. The contracted amplitudes of the graphs $H_{1\pm}^\ell$ factorizes over its two face-connected components. One has already been encountered, and has value $\fa$. The second factor is
\bes
\fc &\equiv& \lim_{i \to + \infty} M^i \int_{M^{-2i}}^{M^{-2(i-1)}} \extd \alpha_1 \extd \alpha_2 \int \extd h_1 \extd h_2\, [K_{\alpha_1 + \alpha_2}(h_1 h_2)]^2\\
&=& \lim_{i \to + \infty} M^i \int_{M^{-2i}}^{M^{-2(i-1)}} \extd \alpha_1 \extd \alpha_2 \int \extd h_1 \,K_{2(\alpha_1 + \alpha_2)}(\one)\,,
\ees
and can be explicitly computed:
\bes
\fc &=& \sqrt{\frac{\pi}{2}} \int_{1}^{M^{2}} \extd \alpha_1 \int_{1}^{M^{2}} \extd \alpha_2 \, \frac{1}{(\alpha_1 + \alpha_2)^{3/2}} \nn \\
&=& 4 \sqrt{\pi} \left( \sqrt{2} \sqrt{M^2 + 1} - M - 1 \right)\,.
\ees
The symmetry factors are trivial: $k(H_{1\pm}^\ell) =1$. At fixed $\ell$, there are two graphs contributing, corresponding to the $\pm$ labeling of the graphs. The net contribution to $\tu_{6,1,i-1}$ is therefore:
\beq
- k(6,1) \times 2 \, \fa \, \fc {u_{6,1,i}}^2
= - 6 \, \fa \, \fc \, {u_{6,1,i}}^2 \,. 
\eeq

Each graph $H_{2}^{\ell \ell'}$ also evaluates to $\fa \times \fc$. 
At fixed $\ell$, there are two possible choices for the color $\ell'$, and the symmetry factors are again trivial.
Therefore the total contribution of these graphs to $\tu_{6,1,i-1}$ is:
\beq
- k(6,1) \times 2 \, \fa \, \fc \, u_{6,1,i} u_{6,2,i}
= - 6 \, \fa \, \fc \, u_{6,1,i} u_{6,2,i}\,. 
\eeq  

Let us now turn to the graphs $H_{3}^{\ell \ell'}$. Their contracted amplitudes evaluate to:
\bes
\fd &\equiv& \lim_{i \to + \infty} \int_{M^{-2i}}^{M^{-2(i-1)}} \extd \alpha_1 \extd \alpha_2 \extd \alpha_3 \int \extd h_1 \extd h_2 \extd h_3 \, \\
&& [K_{\alpha_1}(h_1 )]^2 K_{\alpha_1 + \alpha_2 + \alpha_3} (h_1 h_2 h_3) K_{\alpha_2 + \alpha_3 } (h_2 h_3) \nn\\
&=& \lim_{i \to + \infty} \int_{M^{-2i}}^{M^{-2(i-1)}} \extd \alpha_1 \extd \alpha_2 \extd \alpha_3 \int \extd h_1 \, \\
&& [K_{\alpha_1}(h_1 )]^2 K_{\alpha_1 + 2(\alpha_2 + \alpha_3)} (h_1) \nn \\
&=& \frac{\sqrt{4 \pi}^{3}}{16 \pi^2} \int_{1}^{M^2} \extd \alpha_1 \extd \alpha_2 \extd \alpha_3 \int_{\mathbb{R}^3} \extd X \, \frac{\e^{- \frac{X^2}{2 \alpha_1}}}{{\alpha_1}^{3}} \frac{\e^{- \frac{X^2}{4 [\alpha_1 + 2 (\alpha_2 + \alpha_3)]}}}{[\alpha_1 + 2 (\alpha_2 + \alpha_3)]^{3/2}} \\
&=& 4 \pi \int_{1}^{M^2} \frac{\extd \alpha_1 \extd \alpha_2 \extd \alpha_3}{{\alpha_1}^{3/2} [3 \alpha_1 + 4 (\alpha_2 + \alpha_3) ]^{3/2}}\,.
\ees
One can easily see that $k(H_{3}^{\ell \ell'}) = 1$. Since at fixed $\ell$ there are two possible choices for $\ell'$ which generate the same effective bubble, we conclude that these graphs produce a term
\beq
- k(6,1) \times 2 \, \fd \, u_{6,1,i} u_{6,2,i}
= - 6 \, \fd \, u_{6,1,i} u_{6,2,i} 
\eeq 
in the expression of $\tu_{6,1,i-1}$.

Finally, the contracted amplitude of a graph $H_{4}^{\ell}$ is $\fc$, and $k(H_{4}^{\ell}) = 1$.
Each of them generates a distinct effective bubble. Therefore this graphs result in a term:
\beq
- k(6,1) \times \fc \,  u_{6,1,i} u_{4,i} 
= - 3 \, \fc \, u_{6,1,i} u_{4,i}\,. 
\eeq

In summary, we have just shown that:
\beq\label{result_61}
\boxed{
\tu_{6,1,i-1} = u_{6,1,i} - 6 \, \fa \, \fc \, {u_{6,1,i}}^2 - 6 \left( \fa \, \fc + \fd \right) u_{6,1,i} u_{6,2,i}  - 3 \, \fc \, u_{6,1,i} u_{4,i} + \cO(u^3)}
\eeq

\

\begin{figure}[h]
  \centering
  \subfloat[$I_{1 +}^{\ell \ell'}$]{\label{i1llp}\includegraphics[scale=0.5]{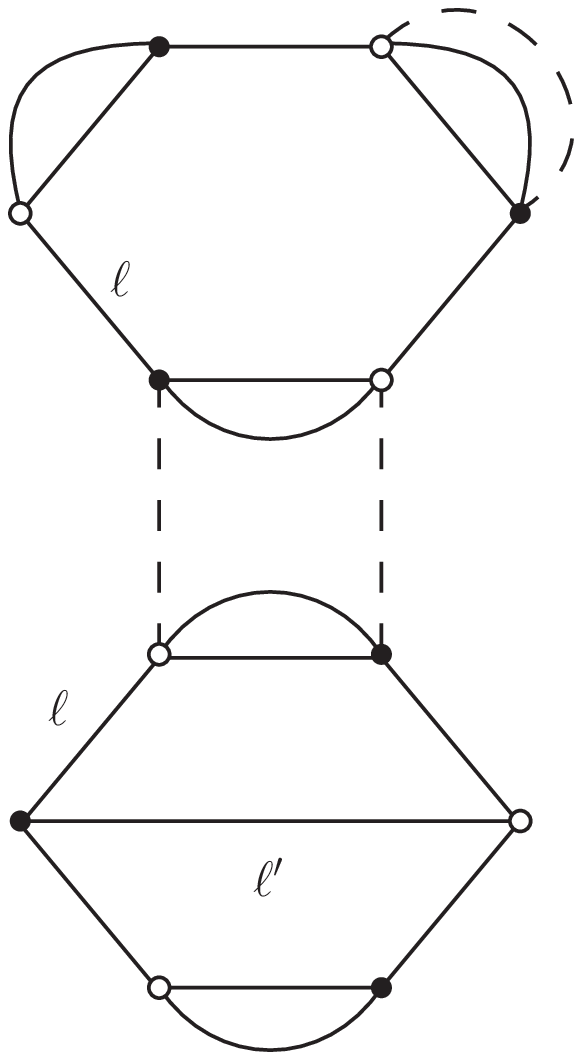}}
  \subfloat[$I_{1 -}^{\ell \ell'}$]{\label{i1llm}\includegraphics[scale=0.5]{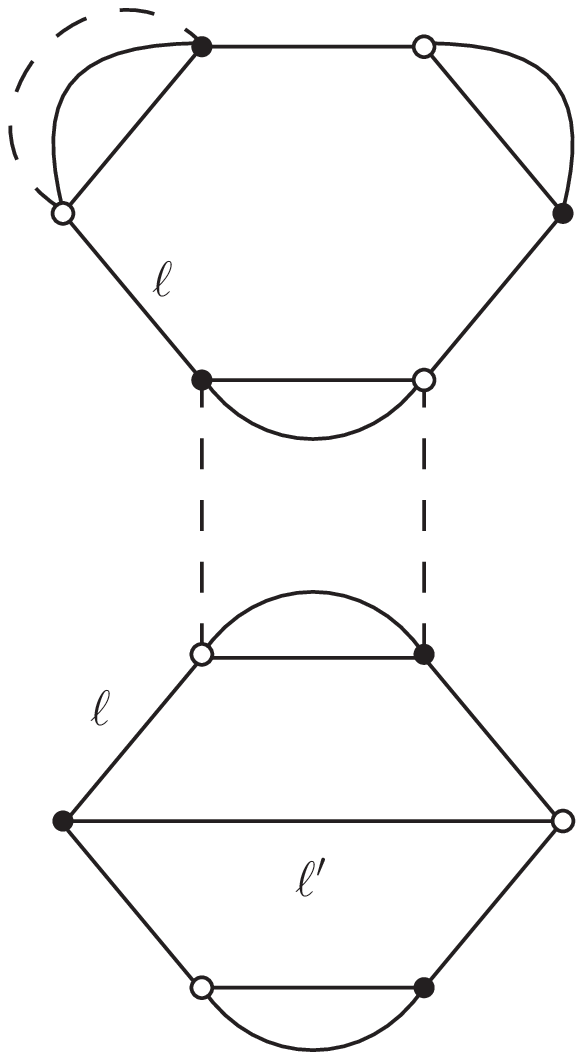}}
  \subfloat[$I_{2}^{\ell \ell' \ell'' }$]{\label{i2lll}\includegraphics[scale=0.5]{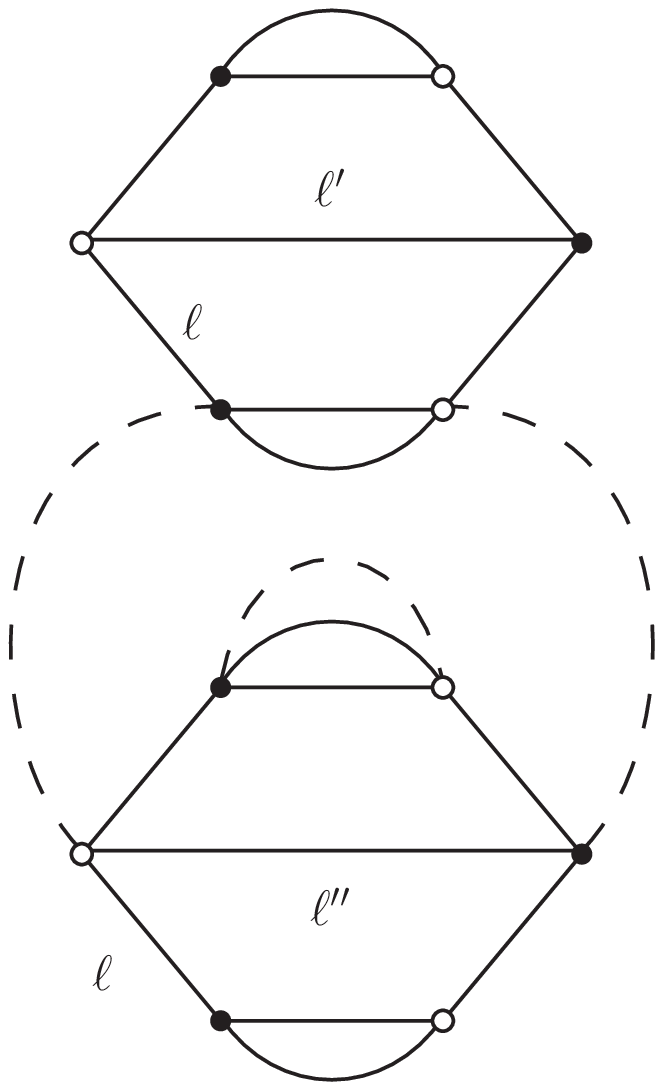}} \\     
  \subfloat[$I_{3}^{\ell \ell'}$]{\label{i3ll}\includegraphics[scale=0.5]{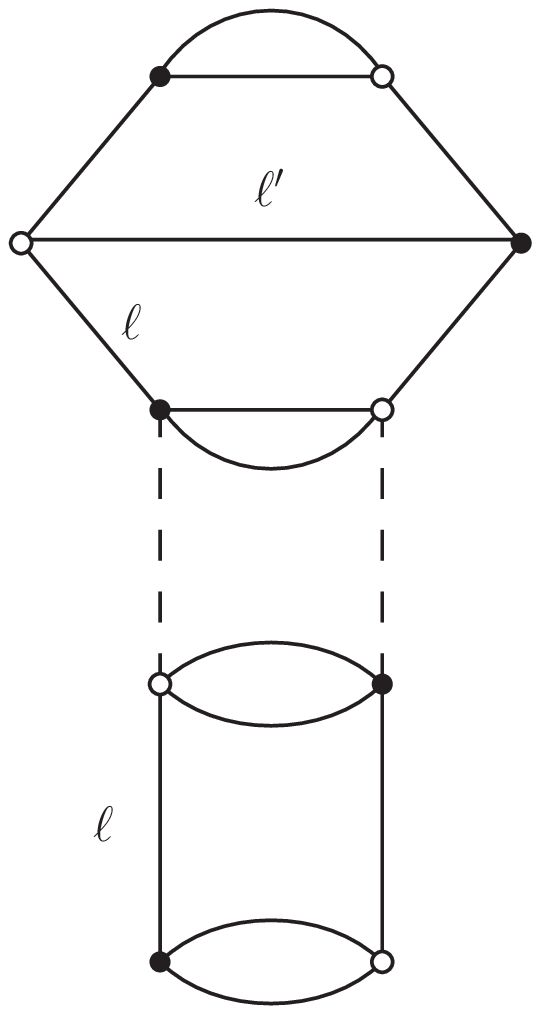}}
  \subfloat[$I_{4}^{\ell \ell' \ell'' }$]{\label{i4lll}\includegraphics[scale=0.5]{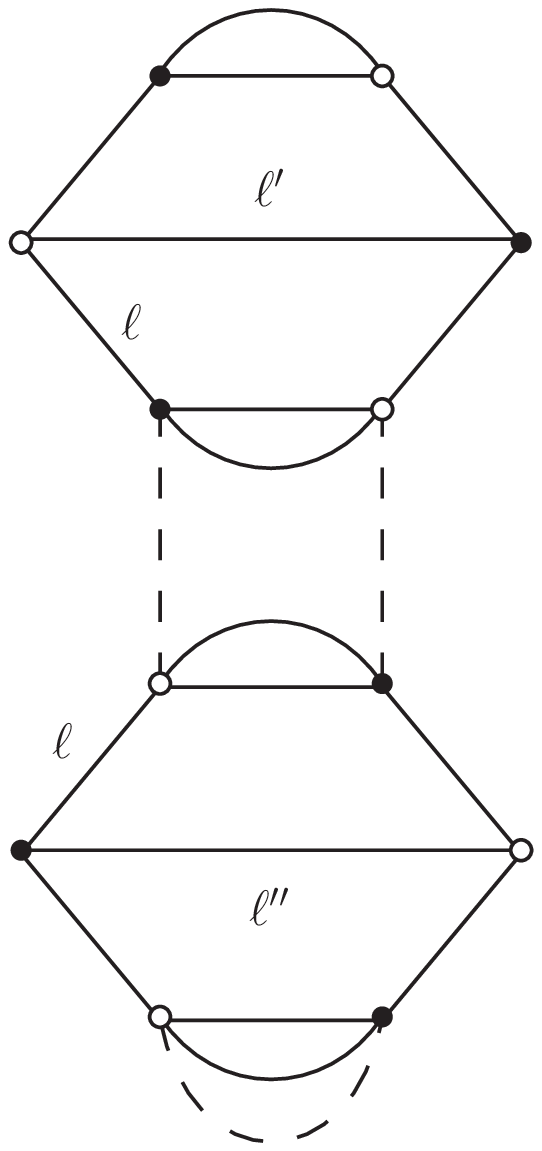}}
  \caption{Second order contributions to $\tu_{6,2}$.}\label{62_2}
\end{figure}

We proceed in a similar way to determine $\tu_{6,2,i-1}$. The four types of graphs contributing are represented in Figure \ref{62_2}. 

The graphs $I_{1\pm}^{\ell \ell'}$ evaluate to $\fa \times \fc$. 
At fixed $\ell'$, we are free to choose between two values of $\ell$ and the sign $\pm$, therefore four distinct graphs generate the same effective bubble. Since the symmetry factors are trivial, these graphs are responsible for a term 
\beq
- k(6,2) \times 4 \, \fa \, \fc \, u_{6,1,i} u_{6,2,i} = - 4 \, \fa \, \fc \, u_{6,1,i} u_{6,2,i}
\eeq
in the computation of $\tu_{6,2,i-1}$.

The amplitude of any graph $I_2^{\ell \ell' \ell''}$ is $\fd$. 
The symmetry factors are trivial, and the value of $\ell''$ uniquely determines the effective bubble generated by the graph. Once $\ell''$ is fixed one has two possibilities for $\ell'$, and two for $\ell$. Hence these graphs produce a term 
\beq
- k(6,2) \times 4 \, \fd \, {u_{6,2, i}}^2 = - 4 \, \fd \, {u_{6,2, i}}^2\,.
\eeq

The graphs $I_{3}^{\ell \ell'}$ generate a term proportional to $\fc$. 
At fixed $\ell'$, there is a freedom in the choice of $\ell$. Since $k(I_{3}^{\ell \ell'}) = 1$ this results in the following contribution to $\tu_{6, 2, i-1}$:
\beq
- k(6,2) \times 2 \, \fc \times \, u_{6,2,i} u_{4,i} = - 2 \, \fc \, u_{6,2,i} u_{4,i}\,.  
\eeq

Finally, the graphs $I_{4}^{\ell \ell' \ell''}$ generate amplitudes evaluating to $\fa \times \fc$. At fixed $\ell'$, one is free to choose between two values for $\ell'$ and two values for $\ell''$. The symmetry factor is $k(I_{4}^{\ell \ell' \ell''})=1$, hence the total contribution of these graphs is:
\beq
- k(6,2) \times 4 \, \fa \, \fc \times \, {u_{6,2,i}}^2 = - 4 \, \fa \, \fc \, {u_{6,2,i}}^2\,. 
\eeq

This concludes the computation of $\tu_{6, 2, i-1}$:
\beq\label{result_62}
\boxed{
\tu_{6, 2, i-1} = u_{6,2,i} - 4 \left( \fa \, \fc +  \fd \right) {u_{6,2,i}}^2 - 4 \, \fa \, \fc \, u_{6,1,i} u_{6,2,i}  - 2 \, \fc \, u_{6,2,i} u_{4,i} + \cO(u^3) }
\eeq

\subsection{Wave-function renormalization}

\begin{figure}[h]
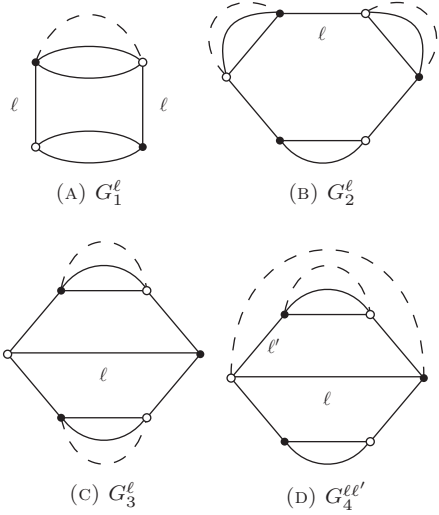

  \centering
  \subfloat[$G_{1}^\ell$]{\label{g1lbis}\includegraphics[scale=0.5]{figures/g1l}}
  \subfloat[$G_{2}^\ell$]{\label{g2lbis}\includegraphics[scale=0.5]{figures/g2l}} \\
  \subfloat[$G_{3}^\ell$]{\label{g3lbis}\includegraphics[scale=0.5]{figures/g3l}}     
  \subfloat[$G_{4}^{\ell \ell'}$]{\label{g4ll}\includegraphics[scale=0.5]{figures/g4ll}}
  \caption{Graphs contributing to $Z_{i-1}$.}\label{wf}
\end{figure}

The wave-function renormalization is determined, at this order, by the graphs $G_{1}^\ell$, $G_{2}^\ell$, $G_{3}^\ell$ and $G_{4}^{\ell \ell'}$, more precisely by their Taylor expansion at order two around their local part. 

Each graph $G_{1}^\ell$ generates a term:
\bes
\fe &\equiv& w_{\ell}(0, G_{1}^\ell) = \lim_{i \to + \infty} M^i \times \frac{1}{6} \int \extd g \vert X_g \vert^2  \int_{M^{-2i}}^{M^{-2(i-1)}} \extd \alpha \, \\
&& \hfill [K_{\alpha}(h)]^2 K_{\alpha} (hg) \\
&=& \frac{1}{6} M^i \sqrt{4 \pi}^3 \left(\frac{1}{16 \pi^2}\right)^2 \int_{M^{-2i}}^{M^{-2(i-1)}} \extd \alpha \nn \\
&& \int_{\mathbb{R}^3} \extd X \, \vert X \vert^2 \int_{\mathbb{R}^3} \extd H  \, \frac{\e^{- \frac{H^2}{2 \alpha}}}{{\alpha}^{3}} \frac{\e^{- \frac{(H + X)^2}{4 \alpha}}}{{\alpha}^{3/2}}\,.
\ees
In order to evaluate the Gaussian integral with respect to $H$, we re-arrange the quadratic form appearing in the exponentials:
\beq
\frac{H^2}{2 \alpha} + \frac{\left( H + X \right)^2}{4 \alpha} = \frac{3}{4 \alpha} \left( H + \frac{X}{3} \right)^2 + \frac{X^2}{6 \alpha}\,, 
\eeq
so that:
\beq
\fe = \frac{1}{6} M^i \sqrt{4 \pi}^3 \left(\frac{1}{16 \pi^2}\right)^2 \left( \frac{4 \pi}{3} \right)^{3/2} \int_{M^{-2i}}^{M^{-2(i-1)}} \extd \alpha \int_{\mathbb{R}^3} \extd X \, \vert X \vert^2 \, \frac{\e^{- \frac{X^2}{6 \alpha}}}{\alpha^3}\,.
\eeq
To evaluate the integral over $X$, we make use of the general formula:
\bes
\int_{\mathbb{R}^d} \extd X \, X^2 \,\e^{- a X^2} &=& -\frac{\extd }{\extd a} \int_{\mathbb{R}^d} \extd X \, \e^{- a X^2} = -\frac{\extd }{\extd a} \left( \frac{\pi}{a} \right)^{d/2} \nn \\
&=& \frac{d}{2 a} \left( \frac{\pi}{a}\right)^{d/2}\,.
\ees
This finally yields:
\beq
\fe = \frac{3}{2} \sqrt{\frac{\pi}{2}} \int_{1}^{M^2} \frac{\extd \alpha}{\sqrt{\alpha}} = 3 \sqrt{ \frac{\pi}{2} } \left( M - 1 \right)\,. 
\eeq
The symmetry factor associated to each such graph is $1$, therefore they collectively generate a term
\beq
- \fe \, u_{4,i} 
\eeq
in the expression of $Z_{i-1}$.

The contribution of a graph $G_{2}^\ell$ can be evaluated in a similar way. It is
\bes
\ff &\equiv& w_{\ell}(0, G_{2}^\ell) = \lim_{i \to + \infty}   \frac{1}{6} \int \extd g \vert X_g \vert^2  \int_{M^{-2i}}^{M^{-2(i-1)}} \extd \alpha_1 \extd \alpha_2 \, \nn \\
&& \hfill [K_{\alpha_1}(h_1)]^2 [K_{\alpha_2}(h_2)]^2 K_{\alpha_1 + \alpha_2} (h_1 h_2 g) \\
&=& \frac{1}{6}  \sqrt{4 \pi}^5 \left(\frac{1}{16 \pi^2}\right)^3 \int_{M^{-2i}}^{M^{-2(i-1)}} \extd \alpha_1 \extd \alpha_2 \, \nn \\ 
&&\int_{\mathbb{R}^3} \extd X \, \vert X \vert^2 \int_{\mathbb{R}^3} \extd H_1  \, \int_{\mathbb{R}^3} \extd H_2  \, \frac{\e^{-Q(X, H_1, H_2) }}{{\alpha_1}^{3} {\alpha_2}^{3} (\alpha_1 + \alpha_2)^{3/2}} \,,
\ees
where the quadratic form $Q$ is defined by:
\beq
Q(X , H_1 , H_2 ) = \frac{{H_1}^2}{2 \alpha_1} + \frac{{H_2}^2}{2 \alpha_2} + \frac{\left( H_1 + H_2 + X \right)^2}{4 (\alpha_1 + \alpha_2)}\,.
\eeq
One can re-arrange $Q$ in such a way that one can easily integrate $H_1$ and $H_2$ successively, for instance:
\bes
Q(X , H_1 , H_2 ) &=& \frac{3 \alpha_1 + \alpha_2}{4 \alpha_1 (\alpha_1 + \alpha_2)} \left( H_1 + \frac{\alpha_1}{3 \alpha_1 + 2 \alpha_2} (H_2 + X) \right)^2 \nn \\
&& + \frac{3}{2} \frac{\alpha_1 + \alpha_2}{(3 \alpha_1 + 2 \alpha_2) \alpha_2} \left( H_2 + \frac{1}{3} \frac{\alpha_2}{\alpha_1 + \alpha_2} X \right)^2  \\
&& + \frac{1}{6(\alpha_1 +  \alpha_2 )} X^2\,. \nn
\ees
Hence:
\bes
\ff &=& \frac{1}{6}  (4 \pi)^{- 7/2} \int_{M^{-2i}}^{M^{-2(i-1)}} \extd \alpha_1 \extd \alpha_2 \, \int_{\mathbb{R}^3} \extd X \, \vert X \vert^2 \, \frac{\e^{-\frac{X^2}{6 (\alpha_1 + \alpha_2)}}}{{\alpha_1}^{3} {\alpha_2}^{3} (\alpha_1 + \alpha_2)^{3/2}} \nn \\
&& \times \left( \frac{4 \pi \alpha_1 (\alpha_1 + \alpha_2 ) }{ 3 \alpha_1 + 2 \alpha_2} \right)^{3/2} 
\left( \frac{ 2 \pi (3 \alpha_1 + 2 \alpha_2) \alpha_2}{3(\alpha_1 + \alpha_2)} \right)^{3/2}  \\
&=& \frac{3 \pi}{4} \int_{1}^{M^{2}} \extd \alpha_1 \extd \alpha_2 \, \frac{\alpha_1 + \alpha_2 }{ {\alpha_1}^{3/2} {\alpha_2}^{3/2}} 
= \frac{3 \pi}{2} \int_{1}^{M^{2}} \extd \alpha_1 \extd \alpha_2 \, \frac{1}{{\alpha_1}^{1/2} {\alpha_2}^{3/2}} \\
&=& 6 \pi \left( M - 1 \right) \left( 1 - \frac{1}{M}\right) = 6 \pi \left( M - 2 + \frac{1}{M}\right)\,.
\ees
Since $k(G_{2}^\ell)=1$, the contribution of this graphs to $Z_{i-1}$ is:
\beq
- \ff \, u_{6,1,i}\,. 
\eeq

Let us now consider a graph $G_{3}^\ell$, and call $\ell'$, $\ell"$ the two remaining colors. By expanding the amplitude of the graph to Taylor order $2$ around its tensor invariant part, one easily sees that a kernel of the form
\beq
\fa \, \fe \, u_{6,2,i} \Delta_{\ell'} + \fa \, \fe \,  u_{6,2,i} \Delta_{\ell''}
\eeq
is generated. By varying $\ell$ one therefore generates $6 \, \fa \, \fe \, u_{6,2,i}$ Laplace operators, which split equitably between the three internal colors. Therefore the total contribution of these graphs to $Z_{i-1}$ is:
\beq
- 2 \fa \, \fe \, u_{6,2,i}\,.
\eeq

Finally, we can examine the $G_4^{\ell \ell'}$ contributions. Each of these amplitudes produces a term:
\bes
\fg &\equiv& \lim_{i \to + \infty}   \frac{1}{6} \int_{M^{-2i}}^{M^{-2(i-1)}} \extd \alpha_1 \extd \alpha_2  \int \extd g \vert X_g \vert^2 \, \extd h_1 \extd h_2 \, \nn \\
&& \hfill [K_{\alpha_1}(h_1)]^2 K_{\alpha_1 + \alpha_2} (h_1 h_2) K_{\alpha_2}(h_2) K_{\alpha_2}(h_2 g) \\
&=& \frac{1}{6}  \sqrt{4 \pi}^5 \left(\frac{1}{16 \pi^2}\right)^3 \int_{M^{-2i}}^{M^{-2(i-1)}} \extd \alpha_1 \extd \alpha_2 \, \int_{\mathbb{R}^3} \extd X \, \vert X \vert^2 \nn \\
&& \int_{\mathbb{R}^3} \extd H_1  \, \int_{\mathbb{R}^3} \extd H_2  \, \frac{\e^{-R(X, H_1, H_2) }}{{\alpha_1}^{3} {\alpha_2}^{3} (\alpha_1 + \alpha_2)^{3/2}} \,,
\ees
where 
\beq
R(X, H_1, H_2) = \frac{{H_1}^2}{2 \alpha_1} + \frac{\left( H_1 + H_2 \right)^2}{4 (\alpha_1 + \alpha_2)} + \frac{{H_2}^2}{4 \alpha_2} + \frac{\left( H_2 + X \right)^2}{4 \alpha_2} \,.
\eeq
One can then rewrite $R$ as
\bes
R(X , H_1 , H_2 ) &=& \frac{3 \alpha_1 + 2 \alpha_2}{4 \alpha_1 (\alpha_1 + \alpha_2)} \left( H_1 + \frac{\alpha_1}{3 \alpha_1 + 2 \alpha_2} H_2 \right)^2 \nn \\
&& + \frac{3}{2} \frac{\alpha_1 + \alpha_2}{(3 \alpha_1 + 2 \alpha_2) \alpha_2} \left( H_2 + \frac{1}{6} \frac{3 \alpha_1 + 2 \alpha_2}{\alpha_1 + \alpha_2} X \right)^2 \\
&& + \frac{1}{24} \frac{3 \alpha_1 + 4 \alpha_2}{(\alpha_1 + \alpha_2) \alpha_2} X^2\,, \nn
\ees
so that
\bes
\fg &=& \frac{1}{6}  (4 \pi)^{- 7/2}  \int_{1}^{M^{2}} \extd \alpha_1 \extd \alpha_2 \, \int_{\mathbb{R}^3} \extd X \, \vert X \vert^2 \, \frac{\e^{-\frac{3 \alpha_1 + 4 \alpha_2}{24 (\alpha_1 + \alpha_2) \alpha_2} X^2 }}{{\alpha_1}^{3} {\alpha_2}^{3} (\alpha_1 + \alpha_2)^{3/2}}  \\
&& \left( \frac{4 \pi \alpha_1 (\alpha_1 + \alpha_2)}{3 \alpha_1 + 2 \alpha_2} \right)^{3/2} \left( \frac{2 \pi (3\alpha_1 + 2\alpha_2) \alpha_2 }{3(\alpha_1 + \alpha_2)} \right)^{3/2} \nn \\
&=&  \frac{1}{2} \left(\frac{1}{3}\right)^{5/2} (4 \pi)^{-2} (2 \pi)^{3/2}  \int_{1}^{M^{2}} \extd \alpha_1 \extd \alpha_2 \, \frac{\extd \alpha_1 \extd \alpha_2}{{\alpha_1}^{3/2} {\alpha_2}^{3/2} (\alpha_1 + \alpha_2)^{3/2}} \nn \\
&& \int_{\mathbb{R}^3} \extd X \, \vert X \vert^2 \, \e^{-\frac{3 \alpha_1 + 4 \alpha_2}{24 (\alpha_1 + \alpha_2) \alpha_2} X^2 } \\
&=& 24 \pi \int_{1}^{M^{2}} \extd \alpha_1 \extd \alpha_2 \, \frac{(\alpha_1 + \alpha_2 ) \alpha_2}{{\alpha_1}^{3/2} (3 \alpha_1 + 4 \alpha_2)^{5/2}}\,.
\ees
Since $G_4^{\ell \ell'}$ and $G_4^{\ell' \ell}$ generate Laplacians acting on the same color, the combinatorial factor is simply $2$. Hence these graphs generate a term:
\beq
- 2 \, \fg \, u_{6,2,i}
\eeq
in the expression of $Z_{i-1}$.

This concludes the computation of the wave-function parameter:
\beq\label{result_wf}
\boxed{
Z_{i-1} = 1  - \fe \, u_{4,i} - \ff \, u_{6,1,i} - 2\, \left( \fa \, \fe + \fg \right) u_{6,2,i} + \cO(u^2)}
\eeq

\subsection{Resulting equations}

We can now gather the results (\ref{result_61}), (\ref{result_62}) and (\ref{result_wf}) to deduce the flow equations for $u_{6,1}$ and $u_{6,2}$ up to second order. This yields:

\noindent\fbox{
\begin{minipage}[c]{\linewidth}
\begin{align}
u_{6,1,i-1} &= u_{6,1,i} + \left( 3 \, \ff - 6 \, \fa \, \fc \right) {u_{6,1,i}}^2 + 6 \left( \fa \, \fe + \fg - \fa \, \fc - \fd \right) u_{6,1,i}  u_{6,2,i}  \nn \\
& \quad + 3 \left( \fe - \fc \right) u_{6,1,i} u_{4,i} + \cO(u^3)  \label{r_61} \\
u_{6,2,i-1} &= u_{6,2,i} + \left( 6 \left( \fa \, \fe + \fg \right) - 4 \, \fa \, \fc - 4 \, \fd \right) {u_{6,2,i}}^2   \nn \\
& \quad + \left( 3 \, \ff  - 4 \, \fa \, \fc  \right) u_{6,1,i}  u_{6,2,i} +  \left( 3 \, \fe - 2 \, \fc \right) u_{6,2,i} u_{4,i} + \cO(u^3)  \label{r_62}
\end{align}
\end{minipage}
}

\

\noindent We notice that all the wave-function contributions come with a positive sign, as opposed to all the other terms which come with a minus sign. In order to determine how the $6$-valent couplings behave in the UV region, one can first determine the signs of the resulting coefficients. To this purpose, we prove the following lemma:
\begin{lemma}
 For any $M>1$:
 \begin{enumerate}[(i)]
  \item $\fe > \fc$ ;
  \item $\ff > 2\, \fa \, \fc$.
  \end{enumerate}
 There exists $M_0 > 1$, such that for any $M> M_0$:
 \begin{enumerate}[(i)]
  \setcounter{enumi}{2}
  \item $\fg (M) > \fd (M)$ .
 \end{enumerate}
\end{lemma}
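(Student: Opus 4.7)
My plan is to dispatch (i) and (ii) by direct algebra, exploiting the fact that they reduce to the \emph{same} closed-form inequality, and to treat (iii) by a large-$M$ asymptotic analysis of the two multiple integrals.

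For (i), I would substitute the explicit expressions $\fe = 3\sqrt{\pi/2}(M-1)$ and $\fc = 4\sqrt{\pi}(\sqrt{2(M^2+1)} - (M+1))$ into $\fe > \fc$. After clearing a common factor of $\sqrt{\pi}$ and rearranging, this is equivalent to
\beq
(3 + 4\sqrt{2})\,M + (4\sqrt{2} - 3) \;>\; 8 \sqrt{M^2 + 1}\,.
\eeq
Both sides equal $8\sqrt{2}$ at $M = 1$ and both are positive for $M \geq 1$, so I can square and reduce the claim to
\beq
P(M) \equiv (24\sqrt{2} - 23)\,M^2 + 46\,M - (23 + 24\sqrt{2}) \;>\; 0\,.
\eeq
One checks $P(1) = 0$, and since $24\sqrt{2} - 23 > 0$ the derivative $P'(M) = 2(24\sqrt{2} - 23) M + 46$ is strictly positive on $[1,\infty)$, so $P$ is strictly increasing and $P(M) > 0$ for $M > 1$.

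For (ii), using $\fa = \sqrt{2\pi}(1 - 1/M)$ and the identity $\ff = 6\pi (M - 1)^2 / M$, the inequality $\ff > 2\,\fa\,\fc$, upon dividing both sides by the positive quantity $2\pi(M-1)/M$, reduces to exactly the algebraic inequality just proved in (i). Hence (ii) follows from (i) without further work.

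The main difficulty is (iii), since $\fd$ and $\fg$ do not seem to admit elementary closed forms. My approach is the rescaling $\alpha_j = M^2 t_j$, which turns both integrals into
\beq
\fd = 4 \pi \int_{1/M^2}^{1} \frac{\extd t_1 \extd t_2 \extd t_3}{t_1^{3/2}[3 t_1 + 4(t_2 + t_3)]^{3/2}}\,, \qquad \fg = 24 \pi \int_{1/M^2}^{1} \frac{(t_1 + t_2)\, t_2 \, \extd t_1 \extd t_2}{t_1^{3/2}(3 t_1 + 4 t_2)^{5/2}}\,.
\eeq
In both cases the divergence as $M \to +\infty$ is generated by the endpoint $t_1 \to 0$, where $\int_{1/M^2}^{1} \extd t_1/t_1^{3/2} = 2(M-1)$; evaluating the rest of the integrand at $t_1 = 0$ and performing the remaining (finite) transverse integrals in closed form yields
\beq
\fd = (8 - 4\sqrt{2})\,\pi\, M + o(M)\,, \qquad \fg = 3\pi\, M + o(M)\,.
\eeq
Because $3 - (8 - 4\sqrt{2}) = 4\sqrt{2} - 5 > 0$, this proves $\fg > \fd$ for $M$ large enough. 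The only remaining technical point is to bound the subleading $o(M)$ remainders so as to produce a concrete $M_0$; this can be done by splitting the $t_1$ integral at some intermediate scale and using crude uniform bounds on the regular part of the integrand, and is the main place where estimation care is required.
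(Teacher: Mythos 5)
Your proposal is correct, and it diverges from the paper's proof in an instructive way on (i)--(ii) while being essentially equivalent on (iii). For (i) the paper does not square anything: it uses the elementary bound $M^2+1 \leq (M-1+\sqrt{2})^2$ to get $\fc \leq 4\sqrt{\pi}(\sqrt{2}-1)(M-1)$, and then compares linear functions of $M-1$, invoking $\sqrt{2}<3/2$ for (i) and $\sqrt{2}>5/4$ for (ii). Your route — reducing $\fe>\fc$ to the exact quadratic inequality $P(M)=(24\sqrt{2}-23)M^2+46M-(23+24\sqrt{2})>0$ with $P(1)=0$ and $P$ increasing — is sharper (no information is thrown away), and your observation that $\ff = 2\,\fa\,\fe$, so that (ii) is \emph{literally equivalent} to (i) after dividing by $2\pi(M-1)/M>0$, is a genuine simplification the paper does not exploit (it re-runs the bound on $\fc$ a second time). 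For (iii), the paper differentiates $\fg-\fd$ with respect to $M$ using the Leibniz rule and shows $\frac{\extd(\fg-\fd)}{\extd M}\to(4\sqrt{2}-5)\pi>0$; you instead rescale $\alpha_j=M^2t_j$ and read off the endpoint asymptotics $\fg\sim 3\pi M$, $\fd\sim(8-4\sqrt{2})\pi M$ directly, landing on the same constant $4\sqrt{2}-5$. These are two faces of the same computation. Your sketch of the $o(M)$ control is adequate: the transverse integrals (e.g.\ $t_2\mapsto (t_1+t_2)t_2/(3t_1+4t_2)^{5/2}$, dominated by $t_2^{-1/2}/8$) are continuous at $t_1=0$ by dominated convergence, so splitting the $t_1$ integral at a small $\delta$ does yield the remainder bound; note also that the lemma only asserts the \emph{existence} of $M_0$, so no concrete $M_0$ is required and the asymptotics alone suffice — your level of rigor here matches the paper's, which likewise relies on asymptotic expansions (supplemented by a numerical remark for the claim that $\fg>\fd$ for \emph{all} $M>1$, which neither you nor the paper proves, and which the lemma does not assert).
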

\begin{proof}
To begin with, we remark that
\bes\label{bound_m2}
M^2 + 1 &=& \left( M - 1 + 1 \right)^2 + 1 = \left( M - 1 \right)^2 + 2 \left( M - 1 \right) + 2 \\
&\leq& \left( M - 1 \right)^2 + 2 \sqrt{2} \left( M - 1 \right) + 2 = \left( M-1 + \sqrt{2}\right)^2\,. 
\ees
Hence 
\bes
\fc &=& 4 \sqrt{\pi} \left( \sqrt{2} \sqrt{M^2 + 1} - M - 1 \right) \\
&\leq&  4 \sqrt{\pi} \left( \sqrt{2} - 1 \right) \left( M - 1\right)\,. \label{ineq_c}
\ees

This immediately yields 
\bes
\fe - \fc &\geq& 3 \sqrt{\frac{\pi}{2}} \left( M - 1 \right) - 4 \sqrt{\pi} \left( \sqrt{2} - 1 \right) \left( M - 1\right) \\
&=& \left( 4 - \frac{5 \sqrt{2}}{2 } \right) \sqrt{\pi} \left( M - 1 \right) \\
&>& 0\,,
\ees
where in the third line we have used that $\sqrt{2} < \frac{3}{2}$.

\

We can prove the second inequality in a similar way:
\bes
\ff - 2 \fa \fc &\geq& 6 \pi \left( M-1\right) \left(1 - \frac{1}{M} \right) \nn \\
&&- 2 \sqrt{2\pi} \left( 1 - \frac{1}{M} \right) 4 \sqrt{\pi} \left( \sqrt{2} - 1 \right) \left( M - 1\right)  \\
&=& \left( 8 \sqrt{2} - 10 \right) \pi \left( M-1\right) \left(1 - \frac{1}{M} \right) \\
&>& 0
\ees
with the last line an immediate consequence of $\sqrt{2} > \frac{5}{4}$.

\

In order to prove the third inequality it is sufficient to show that the derivative of $\fg - \fd$ with respect to $M$ converges to a strictly positive number when $M \to + \infty$. One has 
\bes
\frac{1}{\pi} \frac{\extd \fg}{\extd M} &=& 48  M \left( \int_{1}^{M^2} \extd \alpha_1 \frac{(\alpha_1 + M^2) M^2}{{\alpha_1}^{3/2} [3 \alpha_1 + 4 M^2]^{5/2}} \right.  \\
&& \left. + \int_{1}^{M^2} \extd \alpha_2  \frac{(M^2 + \alpha_2) \alpha_2}{M^3 [3 M^2 + 4 \alpha_2]^{5/2}} \right) \nn \\
&=& \frac{24 \left( 1 + M^2 \right)^2}{M} \frac{1}{\left( 3 + 4 M^2 \right)^{3/2}} - 24 \frac{\left( 1 + M^2 \right)^2}{M^2} \frac{1}{\left(4 + 3 M^2\right)^{3/2}} \nn \\
&\underset{M \gg 1}{\approx}& 3 - \frac{8}{3} \sqrt{3} \frac{1}{M} + \cO(\frac{1}{M^2}) \label{g_dvlt}
\ees
and
\bes
\frac{1}{\pi} \frac{\extd \fd}{\extd M} &=& 8  M \left( \int_{1}^{M^2} \extd \alpha_2 \extd \alpha_3 \frac{1}{M^3  [3 M^2 + 4 (\alpha_2 + \alpha_3)]^{3/2}} \right.  \\ 
&& \left. + 2 \int_{1}^{M^2} \extd \alpha_1 \extd \alpha_2 \frac{1}{ {\alpha_1}^{3/2}  [3 \alpha_1 + 4 (\alpha_2 + M^2)]^{3/2}} \right) \nn \\
&=& - \frac{2}{M^2} \sqrt{8 + 3 M^2} - \frac{2}{M} \sqrt{3 + 8 M^2}  \\
&& + \frac{4}{M^2 (1 + M^2 )} \sqrt{4 + 7 M^2} + \frac{4 M}{1 + M^2 } \sqrt{7 + 4 M^2}  \nn \\
&\underset{M \gg 1}{\approx}& 8 - 4 \sqrt{2} + \left( 4 \sqrt{7} - 2 \sqrt{3} \right) \frac{1}{M} + \cO(\frac{1}{M^2}) \,. \label{d_dvlt}
\ees
This yields in particular: 
\beq
\frac{\extd (\fg- \fd)}{\extd M} (M) \underset{M \to + \infty}{\longrightarrow} \left( 4 \sqrt{2} - 5 \right)\pi > 0\,,
\eeq
which concludes the proof.
\end{proof}
\noindent{\bf{Remark.}} The explicit expressions of $\fg$ and $\fd$ being rather involved, we did not try to prove that $\fg > \fd$ in full generality. However one can easily check numerically that this holds, as one would expect. See Figure \ref{Mpartial_g-d}. 

\begin{figure}[h]
	\begin{center}
  \includegraphics[scale=0.7]{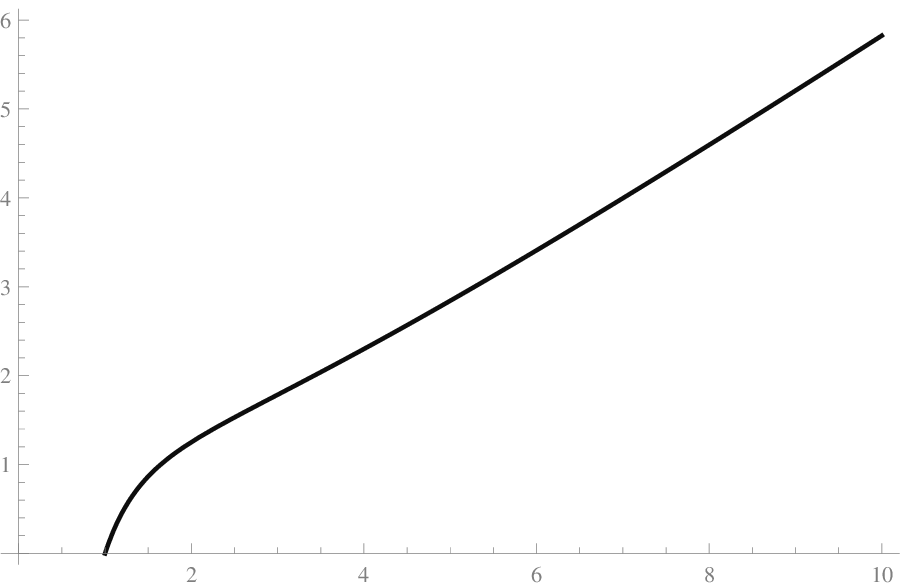}   	       
  \caption{Plot of $\frac{M}{\pi} \frac{\extd (\fg- \fd)}{\extd M}$ as a function of $M$.}\label{Mpartial_g-d}
  \end{center}
\end{figure}

\
 
These inequalities immediately imply that all the coefficients appearing in equations (\ref{r_61}) and (\ref{r_62}) are strictly positive. If one starts with small positive coupling constants at scale $i_0$, $u_{6,1}$ and $u_{6,2}$ will both grow towards the infrared. One would therefore be tempted to conclude that the model is asymptotically free. However, due to the non-linear nature of the flow equations we have derived, only a careful analysis can settle this question, which is the purpose of the next and final section.

\section{Qualitative behaviour in the vicinity of the Gaussian fixed point}\label{sec:portrait}

The main purpose of this last section is to determine under which conditions the theory flows towards a Gaussian model in the deep UV. To this effect, we will analyze the qualitative properties of the continuous flow which underlies the discrete equations derived in the preceding section.  

\subsection{Continuous flow}

The truncated system of equations we have derived in the previous sections takes the form:
\beq
\left\{
\begin{aligned}
u_{2,i-1} - u_{2,i} &= M^2 \left( \frac{M^2 - 1}{M^2} u_{2,i} + A_4 u_{4,i} + A_{6,1} u_{6,1,i} + A_{6,2} u_{6,2,i} \right) \\
u_{4,i-1} - u_{4,i} &= M \left( \frac{M - 1}{M} u_{4,i} + B_{6,1} u_{6,1,i} + B_{6,2} u_{6,2,i} \right) \\ 
u_{6,1,i-1} - u_{6,1,i} &= u_{6,1,i} \left( C_4 u_{4,i} + C_{6,1} u_{6,1,i} + D_{6,2} u_{6,2,i} \right) \\ 
u_{6,2,i-1} - u_{6,2,i} &= u_{6,2,i} \left(D_4 u_{4,i} + D_{6,1} u_{6,1,i} + D_{6,2} u_{6,2,i} \right)  
\end{aligned}
\right. \nn
\eeq
Since the last three equations form a closed system, and we only wish to determine at which conditions the theory becomes Gaussian in the UV, we will ignore the mass coupling constant. We can moreover assume that the qualitative behaviour of the theory will be well-captured by a continuous version of the flow, which we now describe. 

\

Let us introduce new effective coupling constants $u_{4}(t)$, $u_{6,1}(t)$ and $u_{6,2}(t)$, where $t \in \mathbb{R}$ is to be thought of as $(- \log_M (\Lambda))$ and $\Lambda$ is a continuously varying UV cut-off\footnote{Contrary to standard notations, the UV side of the scale ladder corresponds to $t \to - \infty$.}. We then define the autonomous system of first-order differential equations:
\begin{equation}
(S) \quad \left\{
\begin{aligned}
\dot{u}_4 &= M \left( \frac{M - 1}{M} u_{4} + B_{6,1} u_{6,1} + B_{6,2} u_{6,2} \right)  
\\
\dot{u}_{6,1} &= u_{6,1} \left( C_4 u_{4} + C_{6,1} u_{6,1} + D_{6,2} u_{6,2} \right) 
\\
\dot{u}_{6,2} &= u_{6,2} \left(D_4 u_{4} + D_{6,1} u_{6,1} + D_{6,2} u_{6,2} \right) 
\end{aligned}
\right.
\end{equation}
From now on, the term 'Gaussian fixed point' will refer to the fixed point $u_4 = u_{6,1} = u_{6,2} = 0$ of the system $(S)$. Our goal is to determine which solutions of $(S)$ converge to the Gaussian fixed point in the limit $t \to - \infty$.

\

Let us now gather a few useful facts about $(S)$. First, the coefficients $B_{6,1}, \ldots,D_{6,2}$ are all strictly positive. Second, five different planes will play a special role, as subspaces on which one of the components of the flow cancels. These are:
\begin{align}
H_4 : &\qquad 0 = u_4 + 2 \sqrt{2\pi} u_{6,1} + 4 \sqrt{2 \pi} u_{6,2} \equiv f_4 (u_4 , u_{6,1}, u_{6,2} ) \\
H_{6,1}^{(1)} : &\qquad 0 = C_4 u_4 + C_{6,1} u_{6,1} + C_{6,2} u_{6,2} \equiv f_{6,1} (u_4 , u_{6,1}, u_{6,2} ) \\
H_{6,1}^{(2)} : &\qquad 0 = u_{6,1}  \\
H_{6,2}^{(1)} : &\qquad 0 = D_4 u_4 + D_{6,1} u_{6,1} + D_{6,2} u_{6,2} \equiv f_{6,2} (u_4 , u_{6,1}, u_{6,2} )\\
H_{6,2}^{(2)} : &\qquad 0 = u_{6,2} 
\end{align}
In particular, we immediately notice that on $H_{6,1}^{(2)}$ (resp. $H_{6,2}^{(2)}$), both $u_{6,1}= 0$ (resp. $u_{6,2} = 0$) and $\dot{u}_{6,1}= 0$ (resp. $\dot{u}_{6,2} = 0$). With the help of Cauchy--Lipschitz theorem, this immediately implies the following lemma:
\begin{lemma}
The sign of $u_{6,1}$ (resp. $u_{6,2}$) is invariant under the flow of $(S)$. 
\end{lemma}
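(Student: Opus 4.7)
The plan is a standard invariance-of-domain argument based on Cauchy--Lipschitz. The right-hand side of $(S)$ is polynomial in $(u_4, u_{6,1}, u_{6,2})$, hence $C^\infty$ and in particular locally Lipschitz on $\mathbb{R}^3$. Therefore, for any initial data, there is a unique maximal solution of $(S)$ passing through that point.

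First I would establish that the plane $H_{6,1}^{(2)} = \{u_{6,1} = 0\}$ is invariant under the flow. If one restricts $(S)$ to this plane, the third equation gives $\dot u_{6,1} = 0 \cdot (\cdots) = 0$, so any solution of the reduced two-dimensional system with initial condition $(u_4^0, 0, u_{6,2}^0)$ extends to a solution of the full system $(S)$ that remains in $H_{6,1}^{(2)}$ for all times in its interval of existence.

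Next, to prove sign invariance of $u_{6,1}$, suppose by contradiction that there exists a maximal solution $t \mapsto (u_4(t), u_{6,1}(t), u_{6,2}(t))$ whose component $u_{6,1}$ changes sign on its interval of definition. By continuity there exists $t^*$ with $u_{6,1}(t^*)=0$. Consider the solution $\tilde u$ of $(S)$ with initial condition $\tilde u(t^*) = (u_4(t^*), 0, u_{6,2}(t^*))$; by the previous paragraph it satisfies $\tilde u_{6,1}(t) \equiv 0$ on a neighbourhood of $t^*$. By the uniqueness part of Cauchy--Lipschitz, $\tilde u$ and the original trajectory coincide on a neighbourhood of $t^*$, so $u_{6,1}$ is identically zero near $t^*$. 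A standard connectedness argument (the set $\{t : u_{6,1}(t)=0\}$ is both open and closed in the interval of existence) then forces $u_{6,1} \equiv 0$ on the whole interval, contradicting the assumption that it changes sign.

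The same reasoning applied to $H_{6,2}^{(2)} = \{u_{6,2}=0\}$ settles the $u_{6,2}$ case. There is no real obstacle here: the entire content of the lemma is the observation that $u_{6,1}$ and $u_{6,2}$ factor out of their respective equations, so the coordinate hyperplanes $\{u_{6,1}=0\}$ and $\{u_{6,2}=0\}$ are invariant and, by uniqueness, cannot be crossed.
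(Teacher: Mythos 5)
Your proof is correct and is precisely the argument the paper intends: the paper's one-line justification before the lemma is exactly the observation that $\dot u_{6,1}$ (resp.\ $\dot u_{6,2}$) vanishes on the hyperplane $u_{6,1}=0$ (resp.\ $u_{6,2}=0$), so these hyperplanes are invariant and, by Cauchy--Lipschitz uniqueness, cannot be crossed. You have simply written out the standard uniqueness argument in full detail.
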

Thus we will conveniently organize our analysis according to the signs of $u_{6,1}$ and $u_{6,2}$, and as a warming exercise construct the phase portrait of the flow in the invariant subspace $H_{6,2}^{(2)}$. The following set of inequalities will prove useful in this respect.
\begin{lemma}\label{positivebeta}
 For any $M>1$:
 \begin{enumerate}[(i)]
  \item $\beta_{11}(M) \equiv 2\sqrt{2 \pi} C_4 (M) - C_{61} (M) > 0$ ;
  \item $\beta_{21}(M) \equiv 2\sqrt{2 \pi} D_4 (M) - D_{61} (M) > 0$.
  \end{enumerate}
 There exists $M_0 > 1$, such that for any $M> M_0$:
 \begin{enumerate}[(i)]
  \setcounter{enumi}{2}
  \item $\beta_{12}(M) \equiv 4\sqrt{2 \pi} C_4 (M) - C_{62} (M) > 0$ ;
  \item $\beta_{22}(M) \equiv 4\sqrt{2 \pi} D_4 (M) - D_{62} (M) > 0$.
 \end{enumerate}
\end{lemma}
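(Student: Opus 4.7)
The plan is to express each $\beta_{ij}$ in closed form in terms of the fundamental integrals $\fa,\fc,\fd,\fe,\ff,\fg$ and then, as in the proof of the preceding lemma, reduce positivity to elementary inequalities (for parts (i)--(ii)) or to an asymptotic analysis supplemented by a numerical verification (for parts (iii)--(iv)). Reading off from the flow equations (\ref{r_61}) and (\ref{r_62}) gives
\begin{align*}
C_4 &= 3(\fe - \fc), & C_{61} &= 3\ff - 6\fa\fc, & C_{62} &= 6(\fa\fe + \fg - \fa\fc - \fd),\\
D_4 &= 3\fe - 2\fc, & D_{61} &= 3\ff - 4\fa\fc, & D_{62} &= 6(\fa\fe + \fg) - 4\fa\fc - 4\fd.
\end{align*}
The crucial algebraic input, which follows directly from the closed-form expressions of $\fe$ and $\ff$ computed earlier, is the exact identity $2\sqrt{2\pi}\,\fe - \ff = 6\pi(1 - 1/M)$, together with $\sqrt{2\pi} - \fa = \sqrt{2\pi}/M$. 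These are precisely what will kill the leading $\fe$-contributions in each $\beta_{ij}$.

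For (i) and (ii), substituting and regrouping with the identity above yields
\[
\beta_{11} = 18\pi\bigl(1 - \tfrac{1}{M}\bigr) - \tfrac{6\sqrt{2\pi}}{M}\,\fc, \qquad
\beta_{21} = 18\pi\bigl(1 - \tfrac{1}{M}\bigr) - \tfrac{4\sqrt{2\pi}}{M}\,\fc.
\]
Inserting the linear upper bound $\fc \leq 4\sqrt{\pi}(\sqrt{2}-1)(M-1)$ already established in (\ref{ineq_c}) then reduces both statements to the elementary arithmetic checks $24\sqrt{2} > 30$ and $16\sqrt{2} > 14$, in direct parallel to the proofs of $\fe > \fc$ and $\ff > 2\fa\fc$ in the preceding lemma.

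For (iii) and (iv), the analogous regroupings produce
\[
\beta_{12} = 6\sqrt{2\pi}\bigl(1 + \tfrac{1}{M}\bigr)(\fe - \fc) - 6(\fg - \fd),
\]
\[
\beta_{22} = 2\sqrt{2\pi}\bigl(1 + \tfrac{1}{M}\bigr)(3\fe - 2\fc) - 6(\fg - \fd) - 2\fd.
\]
The principal obstacle now surfaces: a straightforward asymptotic expansion shows that the $O(M)$ contributions on the right-hand side of each equation \emph{cancel exactly}, both combinations reducing to a leading coefficient $(24\sqrt{2} - 30)\pi M$. Consequently, the sign for large $M$ is determined by the next-to-leading behaviour. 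I would extract this by Taylor-expanding $\fc$ in $1/M$ from its closed form, and by integrating the asymptotic expansions (\ref{g_dvlt}) and (\ref{d_dvlt}) of $\fg'$ and $\fd'$ one order beyond what was used in the preceding lemma; this produces a finite constant plus a logarithmic remainder coming from the $O(1/M)$ correction to $(\fg-\fd)'$. Because $\fg$ and $\fd$ are non-elementary, and in the same spirit as the remark following the preceding lemma, I would complement this asymptotic argument with a direct numerical plot of $M \mapsto \beta_{12}(M)$ and $M \mapsto \beta_{22}(M)$ (analogous to Figure \ref{Mpartial_g-d}) to certify an explicit threshold $M_0$ beyond which both quantities are strictly positive. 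The main difficulty of the lemma is entirely contained in this subleading asymptotic step; the rest is algebraic rearrangement.
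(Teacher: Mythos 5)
Your proposal is correct and follows the paper's proof essentially step for step: the paper likewise reduces $\beta_{11}$ and $\beta_{21}$ to $18\pi(1-\tfrac{1}{M}) - \tfrac{6\sqrt{2\pi}}{M}\fc$ and $18\pi(1-\tfrac{1}{M}) - \tfrac{4\sqrt{2\pi}}{M}\fc$ via the same identities $2\sqrt{2\pi}\,\fe-\ff=6\pi(1-\tfrac1M)$ and $\sqrt{2\pi}-\fa=\sqrt{2\pi}/M$, applies the bound (\ref{ineq_c}), and for (iii)--(iv) exploits the exact cancellation of the $O(M)$ terms before controlling the subleading behaviour through the $1/M$ corrections in (\ref{g_dvlt}) and (\ref{d_dvlt}) (packaged there as the auxiliary functions $\chi_{12},\chi_{22}$, which vanish at $M=1$ and whose derivatives behave as $c/M$ with $c>0$). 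Two small corrections. First, the cancelling leading coefficient for $\beta_{22}$ is $(16\sqrt{2}-14)\pi M$, not $(24\sqrt{2}-30)\pi M$; only $\beta_{12}$ produces the latter. Second, your write-up stops just short of the one computation that actually closes (iii)--(iv): you must check that the logarithmic remainder you identify carries a \emph{positive} coefficient --- it does, e.g.\ $(4\sqrt{3}+24\sqrt{7})\pi$ for $\beta_{12}$, read off directly from the $1/M$ terms of (\ref{g_dvlt}) and (\ref{d_dvlt}) --- since that, and not a numerical plot on a finite window, is what guarantees positivity for \emph{all} $M>M_0$ as $M\to+\infty$; the numerics are only needed, as in the paper's remark, for the stronger unstated claim that positivity holds on all of $\left]1,+\infty\right[$.
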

\begin{proof}
One first has that:
\bes
\beta_{11} &=& -3 \, \ff + 6 \sqrt{2 \pi}\, \fe - \frac{6 \sqrt{2 \pi}}{M}\, \fc \\ 
\ees
which by use of formula (\ref{ineq_c}) implies
\beq
\beta_{11} \geq \frac{6 (1 + \sqrt{2}) \pi}{M} \left( M - 1 \right) > 0 \,.  
\eeq 
The sign of $\beta_{21}$ is immediately obtained from that of $\beta_{11}$: 
\beq
\beta_{21} = -3 \, \ff + 6 \sqrt{2 \pi}\, \fe - \frac{4 \sqrt{2 \pi}}{M}\, \fc \geq \beta_{11} > 0\,.
\eeq
The last two inequalities are again more involved, due to the complicated expressions for $\fg$ and $\fd$. Using the explicit expression of $\fa$, $\fe$, and applying once more formula (\ref{ineq_c}), one finds:
\bes
\beta_{12} &\geq& \left( 24 \sqrt{2} - 30 \right) \pi \left( 1 + \frac{1}{M}\right) \left( M -1 \right) - 6 \left( \fg - \fd \right) \equiv \chi_{12} (M) \,, \\
\beta_{22} &\geq& \left( 16 \sqrt{2} - 14 \right) \pi \left( M -1 \right) - 6 \fg + 4 \fd \equiv \chi_{22} (M) \,.
\ees
The functions $\chi_{12}$ and $\chi_{22}$ evaluate to $0$ when $M=1$, so to prove that they are positive, it is enough to show that their first derivatives are positive for any $M>1$. These are given by
\bes
\frac{\extd \chi_{12}}{\extd M} (M) &=& \left( 24 \sqrt{2} - 30 \right) \pi \left( 1 + \frac{1}{M^2} \right) - 6 \frac{\extd [\fg - \fd]}{\extd M} (M) \,,\\
\frac{\extd \chi_{22}}{\extd M} (M) &=& \left( 16 \sqrt{2} - 14 \right) \pi \left( 1 + \frac{1}{M^2} \right) -  \frac{\extd [6\fg - 4 \fd]}{\extd M} (M) \,.
\ees
By use of the asymptotic expansions of $\fg$ and $\fd$, see equations (\ref{g_dvlt}) and (\ref{d_dvlt}), one obtains:
\bes
\frac{\extd \chi_{12}}{\extd M} (M) &\underset{M \to + \infty}{\approx}&  \left(4 \sqrt{3} + 24 \sqrt{7}\right) \pi \frac{1}{M} + \cO(\frac{1}{M^2})\,, \\
\frac{\extd \chi_{22}}{\extd M} (M) &\underset{M \to + \infty}{\approx}&  \left(\frac{40}{3} \sqrt{3} + 16 \sqrt{7}\right) \pi \frac{1}{M} + \cO(\frac{1}{M^2}) \,.
\ees
In particular, one concludes that both $\chi_{12} (M)$ and $\chi_{22} (M)$ diverge to $+ \infty$ when $M \to + \infty$, which concludes the proof.
\end{proof}
\noindent{\bf Remark.} Again, even if we have not proven it in full generality, one can convince oneself that the functions $\chi_{12}$ and $\chi_{22}$ are strictly positive on $\left] 1, + \infty \right[$ (see Figure \ref{chi_plots}), which guarantees that the coefficients $\beta_{12}$ and $\beta_{22}$ are strictly positive for any value of $M>1$.
We also provide numerical plots of the $\beta$-coefficients in Figure \ref{beta_plots}, which confirm our claims.

\begin{figure}[h]
  \centering
  \subfloat[$\frac{\extd \chi_{12}}{\extd M}$]{\label{chi_12}\includegraphics[scale=0.6]{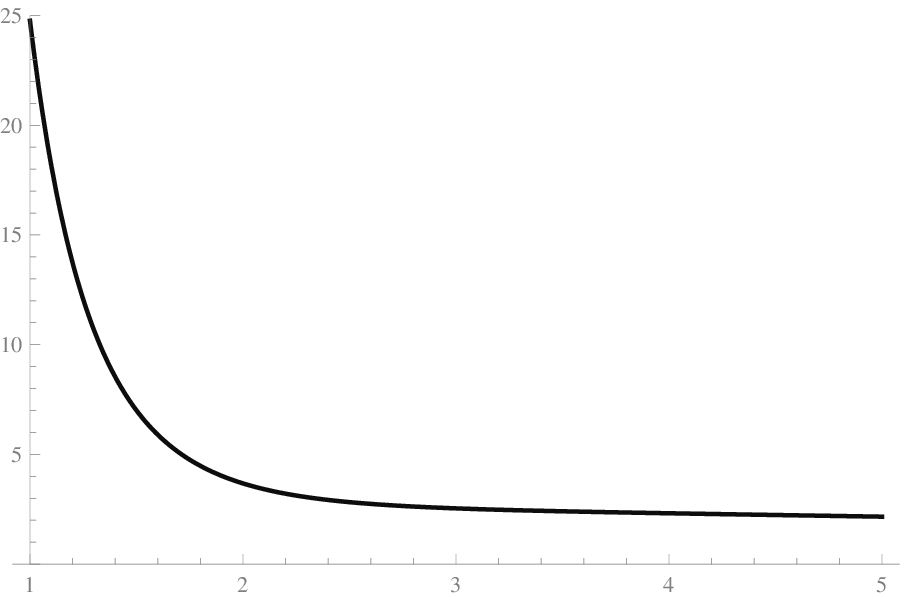}}
  \subfloat[$\frac{\extd \chi_{22}}{\extd M}$]{\label{chi_22}\includegraphics[scale=0.6]{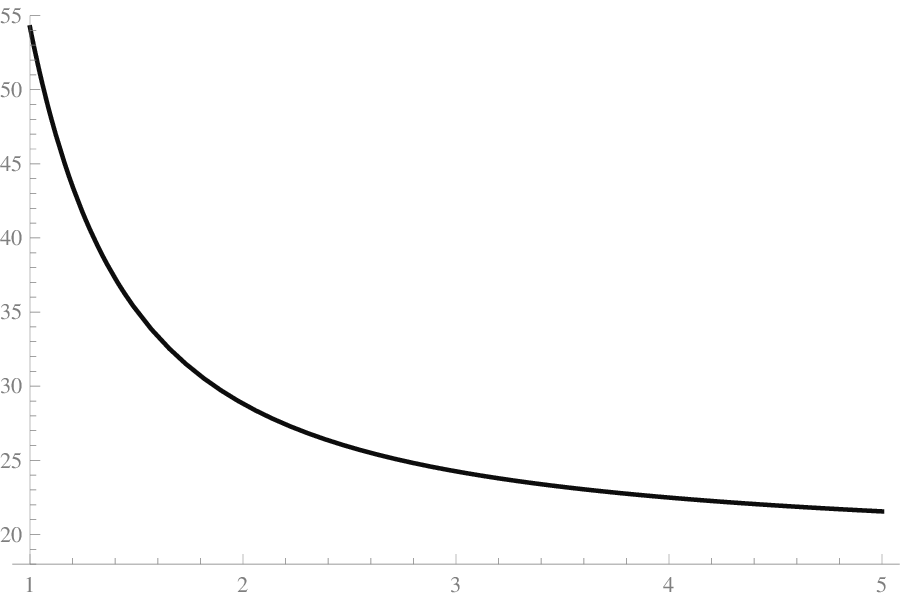}}     
  \caption{Plots of $\chi_{12}'$ and $\chi_{22}'$ as functions of $M$: both are manifestly positive.}\label{chi_plots}
\end{figure}

\begin{figure}[h]
  \centering
  \subfloat[$\beta_{11}$]{\label{beta_11}\includegraphics[scale=0.6]{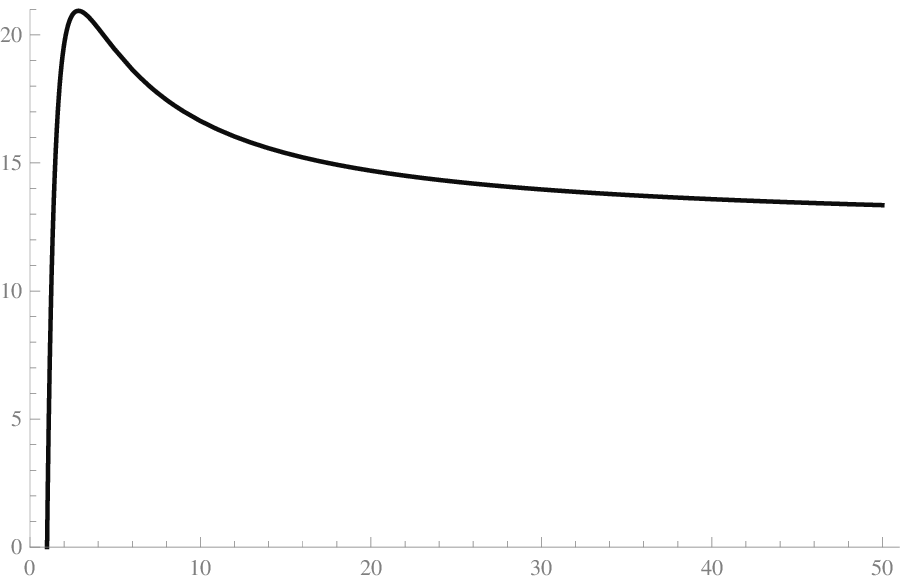}}
  \subfloat[$\beta_{12}$]{\label{beta_12}\includegraphics[scale=0.6]{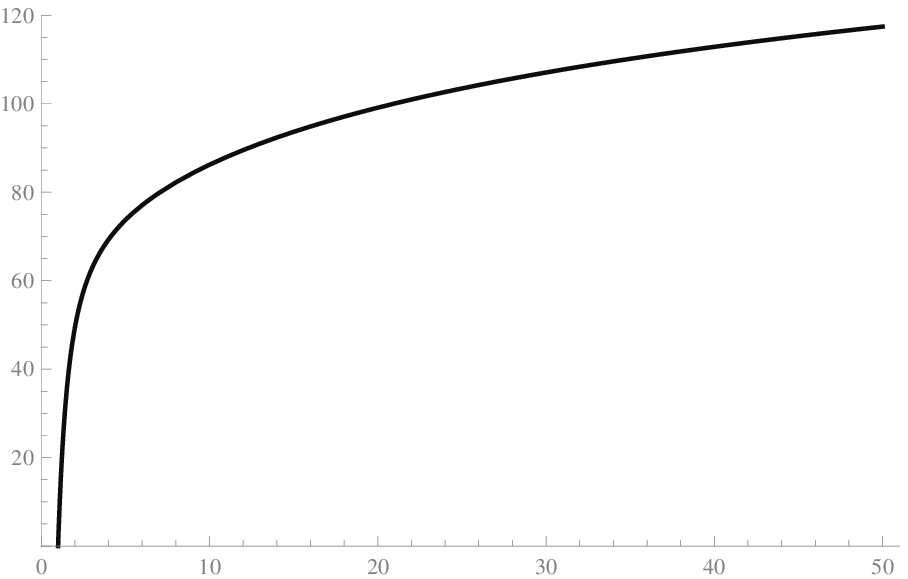}} \\    
  \subfloat[$\beta_{21}$]{\label{beta_21}\includegraphics[scale=0.6]{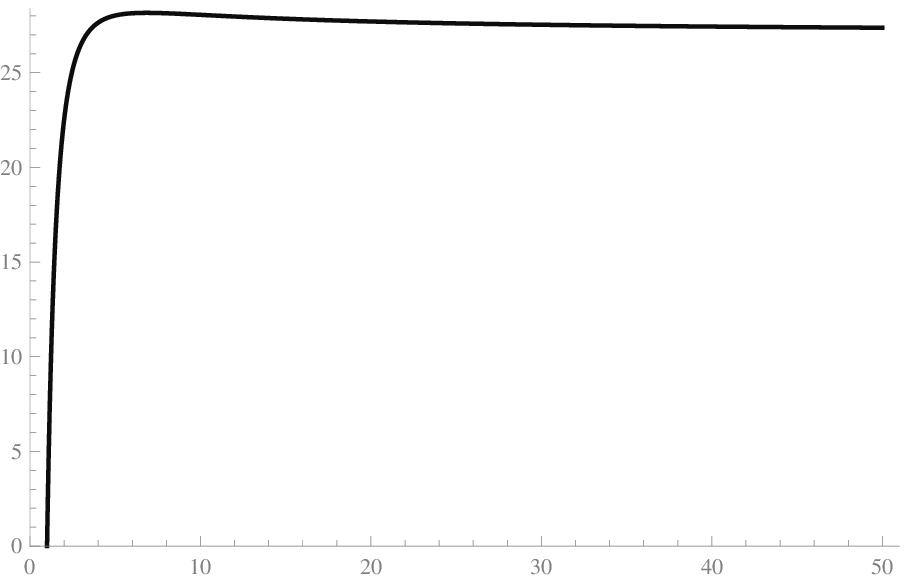}}
  \subfloat[$\beta_{22}$]{\label{beta_22}\includegraphics[scale=0.6]{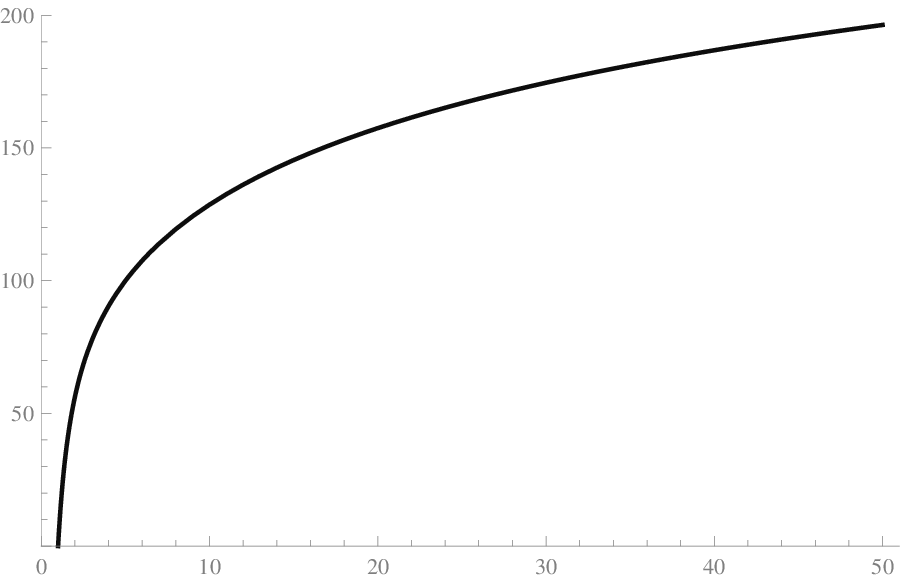}}
  \caption{From left to right: $\beta_{11}$, $\beta_{12}$, $\beta_{21}$ and $\beta_{22}$ as functions of $M>1$.}\label{beta_plots}
\end{figure}

\subsection{Phase portrait on the $u_{6,2} = 0$ invariant subspace}

Let us assume that $u_{6,2} = 0$ and look at the reduced autonomous system:
\begin{equation}
(S') \quad \left\{
\begin{aligned}
\dot{u}_4 &= (M-1) \left( u_4 + 2 \sqrt{2\pi} u_{6,1} \right)  
\\
\dot{u}_{6,1} &= u_{6,1} \left( C_4 u_{4} + C_{6,1} u_{6,1} \right) 
\end{aligned}
\right.
\end{equation}
According to Lemma \ref{positivebeta}, the relative positions of the lines $L_4 \equiv H_{4} \cap H_{6,2}^{(2)}$ and $L_{6,1}^{(1)} \equiv H_{6,1}^{(1)} \cap H_{6,2}^{(2)}$ in the plane $(u_4 , u_{6,1})$ are as represented in Figure \ref{phase_portrait2d}, no matter what the value of $M$ is. We define the open sets:
\begin{align}
 U_{I} &= \left\{ (u_4 , u_{6,1} ) \vert  u_{6,1} > 0, \, f_{6,1} (u_4 , u_{6,1}, 0)>0 \right\} \\
 U_{II} &= \left\{ (u_4 , u_{6,1} ) \vert f_{6,1} (u_4 , u_{6,1}, 0)<0 ,\,f_{4} (u_4 , u_{6,1}, 0)>0 \right\} \\
 U_{III} &= \left\{ (u_4 , u_{6,1} ) \vert u_{6,1} > 0 ,\, f_4 (u_4 , u_{6,1}, 0)<0 \right\} \\
 U_{IV} &= \left\{ (u_4 , u_{6,1} ) \vert u_{6,1} < 0 ,\, f_4 (u_4 , u_{6,1}, 0)>0  \right\} \\
 U_{V} &= \left\{ (u_4 , u_{6,1} ) \vert f_{6,1} (u_4 , u_{6,1}, 0)>0 ,\,f_{4} (u_4 , u_{6,1}, 0)<0 \right\} \\
 U_{VI} &= \left\{ (u_4 , u_{6,1} ) \vert u_{6,1} < 0,\, f_{6,1} (u_4 , u_{6,1}, 0)<0 \right\} 
\end{align}
in each of which the signs of $\dot{u}_4$ and $\dot{u}_{6,1}$ do not change. In Figure \ref{phase_portrait2d}, the small arrows in each region indicate the signs of $\dot{u}_4$ and $\dot{u}_{6,1}$. In particular, one notices that the only two regions in which $\vert u_4 \vert$ and $\vert u_{6,1} \vert$ are both strictly increasing are $U_I$ and $U_{V}$. We therefore expect that any non-trivial trajectory\footnote{That is, any trajectory not contained in the $u_{6,1} = 0$ axis.} converging to the Gaussian fixed point in the $t \to - \infty$ limit would approach it from one of these two regions. Let us now investigate in more details whether one of these scenarii actually occurs.

\

In the following, we denote by $\bu(t) = (u_{4} (t), u_{6,1} (t))$ a maximal solution of $(S')$, defined for any $t \in \mathbb{R}$. Let us first focus on the case $u_{6,1}>0$. One can first prove the following lemma.
\begin{lemma}\label{lem_2d1}
 Let $t_1 \in \mathbb{R}$. 
 If $\bu(t_1) \in U_{I} \cup U_{III}$, then one can find $t_0 < t_1$ such that $\bu(t_0) \in U_{II}$.
\end{lemma}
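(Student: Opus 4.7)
The strategy is to show, in each of the two cases, that running $\bu$ backward in time forces it to leave its starting region, and that the only admissible exit puts it into $U_{II}$. Since the sign of $u_{6,1}$ is invariant under the flow, the axis $\{u_{6,1}=0\}$ is inaccessible, so the only candidate exit boundaries are $L_{6,1}^{(1)}\cap\{u_{6,1}>0\}$ for $U_I$ and $L_4\cap\{u_{6,1}>0\}$ for $U_{III}$. A direct computation gives $f_4 = (\beta_{11}/C_4)u_{6,1}>0$ on $L_{6,1}^{(1)}\cap\{u_{6,1}>0\}$ and, symmetrically, $f_{6,1} = -\beta_{11}u_{6,1}<0$ on $L_4\cap\{u_{6,1}>0\}$; using $\beta_{11}>0$ from Lemma \ref{positivebeta}, the adjacent region is in either case $U_{II}$. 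The problem thus reduces to ruling out that $\bu$ remains in its initial region for all $t\leq t_1$.

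For $\bu(t_1)\in U_{III}$ a Riccati bound does the job. The defining inequality $u_4<-2\sqrt{2\pi}\,u_{6,1}$, together with $\beta_{11}>0$, yields $f_{6,1}<-\beta_{11}u_{6,1}$ in $U_{III}$, hence $\dot u_{6,1}<-\beta_{11}u_{6,1}^2$. Dividing by $-u_{6,1}^2$ and integrating backward,
\[
\frac{1}{u_{6,1}(t_0)} < \frac{1}{u_{6,1}(t_1)} - \beta_{11}(t_1-t_0),
\]
which becomes impossible for $t_1-t_0 > (\beta_{11}u_{6,1}(t_1))^{-1}$, forcing $\bu$ to leave $U_{III}$ in finite backward time.

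For $\bu(t_1)\in U_I$ I will set $L\equiv f_4 = u_4+2\sqrt{2\pi}\,u_{6,1}$ and use two inequalities valid inside $U_I$: $L > (\beta_{11}/C_4)u_{6,1}>0$ and $f_{6,1}\leq C_4 L$ (both relying on $\beta_{11}>0$). The time derivative reads
\[
\dot L = (M-1)L + 2\sqrt{2\pi}\,u_{6,1}f_{6,1} \geq (M-1)L,
\]
so by Gronwall $L(t)\leq L(t_1)e^{(M-1)(t-t_1)}$ as long as $\bu$ stays in $U_I$, and $L$ is integrable down to $-\infty$. Using $f_{6,1}\leq C_4 L$ one then gets the uniform bound
\[
\int_{t_0}^{t_1} f_{6,1}(\bu(s))\,ds \;\leq\; \frac{C_4}{M-1}L(t_1),
\]
which via the identity $\log u_{6,1}(t_1)-\log u_{6,1}(t_0) = \int_{t_0}^{t_1}f_{6,1}\,ds$ implies $u_{6,1}(t_0)\geq c>0$ uniformly in $t_0$. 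But then $L(t_0)>(\beta_{11}/C_4)c>0$ uniformly as well, contradicting the exponential decay above; hence $\bu$ must exit $U_I$ at some finite $t_0'<t_1$, and by the boundary analysis enter $U_{II}$ immediately beyond.

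The hard part is the $U_I$ case: backward monotone decrease of $u_4$ and $u_{6,1}$ is a priori compatible with asymptotic convergence to the Gaussian fixed point along the center direction $\sigma_{6,1}$, and no Riccati blow-up is available there. The argument side-steps this by exploiting the unstable linear direction $\sigma_4$—responsible for the $(M-1)L$ term in $\dot L$—to force an exponential backward decay of $L$, and then using the sign condition $\beta_{11}>0$ to transfer that decay to $u_{6,1}$ via the logarithmic identity. The key structural input is thus Lemma \ref{positivebeta}, which simultaneously supplies the lower bound $L>(\beta_{11}/C_4)u_{6,1}$, the upper bound $f_{6,1}\leq C_4 L$, and the Riccati coefficient $\beta_{11}$.
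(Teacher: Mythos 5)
Your proof is correct, and it rests on exactly the same pillars as the paper's: a contradiction argument ruling out backward permanence in $\overline{U_{I}}$ and $\overline{U_{III}}$, the inequality $\beta_{11}>0$ from Lemma \ref{positivebeta} both to locate the exit boundary relative to $U_{II}$ and to control $f_{6,1}$ by $f_4$, and the logarithmic identity $\dot{u}_{6,1}/u_{6,1}=f_{6,1}$. The differences are in how the final contradiction is extracted. For $U_{III}$ the paper uses a uniform negative bound on $\dot{u}_{6,1}$ to force $u_{6,1}\to+\infty$ backward and hence a crossing of $L_4$; your Riccati inequality $\dot{u}_{6,1}\leq-\beta_{11}u_{6,1}^2$ gives the same conclusion with an explicit bound $t_1-t_0<(\beta_{11}u_{6,1}(t_1))^{-1}$ on the backward exit time, which is slightly sharper. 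For $U_{I}$ the paper first argues that a trajectory confined to $\overline{U_{I}}$ is monotone and bounded backward, hence converges to the only fixed point (the origin), and then contradicts this with $u_{6,1}(t)\geq u_{6,1}(t_1)\exp\left(K(u_4(t)-u_4(t_1))\right)$, $K=C_4/(M-1)$ --- the integrated form of your $f_{6,1}\leq C_4 L=K\dot{u}_4$. You instead run a Gronwall estimate on $L=f_4$ itself, $\dot L\geq(M-1)L$, to get exponential backward decay of $L$, deduce that $\int f_{6,1}$ is uniformly bounded, hence $u_{6,1}$ and therefore $L$ are bounded below, contradicting the decay. This bypasses the ``convergent trajectory of an autonomous system must limit to a fixed point'' step and is a bit more self-contained, at the cost of juggling one more auxiliary quantity. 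Both routes are sound.
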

\begin{proof}
Let us first assume that $\bu(t_1) \in U_{III}$. 
If $\bu(t) \in \overline{U_{III}}$ for all $t \leq t_1$, then $u_{4} (t)$ and $u_{6,1} (t)$ are both decreasing with respect to $t$. Moreover, $\dot{u}_{6,1} (t) \leq \dot{u}_{6,1} (t_1) < 0$, and hence $u_{6,1}$ diverges in the limit $t \to - \infty$. Since $u_{4} (t) > u_{4} (t_1)$ for all $t \leq t_1$, $\bu (t)$ necessarily crosses $L_4$, which yields a contradiction. Hence there must exists $t_0 < t_1$ such that $\bu(t_0) \in U_{II}$. 


The non-trivial part of this lemma concerns the situation in which $\bu(t_1) \in U_I$. If $\bu(t) \in \overline{U_{I}}$ for all $t \leq t_1$, then $u_{4}$ and $u_{6,1}$ are both monotonically increasing and bounded from below. Therefore they must converge in $- \infty$, and since there is no other fixed point available, $\bu(t)$ converges to $0$ in the limit $t \to - \infty$. Now, one also has:
\beq
\frac{\dot{u}_{6,1}}{u_{6,1}} =  C_4 u_{4} + C_{6,1} u_{6,1} \leq C_4 u_4 + 2 \sqrt{2 \pi} C_4 u_{6,1}\leq K \dot{u}_{4}\,,
\eeq
where $K= \frac{C_4}{M-1} >0$ and we made use of Lemma \ref{positivebeta}. This implies that
\begin{align}
\forall t \in \left]- \infty , t_1 \right]\,, \qquad u_{6,1} (t) &\geq u_{6,1} (t_1) \exp\left( K (u_{4}(t) - u_{4}(t_1) \right)\,, 
\end{align}
which is incompatible with $\underset{t \to - \infty}{\lim} \bu(t) =0$. Hence there must be a $t_0 < t_1$ such that $\bu(t_0) \in U_{II}$.
%
\end{proof}

This is sufficient to conclude with the following proposition.
\begin{proposition}
Let $t_0 \in \mathbb{R}$. There exists no solution $\bf{u}$ of $(S')$ such that:
\beq
u_{6,1}(t_0) > 0 \qquad {\rm{and}} \qquad \underset{t \to - \infty}{\lim} \bu(t) = 0 \,.
\eeq
\end{proposition}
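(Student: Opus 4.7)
The plan is to argue by contradiction and exhibit a single function, strictly monotone along the flow on the half-space $\{u_{6,1}>0\}$, whose behavior near the origin is incompatible with the hypothesis. This will bypass any further case analysis using the regions $U_I$, $U_{II}$, $U_{III}$.

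Assume $\bu(t)=(u_4(t),u_{6,1}(t))$ is a solution of $(S')$ with $u_{6,1}(t_0)>0$ and $\bu(t)\to 0$ as $t\to -\infty$. By the sign-invariance lemma applied to $u_{6,1}$, one has $u_{6,1}(t)>0$ for every $t\in\mathbb{R}$, so the functional
\[
V(t):=\ln u_{6,1}(t)-K\,u_4(t),\qquad K:=\frac{C_4}{M-1}>0,
\]
is well defined along the trajectory. The key computation will be
\[
\dot V=\frac{\dot u_{6,1}}{u_{6,1}}-K\,\dot u_4
=(C_4 u_4+C_{6,1}u_{6,1})-C_4\,(u_4+2\sqrt{2\pi}\,u_{6,1})
=-\beta_{11}\,u_{6,1},
\]
which is strictly negative on $\{u_{6,1}>0\}$ by Lemma \ref{positivebeta}(i). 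Hence $V$ is strictly decreasing in $t$.

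Consequently, for every $t\le t_0$ one has $V(t)\ge V(t_0)$, i.e.
\[
\ln u_{6,1}(t)\ \ge\ V(t_0)+K\,u_4(t).
\]
Letting $t\to -\infty$, the hypothesis $\bu(t)\to 0$ forces $u_4(t)\to 0$, so the right-hand side tends to the finite limit $V(t_0)$; but the left-hand side tends to $-\infty$ because $u_{6,1}(t)\to 0^+$. This is the desired contradiction.

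I expect no significant obstacle beyond the identification of $V$. That choice is essentially dictated by the argument used in the $U_I$-case of the proof of Lemma \ref{lem_2d1}: the inequality $\dot u_{6,1}/u_{6,1}\le K\dot u_4$ appearing there is in fact a global identity (equivalent, via $\beta_{11}>0$, to the monotonicity of $V$) holding throughout the half-space $\{u_{6,1}>0\}$ rather than only on $U_I$. It is precisely this uniformity that makes the three-region case distinction unnecessary and yields the proposition in one stroke.
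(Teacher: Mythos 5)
Your proof is correct, and it reorganizes the argument in a genuinely more economical way than the paper. The computation $\dot V=(C_{6,1}-2\sqrt{2\pi}C_4)\,u_{6,1}=-\beta_{11}u_{6,1}$ checks out against $(S')$ (note that $K\dot u_4=C_4(u_4+2\sqrt{2\pi}u_{6,1})$ exactly, so the second inequality in the paper's chain $\dot u_{6,1}/u_{6,1}\leq C_4u_4+2\sqrt{2\pi}C_4u_{6,1}\leq K\dot u_4$ is in fact an equality), and together with the sign-invariance of $u_{6,1}$ and Lemma \ref{positivebeta}(i) — which is one of the two inequalities the paper proves rigorously for all $M>1$, with no numerical input — the monotonicity of $V$ and the divergence of $\ln u_{6,1}$ give the contradiction cleanly. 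The paper instead decomposes the half-plane into $U_I$, $U_{II}$, $U_{III}$, shows (Lemma \ref{lem_2d1}) that every backward trajectory must enter $U_{II}$, and then uses the stability of $U_{II}$ under the time-reversed flow; your observation that the estimate used only in the $U_I$ case actually holds on all of $\{u_{6,1}>0\}$ is what collapses this case analysis. What your Lyapunov-type functional buys is brevity and transparency about which single inequality ($\beta_{11}>0$) drives non-freedom; it also extends directly to the three-dimensional Proposition \ref{propo_positif}, since with $V=\ln u_{6,1}-K_1u_4$ one gets $\dot V=-\beta_{11}u_{6,1}-\beta_{12}u_{6,2}\leq 0$ on the positive octant. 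What the paper's region-by-region analysis buys, and what your argument does not replace, is the finer qualitative information about the phase portrait (the classification of forward asymptotics in the subsequent Remark, and the stable-region machinery reused for the $u_{6,1}\leq 0$ sector and Figure \ref{phase_portrait2d}); so your proof is a clean substitute for this proposition but not for the surrounding analysis.
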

\begin{proof}
The flow of $(S')$ is outward-pointing on the boundary of $U_{II}$. Therefore $U_{II}$ is stable with respect to the time-reversed flow of $(S')$. Combined with the fact that in this region both $\vert u_4 \vert$ and $\vert u_{6,1} \vert$ are increasing with respect to this time-reversed flow, the previous Lemma provides the desired result. 
\end{proof}

\noindent{\bf{Remark.}} We can easily go a bit further and show that the trajectories in the half-space $\left\{ u_{6,1} > 0 \right\}$ are of two types: those verifying  $\underset{t \to + \infty}{\lim} u_{4}(t) = - \infty$ and $\underset{t \to + \infty}{\lim} u_{6,1}(t) = 0$; and those such that $\underset{t \to + \infty}{\lim} u_{4}(t) = \underset{t \to + \infty}{\lim} u_{6,1}(t) = + \infty$. In both cases, $\underset{t \to + \infty}{\lim} u_{4}(t) = - \infty$ and $\underset{t \to + \infty}{\lim} u_{6,1}(t) = + \infty$. See Figure \ref{phase_portrait2d}.

 \begin{figure}[h]
  \centering
  \includegraphics[scale=0.8]{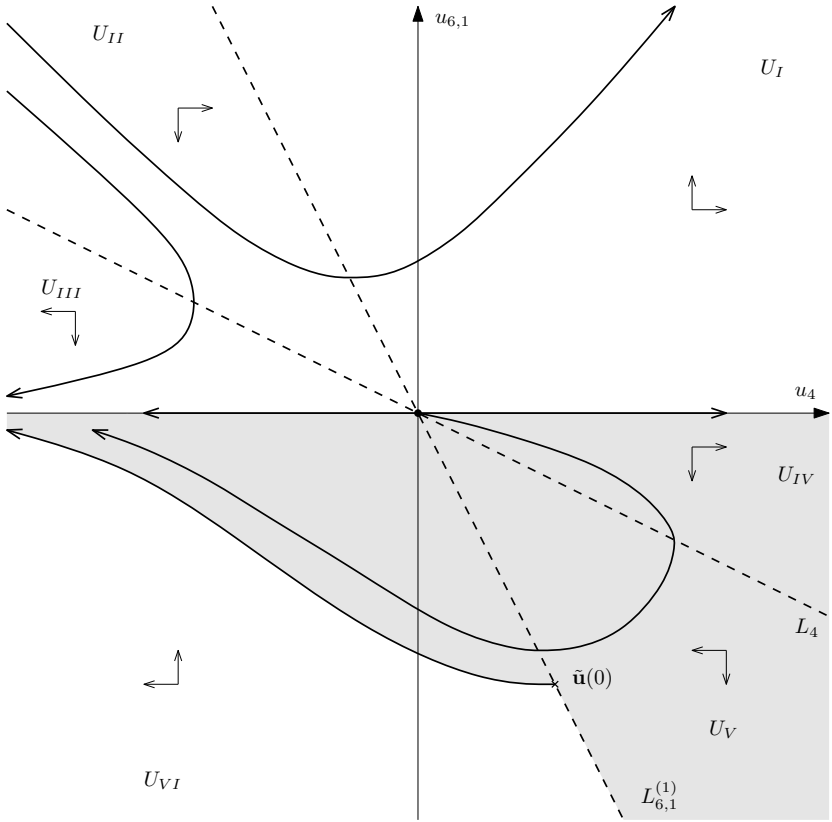}     
  \caption{Phase portrait in the $(u_{4}, u_{6,1})$ plane.}\label{phase_portrait2d}
\end{figure}

\

Let us now consider the other half of the $(u_4 , u_{6,1})$ plane. An analogue of Lemma \ref{lem_2d1} holds in this sector.
\begin{lemma}\label{lem_2d2}
 Let $t_1 \in \mathbb{R}$. 
 If $\bu(t_1) \in U_{V}$, then there exists $t_0 , t_2 \in \mathbb{R}$ such that:
 \beq
 \left\{
 \begin{aligned}
 &t_0 < t_1 < t_2\\ 
 &\bu(t_0) \in U_{IV} \\
 &\bu(t_2) \in U_{VI} 
 \end{aligned}
 \right.
 \eeq
Moreover:
\beq
\underset{t \to + \infty}{\lim} u_{4}(t) = - \infty\,.
\eeq
\end{lemma}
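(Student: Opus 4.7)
The plan is to prove the three assertions in sequence, using as a common algebraic workhorse the derivatives of the linear forms $f_4$ and $f_{6,1}$ along the flow of $(S')$:
\begin{align*}
\dot{f_4} &= (M-1)\,f_4 + 2\sqrt{2\pi}\,u_{6,1}\,f_{6,1},\\
\dot{f_{6,1}} &= C_4(M-1)\,f_4 + C_{6,1}\,u_{6,1}\,f_{6,1}.
\end{align*}
In $U_V$ (where $f_4<0$, $f_{6,1}>0$, $u_{6,1}<0$) both expressions are strictly negative, so $f_4$ and $f_{6,1}$ are monotonically decreasing along the flow. Evaluating the same expressions on the separating lines $L_4$ and $L_{6,1}^{(1)}$ with $u_{6,1}<0$ shows furthermore that the flow crosses $L_4$ transversally from $\{f_4>0\}$ into $\{f_4<0\}$ and crosses $L_{6,1}^{(1)}$ transversally from $\{f_{6,1}>0\}$ into $\{f_{6,1}<0\}$; these transversality statements will pin down the direction of every region transition.

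To obtain $t_2$, I assume by contradiction that $\bu(t)\in U_V$ for all $t\geq t_1$. The monotonicity of $f_4$ gives $f_4(t)\leq f_4(t_1)<0$, hence $\dot{u}_4=(M-1)f_4\leq (M-1)f_4(t_1)<0$ uniformly, so $u_4(t)\to-\infty$ at least linearly; since $u_{6,1}$ stays bounded this forces $f_{6,1}=C_4u_4+C_{6,1}u_{6,1}\to-\infty$, contradicting $f_{6,1}>0$ in $U_V$. The trajectory must therefore exit $U_V$ in forward time; exit through $L_4$ is ruled out because $f_4$ remains strictly bounded away from $0$, and exit through the invariant axis $u_{6,1}=0$ is impossible, so the only possibility is to cross $L_{6,1}^{(1)}$, placing $\bu(t_2)$ in $U_{VI}$.

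To obtain $t_0$, I assume $\bu(t)\in U_V$ for all $t\leq t_1$. Since $\dot{u}_4<0$ in $U_V$, $u_4(t)$ is strictly increasing as $t$ decreases, and is bounded above by $2\sqrt{2\pi}|u_{6,1}(t_1)|$ (from $f_4<0$ together with the backward monotonicity of $|u_{6,1}|$); it therefore converges as $t\to-\infty$ to some $u^{**}\geq u_4(t_1)>0$, and $u_{6,1}(t)$ converges similarly to some $u^{*}\leq 0$. Continuity of $\dot{\bu}$ as a function of $\bu$ combined with the convergence of $u_4$ forces $\dot{u}_4(t)\to 0$ and $\dot{u}_{6,1}(t)\to 0$, so $(u^{**},u^{*})$ is a fixed point of $(S')$; but the only fixed point is the origin, contradicting $u^{**}>0$. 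Hence $\bu$ exits $U_V$ backward, and the transversality analysis above rules out $L_{6,1}^{(1)}$ (the forward flow there points $U_V\to U_{VI}$, so backward it points the opposite way), leaving $L_4$ as the only possible exit, with $\bu$ in $U_{IV}$ just before the crossing.

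Finally, to show $u_4(t)\to-\infty$, note that $\bu(t)\in U_{VI}$ for all $t\geq t_2$ because the flow cannot re-cross $L_{6,1}^{(1)}$ back into $U_V$. In $U_{VI}$ one readily computes $|f_4|\geq (\beta_{11}/C_4)|u_{6,1}|$, hence $|\dot{u}_4|\geq (M-1)(\beta_{11}/C_4)|u_{6,1}|$, and $u_4$ is strictly decreasing. If $u_4$ stayed positive for all $t\geq t_2$, the bound $\dot{u}_{6,1}=u_{6,1}f_{6,1}\leq C_{6,1}u_{6,1}^2$ (valid when $u_4\geq 0$) would give $|u_{6,1}(t)|\geq 1/(a+b(t-t_2))$ for suitable constants $a,b>0$, so that $\int_{t_2}^{\infty}|u_{6,1}|\,\mathrm{d}t=+\infty$, forcing $u_4\to-\infty$, a contradiction. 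Hence $u_4$ becomes negative at some finite $t_3>t_2$; for $t\geq t_3$, $\dot{u}_4=(M-1)(u_4+2\sqrt{2\pi}u_{6,1})<(M-1)u_4$ makes $t\mapsto u_4(t)e^{-(M-1)t}$ strictly decreasing from a strictly negative value, and $u_4(t)\to-\infty$ exponentially follows. The main obstacle I anticipate is the existence of $t_0$: the monotonicity-plus-fixed-point argument is clean once set up, but careful bookkeeping of the signs on $L_4$ and $L_{6,1}^{(1)}$ is needed to ensure that the backward exit indeed lands in $U_{IV}$ rather than elsewhere.
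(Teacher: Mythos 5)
Your proof is correct, and while its skeleton (exclude that the trajectory stays in $U_V$ for all backward or all forward times, identify the exit face, then force $u_4\to-\infty$ inside $U_{VI}$) matches the paper's, two of the three sub-arguments take genuinely different routes. For the backward exit the paper and you both use monotonicity plus boundedness to force convergence to the unique (Gaussian) fixed point and derive a contradiction from $u_4\geq u_4(t_1)>0$; but you additionally supply the transversality computation on $L_4$ and $L_{6,1}^{(1)}$ (via $\dot f_4=(M-1)f_4+2\sqrt{2\pi}\,u_{6,1}f_{6,1}$ and $\dot f_{6,1}=C_4(M-1)f_4+C_{6,1}u_{6,1}f_{6,1}$, whose signs on the separating lines hinge on $\beta_{11}>0$), which the paper leaves implicit when it asserts that the exit lands in $U_{IV}$ rather than $U_{VI}$ -- this is a genuine value-add. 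For the forward exit, the paper again argues by convergence to a fixed point, whereas you exploit the monotone decrease of $f_4$ along the flow in $U_V$ to get a uniform negative drift $\dot u_4\leq(M-1)f_4(t_1)<0$ and hence a linear escape of $u_4$; note that your parenthetical ``since $u_{6,1}$ stays bounded'' is not needed (and not obviously true a priori): what you actually use is that $u_{6,1}$ is decreasing and negative, so $C_{6,1}u_{6,1}$ is bounded \emph{above}, which already gives $f_{6,1}\to-\infty$ (or, even more directly, $u_4\to-\infty$ contradicts $u_4>0$ in $U_V$). Finally, for $u_4\to-\infty$ the paper integrates the log-derivative comparison $\dot u_{6,1}/u_{6,1}\geq K\dot u_4$ (Lemma~\ref{positivebeta}) to show that a finite limit of $u_4$ would pin $u_{6,1}$ away from zero, contradicting convergence to the only fixed point; you instead combine the lower bound $|f_4|\geq(\beta_{11}/C_4)|u_{6,1}|$ with a Riccati comparison $\dot u_{6,1}\leq C_{6,1}u_{6,1}^2$ to produce a non-integrable tail for $|u_{6,1}|$, forcing $u_4$ to become negative, and then a Gronwall argument giving exponential divergence. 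Your route is longer but more quantitative (it yields the rate of divergence); the paper's is shorter because it reuses the same comparison inequality already needed elsewhere.
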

\begin{proof}
Assume $\bu(t_1) \in U_{V}$. 
 
 If $\bu(t) \in \overline{U_{V}}$ for all $t \leq t_1$, then $u_4$ and $u_{6,1}$ are monotonically decreasing on $\left]- \infty, t_1 \right]$, and bounded from above. $\bu(t)$ must therefore converge in the limit $t \to - \infty$, hence to the Gaussian fixed point. This is impossible since $u_4(t) \geq u_{4}(t_1)$ for all $t \leq t_1$, therefore there must be a $t_0 < t_1$ such that $\bu(t_0) \in U_{IV}$.

 
 Similarly, we can show that $\bu(t_2)$ has to be in region $U_{VI}$ for some $t_2 > t_1$. This region is stable under the flow of $(S)$. Therefore $u_{6,1}$ is increasing, bounded on $\left[t_2, + \infty \right[$, and must therefore converge in $+ \infty$. 
$u_4$ is decreasing, and similarly either converges or tends to $- \infty$ in $+ \infty$. In order to determine which of these two cases arises, notice that in the region $U_{VI}$ Lemma \ref{positivebeta} ensures that:
\begin{align}
\frac{\dot{u}_{6,1}}{u_{6,1}} = C_4 u_4 + C_{6,1} u_{6,1} \geq C_4 u_4 + 2 \sqrt{2 \pi} C_4 u_{6,1} \geq K \dot{u}_4\,,
\end{align}
where $K = \frac{C_4}{M-1} > 0$. This yields the bound
\beq
u_{6,1} (t) \leq u_{6,1} (t_2) \exp\left( K (u_4 (t) - u_4 (t_2) \right)\,,
\eeq
for any $t \geq t_2$. It therefore appears that if $u_4$ were to converge to a finite value in $+ \infty$, $u_{6,1}$ would converge to a non-zero value, and the limit value of $\bu(t)$ would not be a fixed point. Hence $u_4(t)$ must tend to $- \infty$ when $t \to + \infty$.

%
\end{proof}
This allows to conclude that the Gaussian fixed point is UV attractive relative to negative perturbations of $u_{6,1}$.
\begin{proposition}
Let $\bu$ be a solution of $(S')$ and $t_2 \in \mathbb{R}$. There exists a neighborhood $V$ of $0$ such that: 
\beq
\left\{\begin{aligned}
u_{6,1}(t_0) &\leq 0 \\
\bu(t_0) &\in V
\end{aligned}\right. \quad \Rightarrow \quad \underset{t \to - \infty}{\lim} \bu(t) =0\,.
\eeq
\end{proposition}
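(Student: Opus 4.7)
The plan is to split the argument according to the sign of $u_{6,1}(t_0)$, exploiting the sign-invariance of $u_{6,1}$ and the phase portrait built in the previous paragraphs. The trivial case $u_{6,1}(t_0)=0$ is immediate: by uniqueness $u_{6,1}\equiv 0$, and then $\dot{u}_4=(M-1)\,u_4$ yields $u_4(t)=u_4(t_0)\,\e^{(M-1)(t-t_0)}\to 0$ as $t\to-\infty$ for any $u_4(t_0)$. So the real content concerns $u_{6,1}(t_0)<0$, and I would further split according to which of the three regions $U_{IV}$, $U_V$, $U_{VI}$ contains $\bu(t_0)$.

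The next step is to establish the $U_{IV}$ analog of Lemma \ref{lem_2d2}: if $\bu(t_1)\in U_{IV}$, then $\bu(t)\in U_{IV}$ for every $t\le t_1$ and $\lim_{t\to-\infty}\bu(t)=0$. Backward invariance is geometric: on $L_4\cap\{u_{6,1}<0\}$ one has $\dot{u}_4=0$ and $\dot{u}_{6,1}=-\beta_{11}u_{6,1}^{2}<0$, so $L_4$ is only crossed in the direction $U_{IV}\to U_V$ in forward time and cannot be crossed by the backward flow out of $U_{IV}$; the axis $\{u_{6,1}=0\}$ is also invariant. Inside $U_{IV}$ the strict inequalities $u_4>0$, $\dot{u}_4>0$, $\dot{u}_{6,1}<0$ all hold, so in backward time $u_4$ is decreasing and bounded below by $0$, while $u_{6,1}$ is increasing and bounded above by $0$; monotone convergence forces $\bu(t)$ to tend to a fixed point of $(S')$, which in this half-plane can only be the origin. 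This handles $\bu(t_0)\in U_{IV}$ directly, and the case $\bu(t_0)\in U_V$ reduces to it via Lemma \ref{lem_2d2}, which provides some $t_1<t_0$ with $\bu(t_1)\in U_{IV}$.

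The delicate case is $\bu(t_0)\in U_{VI}$, and this is what pins down how small $V$ must be. Here I would show that the backward trajectory leaves $U_{VI}$ across $L_{6,1}^{(1)}$ in finite backward time, after which the previous case applies. The natural monitor is $f_{6,1}(\bu(t))=C_4 u_4(t)+C_{6,1} u_{6,1}(t)$, which is negative on $U_{VI}$ and vanishes on $L_{6,1}^{(1)}$; differentiating along the flow gives
\[
-\dot{f}_{6,1}=-C_4(M-1)\,f_4-C_{6,1}\,u_{6,1}\,f_{6,1},
\]
an expression with a priori mixed sign on $U_{VI}$ (first term positive, second negative). The inequality $\beta_{11}>0$ of Lemma \ref{positivebeta} translates, on $U_{VI}$, into the lower bound $|f_4|\ge(\beta_{11}/C_4)\,|u_{6,1}|$, so the first term is at least $(M-1)\beta_{11}\,|u_{6,1}|$ in absolute value while the second is bounded by $C_{6,1}\,|u_{6,1}|\,|f_{6,1}|$. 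Choosing $V$ so that $|f_{6,1}|\le(M-1)\beta_{11}/(2C_{6,1})$ uniformly on $V$ makes the first term dominate by a factor of at least $2$, yielding $-\dot{f}_{6,1}\ge\tfrac{1}{2}(M-1)\beta_{11}\,|u_{6,1}|$; since $|u_{6,1}|$ is non-decreasing in backward time on $U_{VI}$, $f_{6,1}$ reaches $0$ in finite backward time and the trajectory enters $U_V$.

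The main obstacle, in my view, is precisely this quantitative $U_{VI}$ step: turning the mixed-sign derivative of $f_{6,1}$ into a definite-sign estimate valid uniformly on $V\cap U_{VI}$ rests on the bound $|f_4|\gtrsim|u_{6,1}|$ inside $U_{VI}$, itself a direct consequence of $\beta_{11}>0$. Everything else is monotone-convergence argumentation fully analogous to Lemmas \ref{lem_2d1} and \ref{lem_2d2}, and the only new ingredient beyond the already established machinery is the uniform estimate just described, which is what dictates the size of $V$.
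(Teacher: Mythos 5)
Your proof is correct, and while it shares the overall backward-flow, region-by-region framework of the paper's proof (sign invariance of $u_{6,1}$, backward invariance of $U_{IV}$ and monotone convergence there, Lemma \ref{lem_2d2} to pass from $U_V$ back into $U_{IV}$, and $\beta_{11}>0$ from Lemma \ref{positivebeta} as the essential input), it handles the two genuinely delicate points differently. The paper builds $V$ from the forward orbit $\tilde{\bu}([0,+\infty))$ of a point of $L_{6,1}^{(1)}$ with $\tilde{u}_{6,1}(0)<0$: $V$ is the region lying above this curve, union $\{f_{6,1}>0\}$; it is stable under the time-reversed flow, and the $U_{VI}$ case is then settled by a soft contradiction --- inside $V\cap U_{VI}$ the backward trajectory is monotone and bounded, hence would have to converge to the origin, contradicting $u_{6,1}(t)\leq u_{6,1}(t_0)<0$. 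You instead take $V$ to be the explicit strip $\{\vert f_{6,1}\vert <(M-1)\beta_{11}/(2C_{6,1})\}$ and derive the quantitative inequality $-\dot{f}_{6,1}\geq \tfrac{1}{2}(M-1)\beta_{11}\vert u_{6,1}\vert$, which forces the backward trajectory to cross $L_{6,1}^{(1)}$ into $U_V$ in finite time. Your route buys an explicit neighborhood and an explicit exit time, at the price of a small bootstrap you leave implicit: the hypothesis $\vert f_{6,1}\vert\leq (M-1)\beta_{11}/(2C_{6,1})$ must be propagated backward along the trajectory, which does follow because the very inequality it yields ($\dot{f}_{6,1}<0$) makes $\vert f_{6,1}\vert =-f_{6,1}$ decrease in backward time while the trajectory stays in $U_{VI}$ --- but this one-line continuity argument should be stated, as should the fact that the only exit from $U_{VI}$ other than $\{f_{6,1}=0\}$ is the invariant axis $\{u_{6,1}=0\}$. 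Both proofs ultimately rest on the same geometric fact, namely the bound $\vert f_4\vert\geq (\beta_{11}/C_4)\vert u_{6,1}\vert$ on $U_{VI}$, i.e.\ on Lemma \ref{positivebeta}.
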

\begin{proof}
In order to construct an appropriate neighborhood $V$, let us consider a solution $\tilde{\bu}$ such that $\tilde{\bu}(0) \in L_{6,1}$ and $\tilde{u}_{6,1}(0) < 0$. By Lemma \ref{lem_2d2}, $\tilde{\bu}(t)$ enters in region $U_{V}$ and remains there at later times. Hence the distance between the origin and the curve $\tilde{\bu}(\left[ 0 , + \infty\right[)$ is non-zero. Reparameterizing the trajectory of $\tilde{\bu}$ by the function $\tilde{u}_{6,1}(\tilde{u}_{4})$, we define the open set:
\begin{align}
V &= \{ (u_4 , u_{6,1})\vert  u_4 \in \left] - \infty,  \tilde{u}_{4} (0) \right] , \, u_{6,1} > \tilde{u}_{6,1} (u_4) \} \\
&\qquad \cup \{ (u_4 , u_{6,1})\vert f_{6,1} ( u_4 ,  u_{6,1} , u_{6,2} ) > 0 \}\,, \nn 
\end{align}
which contains the origin. In the half-plane $u_{6,1} < 0$, $V$ coincides with the greyed region in Figure \ref{phase_portrait2d}. 

It is easy to check that $V \cap \{ u_{6,1}<0 \}$ is stable with respect to the time-reversed flow of $(S)$. Indeed, this flow evaluated on the boundary of $V$ is either tangential (on $\tilde{\bu}(\left[ 0 , + \infty\right[)$) or inward-pointing (on $L_{6,1}^{(1)}$). 

Consider now a solution $\bu$ such that $u_{6,1}(t_0) \leq 0$ and $\bu(t_0) \in V$. If $\bu(t_0) \in U_{VI}$, one can find $t_1 < t_0$ such that $\bu(t_1) \in U_{V}$. If not, $u_{4,1}$ and $u_{6,1}$ being monotonic on $\left[ - \infty, t_0 \right[$ and bounded, they would have to converge in $- \infty$, hence to $0$. But this cannot be since one would also have $u_{6,1}(t) \leq u_{6,1}(t_0) <0$ for any $t \leq t_0$.
This observation, together with Lemma \ref{lem_2d2} implies the existence of some $t_2 < t_0$ such that $\bu(t_2) \in U_{IV}$, and hence $\bu(t) \in U_{IV}$ for all $t \leq t_2$. On $\left] - \infty, t_2\right]$, $u_4$ is increasing and bounded from below by $0$, therefore converges in $- \infty$. Similarly, $u_{6,1}$ converges in  $- \infty$. Since the only fixed point available is the Gaussian one, $\bu(t)$ converges to $0$ when $t\to - \infty$.
\end{proof}

\noindent{\bf{Remark.}}  Like in the previous case, one could try to map the flow more precisely in this region. However this would require more involved computations. Since from a physical perspective we are only interested in the properties of the flow of $(S)$ in the vicinity of the trivial fixed point, we decided to ignore possible complications arising from trajectories reaching simultaneously large negative values of $u_4$ and $u_{6,1}$. That was the technical purpose of the introduction of the open set $V$. 
See Figure \ref{phase_portrait2d}.

\

Similar results can be proven with the same methods in the $u_{6,1} = 0$ plane: no trajectory with $u_{6,2} > 0$ is asymptotically free, and on the contrary all trajectories with $u_{6,1} \leq 0$ are.  

\subsection{General properties of trajectories with $u_{6,1} u_{6,2} \geq 0$}

We now come back to the full three-dimensional system $(S)$. It turns out that the case in which $u_{6,1}$ and $u_{6,2}$ have identical signs can be analyzed in a way very similar to what has been achieved in the previous section. We can first infer the relative positions of the planes $H_4$, $H_{6,1}^{(1)}$ and $H_{6,2}^{(1)}$ from Lemma \ref{positivebeta} (and our numerical checks).
\begin{lemma}
 If $u_{6,1}>0$ and $u_{6,2}>0$, then:
\beq
f_{4} (u_4 , u_{6,1} , u_{6,2}) \leq 0 \quad \Rightarrow \quad \left\{
\begin{aligned}  f_{6,1} (u_4 , u_{6,1} , u_{6,2}) < 0 \\
f_{6,2} (u_4 , u_{6,1} , u_{6,2}) < 0
\end{aligned}
\right.\eeq
 If $u_{6,1}<0$ and $u_{6,2}<0$, then:
\beq
f_{4} (u_4 , u_{6,1} , u_{6,2}) \geq 0 \quad \Rightarrow \quad \left\{
\begin{aligned}  f_{6,1} (u_4 , u_{6,1} , u_{6,2}) > 0 \\
f_{6,2} (u_4 , u_{6,1} , u_{6,2}) > 0
\end{aligned}
\right.\eeq
\end{lemma}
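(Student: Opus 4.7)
The plan is to exploit the positivity of the $\beta$-coefficients introduced in Lemma \ref{positivebeta}. The key observation is that the hypothesis $f_4(u_4,u_{6,1},u_{6,2}) \leq 0$ (respectively $\geq 0$) gives a one-sided bound on $u_4$ in terms of $u_{6,1}$ and $u_{6,2}$, namely
\[
u_4 \leq -2\sqrt{2\pi}\, u_{6,1} - 4\sqrt{2\pi}\, u_{6,2} \qquad (\text{resp.} \geq),
\]
which can be substituted directly into the expressions defining $f_{6,1}$ and $f_{6,2}$, provided we know the signs of $C_4$ and $D_4$.

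First I would check that $C_4$ and $D_4$ are strictly positive. Reading off the coefficients from equations (\ref{r_61}) and (\ref{r_62}), one has $C_4 = 3(\fe - \fc)$ and $D_4 = 3\fe - 2\fc$; both are positive on $\left]1,+\infty\right[$ as an immediate consequence of the inequality $\fe > \fc$ proven in the earlier lemma.

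The core of the argument is then a direct linear estimate. In Case 1 ($u_{6,1}, u_{6,2} > 0$, $f_4 \leq 0$), multiplying the bound on $u_4$ by $C_4 > 0$ and adding $C_{6,1} u_{6,1} + C_{6,2} u_{6,2}$ yields
\[
f_{6,1}(u_4,u_{6,1},u_{6,2}) \leq (C_{6,1} - 2\sqrt{2\pi}\, C_4)\, u_{6,1} + (C_{6,2} - 4\sqrt{2\pi}\, C_4)\, u_{6,2} = -\beta_{11}\, u_{6,1} - \beta_{12}\, u_{6,2},
\]
which is strictly negative since $\beta_{11}, \beta_{12} > 0$ by Lemma \ref{positivebeta} and both $u_{6,1}, u_{6,2}$ are strictly positive. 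The identical manipulation with $D_4$ in place of $C_4$ gives $f_{6,2} \leq -\beta_{21} u_{6,1} - \beta_{22} u_{6,2} < 0$. Case 2 ($u_{6,1}, u_{6,2} < 0$, $f_4 \geq 0$) is obtained by reversing all the inequalities: the same algebraic manipulation yields $f_{6,1} \geq -\beta_{11} u_{6,1} - \beta_{12} u_{6,2} > 0$ and likewise for $f_{6,2}$, now positive because the $u_{6,k}$ are negative.

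There is essentially no obstacle: the statement is a consequence of the fact that the hyperplane $H_4$ sits on the appropriate side of both $H_{6,1}^{(1)}$ and $H_{6,2}^{(1)}$ in the two relevant octants, which is precisely the geometric content of the positivity of $\beta_{11}, \beta_{12}, \beta_{21}, \beta_{22}$ established (for $M$ large enough, and numerically for all $M > 1$) in Lemma \ref{positivebeta}. The only mild subtlety is recording strict versus non-strict inequalities carefully, since the hypothesis on $f_4$ is weak while the conclusion on $f_{6,1}, f_{6,2}$ is strict; strictness in the conclusion comes for free from $u_{6,1}, u_{6,2} \neq 0$ together with the strict positivity of the $\beta_{ij}$.
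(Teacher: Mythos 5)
Your proof is correct and is exactly the argument the paper intends: the text states that the lemma is ``inferred from Lemma \ref{positivebeta} (and our numerical checks)'' and omits the details, which you have filled in correctly, including the identification $C_4 = 3(\fe-\fc)$, $D_4 = 3\fe-2\fc$ and their positivity via $\fe>\fc$ (and $\fc>0$). Your parenthetical caveat that the positivity of $\beta_{12}$ and $\beta_{22}$ is only rigorously established for $M$ large enough matches the paper's own qualification.
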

This allows us to define the disjoint open sets
\begin{align}
 U_{I} &= \left\{ u_{6,1} > 0, \, u_{6,2} > 0, \, f_{6,1} >0 \,, f_{6,2} >0 \right\}  \\
 U_{II} &= \left\{ u_{6,1} > 0, \, u_{6,2} > 0, \, f_{6,1} <0 ,\,f_{4} >0 \right\} \\
& \qquad  \cup  \left\{ u_{6,1} > 0, \, u_{6,2} > 0, \, f_{6,2} <0 ,\,f_{4} > 0 \right\} \nn \\
 U_{III} &= \left\{ u_{6,1} > 0 ,\, u_{6,2} > 0 ,\, f_4 (u_4 , u_{6,1}, u_{6,2})<0 \right\} \\
 U_{IV} &= \left\{  u_{6,1} < 0 ,\, u_{6,2} < 0 ,\, f_4 (u_4 , u_{6,1}, u_{6,2})>0  \right\} \\
 U_{V} &= \left\{ u_{6,1} < 0 ,\, u_{6,2} < 0 ,\, f_4 (u_4 , u_{6,1}, u_{6,2})<0  \right\} 
\end{align}
which moreover verify
\begin{align}
 \overline{U_I \cup U_{II} \cup U_{III}} = \{ (u_4, u_{6,1},u_{6,2}) \vert u_{6,1} \geq 0 \,, u_{6,2} \geq 0 \} \,,\\
 \overline{U_{IV} \cup U_{V}} = \{ (u_4, u_{6,1},u_{6,2}) \vert u_{6,1} \leq 0 \,, u_{6,2} \leq 0 \}\,. 
\end{align}

\

In the following $\bu$ denotes a maximal solution of $(S)$. We can first show that in the $\{ u_{6,1}>0,\, u_{6,2} > 0 \}$ subspace, all trajectories must intersect $U_{II}$ at early times.
\begin{lemma}\label{lem_3d1}
 Let $t_1 \in \mathbb{R}$. 
 If $\bu(t_1) \in U_{I} \cup U_{III}$, then one can find $t_0 < t_1$ such that $\bu(t_0) \in U_{II}$.
\end{lemma}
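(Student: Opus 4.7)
The plan is to mimic the proof of Lemma \ref{lem_2d1}, splitting into the two sub-cases $\bu(t_1) \in U_{III}$ and $\bu(t_1) \in U_I$. The key additional input is the preceding lemma on the relative positions of the planes $H_4, H_{6,1}^{(1)}, H_{6,2}^{(1)}$: inside the octant $\{u_{6,1}>0, u_{6,2}>0\}$, $f_4 \leq 0$ forces $f_{6,1}<0$ and $f_{6,2}<0$, and by contraposition $\{f_{6,1}>0, f_{6,2}>0\}$ lies inside $\{f_4>0\}$. Hence on $\overline{U_{III}}$ all three derivatives $\dot u_4, \dot u_{6,1}, \dot u_{6,2}$ are non-positive, and on $\overline{U_I}$ all three are non-negative. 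This monotonicity is what drives the rest of the argument.

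For $\bu(t_1) \in U_{III}$, I would argue by contradiction: if $\bu(t) \in \overline{U_{III}}$ for all $t \leq t_1$, then each of $u_4, u_{6,1}, u_{6,2}$ is non-decreasing as $t$ decreases and hence $\geq$ its value at $t_1$. Combining this with $f_4 \leq 0$ yields $u_4(t) \leq -2\sqrt{2\pi}\, u_{6,1}(t_1) - 4\sqrt{2\pi}\, u_{6,2}(t_1)$, so $u_4$ is also bounded above and therefore converges in the limit $t \to - \infty$. Two possibilities then arise: either $u_{6,1}$ and $u_{6,2}$ stay bounded, in which case $\bu(t)$ converges to a fixed point, necessarily the origin, contradicting $u_{6,1}(t) \geq u_{6,1}(t_1) > 0$; or one of $u_{6,1}(t), u_{6,2}(t)$ diverges to $+\infty$, in which case $f_4(t) \to +\infty$, contradicting $f_4 \leq 0$. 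Either way the trajectory must exit $\overline{U_{III}}$. On the exit face $\{f_4 = 0\} \cap \{u_{6,1}>0, u_{6,2}>0\}$ one still has $\dot u_{6,1}, \dot u_{6,2} < 0$ strictly (because $f_{6,1}, f_{6,2} < 0$ there by the preceding lemma), so the exit is into the component of $U_{II}$ with $f_{6,1} < 0$ or $f_{6,2} < 0$ and $f_4 > 0$.

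For $\bu(t_1) \in U_I$, I would again assume for contradiction that $\bu(t) \in \overline{U_I}$ for all $t \leq t_1$. Then each component is non-decreasing in $t$, hence bounded above by its value at $t_1$ going backward; the constraint $f_{6,1} \geq 0$ together with $u_{6,1}, u_{6,2} \geq 0$ provides a lower bound on $u_4$, so $\bu$ is bounded on $(-\infty, t_1]$. By monotonicity it converges to a fixed point, necessarily the origin. To derive a contradiction I would use the sharp inequalities of Lemma \ref{positivebeta}, $C_{6,1} \leq 2\sqrt{2\pi}\, C_4$ and $C_{6,2} \leq 4\sqrt{2\pi}\, C_4$. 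Combined with $u_{6,1}, u_{6,2} \geq 0$ on $\overline{U_I}$, they give $f_{6,1} \leq C_4 f_4$, i.e.,
\[
\frac{\dot u_{6,1}}{u_{6,1}} \leq K\, \dot u_4, \qquad K \equiv \frac{C_4}{M-1} > 0.
\]
Integrating from $t \leq t_1$ to $t_1$ yields $u_{6,1}(t) \geq u_{6,1}(t_1)\exp\bigl(K(u_4(t)-u_4(t_1))\bigr)$, which is bounded away from $0$ since $u_4$ is bounded, contradicting $u_{6,1}(t) \to 0$. The trajectory therefore exits $\overline{U_I}$, and since it cannot do so by crossing $\{u_{6,1}=0\}$ or $\{u_{6,2}=0\}$ (sign invariance), the exit face is $\{f_{6,1}=0\}$ or $\{f_{6,2}=0\}$ where $f_4 > 0$ strictly, landing in $U_{II}$.

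The main obstacle is the $U_I$ case, where boundedness alone only forces convergence to the origin; one genuinely needs the coefficient inequalities of Lemma \ref{positivebeta} to rule out the Gaussian limit via an exponential lower bound on $u_{6,1}$. The $U_{III}$ case is more elementary, relying on the defining inequality $f_4 \leq 0$ to trap $u_4$ and on the absence of non-trivial fixed points; one only has to verify that the exit is through the correct face of $\overline{U_{III}}$ so that $\bu(t_0)$ genuinely lies in $U_{II}$ and not, say, on the boundary $\{u_{6,1}=0\}$ which is forbidden by Lemma~1.
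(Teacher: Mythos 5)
Your proposal is correct and follows essentially the same route as the paper: split into the $U_{III}$ and $U_I$ sub-cases, use the sign information from the preceding lemma to get monotonicity of all three components in each region, and in the non-trivial $U_I$ case invoke Lemma 4 (the $\beta$-inequalities) to obtain $\dot{u}_{6,1}/u_{6,1}\leq K\,\dot{u}_4$ and the exponential lower bound that contradicts convergence to the Gaussian fixed point. Your handling of the $U_{III}$ case via boundedness of $u_4$ and a dichotomy on the $u_6$'s is a minor (and arguably cleaner) variant of the divergence argument the paper imports from Lemma 2, and your explicit exit-face check is a welcome addition the paper leaves implicit.
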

\begin{proof}
Let us first assume that $\bu(t_0) \in U_{III}$. Since $\dot{u}_4$, $\dot{u}_{6,1}$ and $\dot{u}_{6,2}$ are all negative in this region, we can repeat the argument of Lemma \ref{lem_2d1} and conclude that $\bu(t)$ cannot remain within $\overline{U_{III}}$ at all times $t\leq t_1$, and hence must enter region $U_{II}$.

Consider now the non-trivial case in which $\bu(t_0) \in U_{I}$. In this region one has $\dot{u}_4 > 0$, $\dot{u}_{6,1} > 0$ and $\dot{u}_{6,2} > 0$. Moreover Lemma \ref{positivebeta} (supplemented with our numerical checks) ensures that:
\beq
\left\{
\begin{aligned}
 \frac{\dot{u}_{6,1}}{u_{6,1}} &\leq K_1 \dot{u}_{4}  \\
\frac{\dot{u}_{6,2}}{u_{6,2}} &\leq K_2 \dot{u}_{4}  
 \end{aligned}
 \right.
 \eeq
 where $K_1 = \frac{C_4}{M-1} > 0$ and $K_2 = \frac{D_4}{M-1} > 0$. Hence we can run the same argument as in Lemma \ref{lem_2d2} and conclude that $\bu(t)$ cannot remain in $U(I)$ at all times $t \leq t_1$ without yielding a contradiction. 
\end{proof}

It hence appears that there are no asymptotically free trajectories with both $u_{6,1} > 0$ and $u_{6,2} > 0$.
\begin{proposition}\label{propo_positif}
Let $t_0 \in \mathbb{R}$. There exists no solution $\bf{u}$ of $(S)$ such that:
\beq
\left\{ \begin{aligned} u_{6,1}(t_0) &> 0 \\
u_{6,2}(t_0) &> 0 \end{aligned} 
\right. \qquad {\rm{and}} \qquad \underset{t \to - \infty}{\lim} \bu(t) = 0 \,.
\eeq
\end{proposition}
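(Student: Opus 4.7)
The plan is to argue by contradiction, closely mimicking the proof of the analogous two-dimensional proposition on the invariant subspace $u_{6,2} = 0$. Suppose such a solution $\bu$ exists. By the preceding lemma on sign preservation of $u_{6,1}$ and $u_{6,2}$, the trajectory remains inside the open octant $\cO = \{u_{6,1} > 0,\, u_{6,2} > 0\}$ for all $t \leq t_0$, hence in one of the closures $\overline{U_I}, \overline{U_{II}}, \overline{U_{III}}$ which together cover $\cO$. When $\bu(t_0) \in U_I \cup U_{III}$, Lemma \ref{lem_3d1} produces a $t_1 < t_0$ with $\bu(t_1) \in U_{II}$; remaining boundary configurations can be dealt with by replacing $t_0$ by $t_0 \pm \varepsilon$ for a small $\varepsilon > 0$. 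I may therefore assume that there exists $t_1 \leq t_0$ with $\bu(t_1) \in U_{II}$.

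The heart of the argument is then to establish two facts: (i) $U_{II}$ is stable under the time-reversed flow of $(S)$; and (ii) the coordinate $u_4$ is strictly negative on $U_{II}$. For (ii), the conditions defining $U_{II}$ give $f_4 > 0$ together with $f_{6,1} < 0$ or $f_{6,2} < 0$, and $u_{6,1}, u_{6,2} > 0$. The ordering of the coefficients provided by Lemma \ref{positivebeta} (and the analogous numerical estimates accepted in the text) then gives, exactly as in the verification that $u_4 < 0$ on $U_{II}$ in the two-dimensional case, the bound $u_4 < 0$ throughout $U_{II}$. Since $\dot u_4 = (M-1) f_4 > 0$ in $U_{II}$, going backward in time forces $u_4(t) \leq u_4(t_1) < 0$ as long as $\bu(t) \in U_{II}$.

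For the backward invariance (i), I would verify that the forward flow of $(S)$ is outward-pointing or tangential along each face of $\partial U_{II}$, the faces to analyze being subsets of $\{f_4 = 0\}$, $\{f_{6,1} = 0\}$ and $\{f_{6,2} = 0\}$. On $\{f_4 = 0\}$ one has $\dot u_4 = 0$ and $\tfrac{\extd f_4}{\extd t} = 2 \sqrt{2 \pi}\, u_{6,1} f_{6,1} + 4 \sqrt{2 \pi}\, u_{6,2} f_{6,2}$, whose sign is controlled by the other constraints defining $U_{II}$; on $\{f_{6,i} = 0\}$, $\dot u_{6,i} = 0$ and $\tfrac{\extd f_{6,i}}{\extd t}$ is proportional to $\dot u_4$, again signed via Lemma \ref{positivebeta}. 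Combining (i) and (ii) gives $\bu(t) \in U_{II}$ and $|u_4(t)| \geq |u_4(t_1)| > 0$ for every $t \leq t_1$, contradicting $\bu(t) \to 0$ as $t \to -\infty$. The main technical obstacle is the boundary analysis in step (i): because $U_{II}$ is defined as a union of two open sets, its boundary has a more intricate piecewise structure than in the two-dimensional setting, and the sign conditions must be propagated consistently across the different faces using the full list of inequalities of Lemma \ref{positivebeta}.
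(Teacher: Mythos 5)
Your proof follows essentially the same route as the paper's: reduce via Lemma \ref{lem_3d1} to a trajectory that has entered $U_{II}$, show $U_{II}$ is stable under the time-reversed flow by checking the sign of the flow on the faces $\{f_4=0\}$, $\{f_{6,1}=0\}$ and $\{f_{6,2}=0\}$ (the paper records exactly the three inner products you describe, with the cross-terms $C_{6,2}u_{6,2}f_{6,2}$ etc.\ that you should keep rather than calling $\tfrac{\extd f_{6,i}}{\extd t}$ proportional to $\dot u_4$), and conclude from $\dot u_4>0$ on $U_{II}$. You also usefully make explicit the step the paper leaves implicit, namely that $u_4<0$ throughout $U_{II}$, so that backward monotonicity of $u_4$ genuinely contradicts convergence to the origin.
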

\begin{proof}
One only needs to check that the region $U_{II}$, in which $\dot{u}_{4} > 0$, is stable with respect to the time-reversed flow of $(S)$. This is guaranteed by the following facts:
\beq
\begin{pmatrix}
 \dot{u}_{4} \\ \dot{u}_{6,1} \\ \dot{u}_{6,2}
\end{pmatrix}
\cdot \begin{pmatrix}
 C_{4} \\ C_{6,1} \\ C_{6,2}
\end{pmatrix}
= C_4 f_4  + C_{6,1} u_{6,1} f_{6,1}  + C_{6,2} u_{6,2} f_{6,2} \geq 0
\eeq
on $\{ u_{6,1} > 0 , \, u_{6,2} > 0 , \, f_{6,1} = 0 , \, f_{6,2} \geq 0 \}$;
\beq
\begin{pmatrix}
 \dot{u}_{4} \\ \dot{u}_{6,1} \\ \dot{u}_{6,2}
\end{pmatrix}
\cdot \begin{pmatrix}
 D_{4} \\ D_{6,1} \\ D_{6,2}
\end{pmatrix}
= D_4 f_4  + D_{6,1} u_{6,1} f_{6,1}  + D_{6,2} u_{6,2} f_{6,2} \geq 0
\eeq
on $\{ u_{6,1} > 0 , \, u_{6,2} > 0 , \, f_{6,1} \geq 0 , \, f_{6,2} = 0 \}$; and
\beq
\begin{pmatrix}
 \dot{u}_{4} \\ \dot{u}_{6,1} \\ \dot{u}_{6,2}
\end{pmatrix}
\cdot \begin{pmatrix}
 1 \\ 2 \sqrt{2 \pi} \\ 4 \sqrt{2 \pi}
\end{pmatrix}
= 2 \sqrt{2 \pi} u_{6,1} f_{6,1}  + 4 \sqrt{2 \pi} u_{6,2} f_{6,2} \leq 0
\eeq
on $\{ u_{6,1} > 0 , \, u_{6,2} > 0 , \, f_{4} = 0 \}$.
\end{proof}

\

The previous proposition is the \emph{main result} of this paper: the hypothesis that the Gaussian fixed point is UV stable against perturbations $u_{6,1} > 0$ and $u_{6,2} > 0$, as suggested by equations (\ref{r_61}) and (\ref{r_62}), is \emph{not} confirmed by a careful inspection of the equations.
 
\

We now consider the situation in which both $u_{6,1}$ and $u_{6,2}$ are negative. As compared to the previous section, we will avoid unnecessary complications by relying even more on a suitably chosen stable neighborhood of the origin.  
\begin{lemma}\label{lem_3d2} 
 Suppose that $\bu(t_0) \in U_{V}$ for some $t_0 \in \mathbb{R}$. Then:
\beq
\underset{t \to + \infty}{\lim} u_{4}(t) = - \infty\,.
\eeq
\end{lemma}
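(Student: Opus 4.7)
The plan is to mirror the strategy used in the two-dimensional analogue Lemma \ref{lem_2d2}, adapting it to three dimensions. First I would establish that $U_V$ is positively invariant under the flow of $(S)$. Sign invariance of $u_{6,1}$ and $u_{6,2}$ (the earlier lemma in this subsection) takes care of the boundaries $\{ u_{6,1} = 0 \}$ and $\{ u_{6,2} = 0 \}$. On the remaining boundary piece $\{ f_4 = 0,\, u_{6,1}<0,\, u_{6,2}<0 \}$, differentiation along the flow yields
\[
\dot{f}_4 = (M-1) f_4 + 2\sqrt{2\pi}\,u_{6,1} f_{6,1} + 4\sqrt{2\pi}\,u_{6,2} f_{6,2}\,,
\]
whose first term vanishes, while the immediately preceding Lemma (asserting that $f_4 \geq 0$ forces $f_{6,1}, f_{6,2} > 0$ in the negative octant) makes the remainder strictly negative. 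Hence $U_V$ is forward-invariant, and in particular $\dot{u}_4 = (M-1) f_4 < 0$ throughout, so $u_4$ decreases monotonically.

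I would then argue by contradiction, assuming $u_4(t) \to L > -\infty$. The identity
\[
f_{6,1} - C_4 f_4 = \beta_{11} |u_{6,1}| + \beta_{12} |u_{6,2}|\,,
\]
obtained by direct expansion from Lemma \ref{positivebeta}, combined with $\beta_{11}, \beta_{12} > 0$ and $u_{6,1}, u_{6,2} < 0$, yields $\frac{\extd}{\extd t} \log|u_{6,1}| = f_{6,1} \geq K \dot{u}_4$ with $K := C_4/(M-1) > 0$. Integrating from $t_0$ and using $u_4(t) \geq L$ gives $|u_{6,1}(t)| \geq c_1 > 0$ uniformly in $t \geq t_0$; the same manipulation with $\beta_{21}, \beta_{22}$ yields $|u_{6,2}(t)| \geq c_2 > 0$.

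Next I would show that the trajectory stays bounded. Computing $\dot{Q}$ for $Q := u_{6,1}^2 + u_{6,2}^2$ produces negative cubic terms $-(C_{6,1}|u_{6,1}|^3 + C_{6,2} u_{6,1}^2 |u_{6,2}| + D_{6,1} |u_{6,1}| u_{6,2}^2 + D_{6,2}|u_{6,2}|^3)$ — all coefficients being positive, as already observed — against a term only linear in $u_4$, so $\dot{Q} < 0$ past some threshold and $Q$ is uniformly bounded. The $\omega$-limit set $\Omega$ is then nonempty, compact, invariant, and contained in $\overline{U_V} \cap \{ u_4 = L \}$. Forward invariance forces $\dot{u}_4 = f_4 = 0$ on $\Omega$, and then $\dot{f}_4 = 0$ on $\Omega$. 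But on $\{ f_4 = 0 \} \cap \overline{U_V}$ one has
\[
\dot{f}_4 = -2\sqrt{2\pi}|u_{6,1}|(\beta_{11}|u_{6,1}| + \beta_{12}|u_{6,2}|) - 4\sqrt{2\pi}|u_{6,2}|(\beta_{21}|u_{6,1}| + \beta_{22}|u_{6,2}|) \leq 0\,,
\]
with equality only at $u_{6,1} = u_{6,2} = 0$, which together with $f_4 = 0$ forces $L = 0$ and $\Omega = \{ (0,0,0) \}$.

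The contradiction is then immediate: $\Omega = \{ 0 \}$ implies $|u_{6,1}(t)| \to 0$, incompatible with the lower bound $|u_{6,1}(t)| \geq c_1 > 0$, so $u_4(t) \to -\infty$. I expect the boundedness step for $(u_{6,1}, u_{6,2})$ to be the main technical hurdle: unlike in two dimensions, where stability of $U_{VI}$ automatically confined the trajectory, here the sign of $f_{6,1}$ and $f_{6,2}$ is not controlled inside $U_V$, and one must use cubic dominance in $\dot{Q}$ to rule out blow-up in either direction of the negative quadrant before the $\omega$-limit argument can be invoked.
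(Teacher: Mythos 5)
Your proof is correct and follows essentially the same route as the paper: forward invariance of $U_{V}$, monotone decrease of $u_4$, and the exponential bounds $\vert u_{6,1}(t)\vert \geq \vert u_{6,1}(t_0)\vert \exp\left(K_1 (u_4(t) - u_4(t_0))\right)$ derived from Lemma \ref{positivebeta} to rule out convergence to the origin. Your boundedness argument via $Q = u_{6,1}^2 + u_{6,2}^2$ and the $\omega$-limit set computation of $\dot{f}_4$ are a more careful rendering of the step the paper compresses into the assertion that the trajectory "must asymptotically reach $\cS_C$", but the underlying mechanism is identical.
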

\begin{proof}
We can first remark that $U_{V}$ is stable with respect to the flow of $(S)$. Indeed one has
\beq
\begin{pmatrix}
 \dot{u}_{4} \\ \dot{u}_{6,1} \\ \dot{u}_{6,2}
\end{pmatrix}
\cdot \begin{pmatrix}
 1 \\ 2 \sqrt{2 \pi} \\ 4 \sqrt{2 \pi}
\end{pmatrix}
= 2 \sqrt{2 \pi} u_{6,1} f_{6,1}  + 4 \sqrt{2 \pi} u_{6,2} f_{6,2} \leq 0
\eeq
for any $(u_4 , u_{6,1} , u_{6,2}) \in H_4$ such that $u_{6,1} \leq 0$ and $u_{6,2} \leq 0$. Hence $u_4$ is monotonically decreasing on $\left[ t_0 , + \infty \right[$, and one of the two following cases occur: a) $\underset{t \to + \infty}{\lim} u_4 (t) = C \in \mathbb{R}$, or b) $\underset{t \to + \infty}{\lim} u_4 (t) = - \infty$. 

Assuming that a) is realized, then also $\underset{t \to + \infty}{\lim} \dot{u}_4 (t) = 0$. It results that $\bu(t)$ must asymptotically reach the set $\cS_C = \left\{ u_4 = C \right\} \cap \left\{ f_{4} = 0 \right\} \cap \{ u_{6,1} \geq 0 \, ,  u_{6,2} \geq 0 \}$. But $\cS_C = \emptyset$ unless $C=0$, in which case $\cS_C = \{ 0 \}$. This implies that $\bu$ must converge to the Gaussian fixed point in $+ \infty$.

Let us now show that this behaviour is not consistent with the flow equations. When $u_{6,1} < 0$ and $u_{6,2} < 0$, Lemma \ref{positivebeta} allows to prove that:
\beq
\left\{ \begin{aligned}
\frac{\dot{u}_{6,1}}{u_{6,1}} 
&\geq K_1 \dot{u}_4 \\
\frac{\dot{u}_{6,1}}{u_{6,1}} 
&\geq K_2 \dot{u}_4
\end{aligned}
\right.
\eeq
where $K_1 = \frac{C_4}{M-1} > 0$ and $K_2 = \frac{D_4}{M-1} > 0$. These in turn imply
\beq
\left\{
\begin{aligned}
u_{6,1} (t) &\leq u_{6,1} (t_0) \exp\left( K_1 (u_4 (t) - u_4 (t_2) \right) \\
u_{6,2} (t) &\leq u_{6,2} (t_0) \exp\left( K_2 (u_4 (t) - u_4 (t_2) \right)
\end{aligned}
\right.
\eeq
for any $t \geq t_0$. These two inequalities are incompatible with $\underset{t \to + \infty}{\lim} \bu (t) = 0$.

Hence the situation b) is the one which actually occurs.
%
\end{proof}

\begin{proposition}\label{propo_negatif}
Let $\bu$ be a solution of $(S)$ and $t_0 \in \mathbb{R}$. There exists a neighborhood $V$ of $0$ such that: 
\beq
\left\{\begin{aligned}
u_{6,1}(t_0) &\leq 0 \\
u_{6,2}(t_0) &\leq 0 \\
\bu(t_0) &\in V
\end{aligned}\right. \quad \Rightarrow \quad \underset{t \to - \infty}{\lim} \bu(t) =0\,.
\eeq
\end{proposition}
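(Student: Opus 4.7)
The plan is to mimic the two-dimensional argument from Section 6.2, combining backward invariance of $U_{IV}$ with monotone convergence inside that region, and constructing a neighborhood $V$ that funnels backward-evolved trajectories into $U_{IV}$.

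First I would verify that $U_{IV}$ is stable under the time-reversed flow of $(S)$. Its boundary splits into three pieces: $\{u_{6,1}=0\}$, $\{u_{6,2}=0\}$, and $H_4 \cap \{u_{6,1}<0,\,u_{6,2}<0\}$. The first two are invariant under $(S)$ because $\dot{u}_{6,\ell}$ is proportional to $u_{6,\ell}$. On the third piece, a direct computation gives
\[
\dot{f}_4 \;=\; 2\sqrt{2\pi}\, u_{6,1} f_{6,1} + 4\sqrt{2\pi}\, u_{6,2} f_{6,2}\,.
\]
The lemma stated just before this proposition forces $f_{6,1},f_{6,2}>0$ whenever $f_4=0$ with $u_{6,1},u_{6,2}<0$, so $\dot{f}_4 < 0$ and the forward flow crosses $H_4$ out of $U_{IV}$ into $U_V$; equivalently, the reverse flow maps $\overline{U_V}$ into $\overline{U_{IV}}$. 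Next, inside $U_{IV}$ all three coordinates are monotone: $\dot{u}_4>0$ because $f_4>0$, and $\dot{u}_{6,\ell}=u_{6,\ell}f_{6,\ell}<0$ by another application of the same lemma. They are also bounded: $0<u_4(t)<u_4(t_0)$ backwards in time, and $u_{6,\ell}(t_0)\le u_{6,\ell}(t)\le 0$. Hence $\bu(t)$ converges to a point in $\overline{U_{IV}}$ as $t\to -\infty$, which must be a fixed point of $(S)$; the only such fixed point is the origin.

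For the construction of $V$ I would follow the 2D template and use a 2-parameter family of seed trajectories $\tilde{\bu}_p$ issued at $t=0$ from a smooth portion of $H_4 \cap \{u_{6,1}<0,\,u_{6,2}<0\}$ near the origin, for instance $\{f_4=0,\;u_{6,1}\in[-\delta,0),\;u_{6,2}\in[-\delta,0)\}$ with $\delta>0$ small. By Lemma \ref{lem_3d2}, each such $\tilde{\bu}_p$ enters $U_V$ immediately for $t>0$ and escapes with $u_4(t)\to -\infty$, so it is bounded away from $0$ for all $t\ge 0$. The envelope $\Sigma=\bigcup_p \tilde{\bu}_p([0,+\infty[)$ is tangent to the flow of $(S)$ by construction, and I would define $V$ as the connected component of $0$ in the complement of $\Sigma$ inside $\{u_{6,1}\le 0,\,u_{6,2}\le 0\}$, extended through a small open neighborhood of $0$ in the remaining octants to make $V$ open. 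The boundary of $V\cap\{u_{6,1}\le 0,\,u_{6,2}\le 0\}$ is then a union of flow-tangent pieces of $\Sigma$ and of the invariant planes $\{u_{6,\ell}=0\}$, so this set is stable under the time-reversed flow.

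Putting the three ingredients together: given $\bu(t_0)\in V$ with $u_{6,1}(t_0),u_{6,2}(t_0)\le 0$, backward invariance keeps $\bu(t)$ in $V\cap\{u_{6,1}\le 0,\,u_{6,2}\le 0\}$ for all $t\le t_0$; if $\bu(t_0)\in U_{IV}$ the convergence argument concludes immediately, while if $\bu(t_0)\in U_V$ the monotone increase of $u_4$ backward in time, together with the boundedness of $V$, forces $\bu$ to reach $H_4$ at some $t_1<t_0$, after which the first case applies. The main obstacle is the rigorous implementation of the third step: unlike the 2D situation, where a single reparameterized seed trajectory sufficed, here one must confirm that the 2-parameter envelope $\Sigma$ depends continuously on $p$, stays uniformly bounded away from $0$, and genuinely encloses a neighborhood of $0$. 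A subsidiary point is to rule out trajectories in $V\cap U_V$ that backward-asymptote to $H_4$ without actually crossing it, which follows from the strict inequality $\dot{f}_4<0$ on $H_4\cap\{u_{6,1}<0,\,u_{6,2}<0\}$ combined with the boundedness of $V$.
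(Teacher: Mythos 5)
Your construction is essentially the paper's: the author also builds $V$ by flowing forward a small piece of the surface $\{f_4=0,\,u_{6,1}<0,\,u_{6,2}<0\}$ (an open disc $D$ centred near the origin), takes the union of the resulting tube $V_0=\Phi([0,+\infty[,D)$ with $\{f_4>0\}\cup\{u_{6,1}>0\}\cup\{u_{6,2}>0\}$ to get an open neighbourhood of $0$ stable under the time-reversed flow, and then concludes by the same case split $U_{IV}$ versus $U_V$, with monotone-and-bounded convergence in $U_{IV}$. Your verification of backward stability of $\overline{U_{IV}}$ via $\dot f_4=2\sqrt{2\pi}\,u_{6,1}f_{6,1}+4\sqrt{2\pi}\,u_{6,2}f_{6,2}$ on $H_4$ matches the computation used in Lemma \ref{lem_3d2} and Proposition \ref{propo_positif}.

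There is, however, one step that does not hold as you state it: the claim that a backward trajectory starting in $V\cap U_V$ is \emph{forced to reach} $H_4$ at some finite $t_1<t_0$. Monotone increase of $u_4$ backward in time plus boundedness of $V_0$ in the direction $u_4>u_4(t_0)$ only gives convergence of $u_4(t)$ as $t\to-\infty$; it does not produce a crossing, and the strict inequality $\dot f_4<0$ \emph{on} $H_4$ cannot exclude an asymptotic approach with $f_4(t)\to 0^-$, since that inequality is only available on the surface itself. The correct resolution (and the one the paper uses) is not to rule this case out but to absorb it: if $\bu(t)$ remains in $U_V$ for all $t\le t_0$, then $u_4$ converges, hence $\dot u_4=(M-1)f_4\to 0$, so the backward limit set lies in $\{f_4=0\}\cap\{u_{6,1}\le 0,\,u_{6,2}\le 0\}\cap\{u_4=C\}$; within the bounded region this forces $C=0$ and hence $u_{6,1}=u_{6,2}=0$, i.e.\ convergence to the Gaussian fixed point directly from $U_V$, which is the desired conclusion. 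With that replacement your argument closes; the remaining concerns you flag about continuity and non-degeneracy of the two-parameter envelope $\Sigma$ are handled in the paper simply by taking $V_0$ to be the forward flow image of the open disc $D$, which is automatically open and flow-saturated.
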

\begin{proof}
Let us define the open disc 
\begin{align}
D &= \left\{ u_{6,1} < 0 , \, u_{6,2} < 0 , \, f_4 = 0 , \, {u_4}^2 + {u_{6,1}}^2 + {u_{6,2}}^2 <1 \right\} \,.
\end{align}
The flow $\Phi( t, \cdot )$ of $(S)$ generates the set
\beq
V_0 = \Phi( \left[ 0 , + \infty \right[ , D )\,, 
\eeq
which according to Lemma \ref{lem_3d2} is infinitely extended in the direction $u_4 \to - \infty$. See Figure \ref{openset} for a qualitative representation of $D$ and $V_0$. Then
\beq
V = V_0 \cup \left\{ f_4  > 0\right\} \cup \left\{ u_{6,1}   > 0\right\} \cup \left\{ u_{6,2}  > 0\right\}
\eeq
is an open set containing the origin. Moreover, $V$ is by construction stable under the time-reversed flow of $(S)$.

Assume that $\bu(t_0) \in V \cap U_V$. If $\bu(t) \in U_V$ for all $t \leq t_0$, then $u_4$ is a monotonically decreasing function. Moreover, since $u_4$ is bounded on $V_0 \cap \{ u_4 > u_{4}(t_0 )\}$, $u_4$ must converge in $- \infty$. Then one also has $\dot{u}_4 (t) \to 0$ when $t \to - \infty$, therefore by the same argument as already invoked in Lemma \ref{lem_3d2} one must have $\underset{t \to - \infty}{\lim} \bu(t) =0$. If on the other hand $\bu( t_1 ) \in U_{IV}$ for some $t_1 \in \mathbb{R}$, then $\bu(t) \in U_{IV}$ for any $t\leq t_1$. By monotonicity and boundedness of $u_4$, $u_{6,1}$ and $u_{6,2}$, one then again concludes that $\underset{t \to - \infty}{\lim} \bu(t) =0$.

Finally, the case in which $u_{6,1}(t_0)= 0$ has already been treated in the previous section, and similar results hold for $u_{6,2}(t_0)= 0$.
%
\end{proof}

\begin{figure}[h]
  \centering
  \includegraphics[scale=0.8]{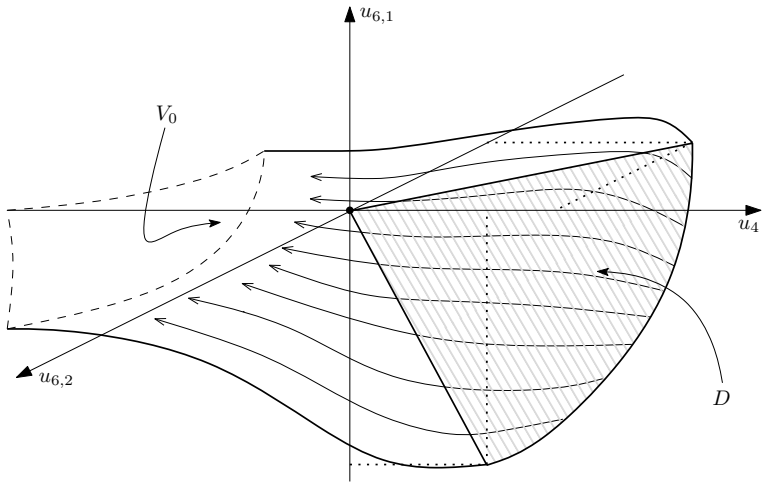}     
  \caption{The action of the flow of $S$ on $D$ generates the unbounded set $V_0$.}\label{openset}
\end{figure}

\subsection{Trajectories with $u_{6,1} u_{6,2} < 0$}

The phase portrait in regions in which $u_{6,1}$ and $u_{6,2}$ have opposite signs is more difficult to analyze. 

On the one hand, there is no obvious UV stable open set in which $\vert u_4 \vert$, $\vert u_{6,1} \vert$ and $\vert u_{6,2} \vert$ are simultaneously strictly increasing under the flow of $(S)$. Hence there seems to be no immediately identifiable argument proving that asymptotically free trajectories do exist in this region. On the other hand, none of the bounds we relied on in the previous sections (in particular to exclude asymptotic freedom when $u_{6,1} >0$ and $u_{6,2} >0$), are easily generalizable. Therefore the non-existence of asymptotically free trajectories in this sector seems equally hard to establish.  

In a sense the dynamical system $(S)$ becomes truly three dimensional when $u_{6,1} u_{6,2} < 0$, with all the complications this may lead to. A comprehensive study in this regime is therefore beyond the scope of the present work, and we leave the question open for future investigations.

%
%
%
%

\section{Conclusion and outlook}\label{sec:conclusion}

Let us start with a summary of what has been achieved in this paper. The main goal was to develop a Wilsonian renormalization group picture of the renormalizable TGFT of \cite{cor_su2}. This required the introduction of dimensionless parameters, with canonical dimensions inferred from the power-counting. The perturbative expansion was then performed with respect to the dimensionless parameters rather than the dimensionful ones, which is responsible for subtle departures with respect to previous analyzes of such models \cite{josephaf, samary_beta, cor_su2}: the class of graphs which are divergent in this sense is enlarged. General (and formal) flow equations, involving melonic graphs only, have been derived.

We then applied these general equations in the vicinity of the Gaussian fixed point, where perturbation theory can be applied. We first recovered the splitting between renormalizable and non-renormalizable coupling constants, which allowed us to restrict ourselves to four independent parameters: the mass, a $4$-valent coupling constant, and two $6$-valent coupling constants. The eigendirections of the linearized system were then computed, reducing further the set of independent couplings in the deep UV region: the mass and $4$-valent coupling constant become linearly dependent of the two marginal $6$-valent coupling constants at high scales. In order to fully understand the flow in the vicinity of the Gaussian fixed point, the perturbative expansion was pushed to second order in the marginal parameters. Resorting to continuous methods, we eventually derived some general properties. Our main result is the following (see Proposition \ref{propo_positif}): 
\begin{center}
\emph{Renormalization group trajectories with non-zero and positive marginal parameters are not asymptotically free.}
\end{center}
For such trajectories, the existence of a Landau pole can therefore \emph{not} be excluded. On the contrary, we also proved that \emph{the Gaussian fixed point is UV attractive with respect to negative perturbations of the marginal parameters} (see Proposition \ref{propo_negatif}). However this corresponds to a regime in which one does not expect to be able to define the partition function rigorously.


\

A few comments are in order as far as asymptotic freedom is concerned. As was made clear in previous works \cite{josephaf, samary_beta}, asymptotic freedom is to be expected in such models, due to an enhanced wave-function renormalization as compared to ordinary scalar field theories. Indeed, tadpoles are not exactly local (i.e. tensorial) in TGFTs, and therefore wave-function counter-terms appear already at first order in the perturbative expansion. If big enough, these terms can balance the first-order coupling constant counter-terms and make the $\beta$-function negative. This is exactly what happens in \cite{josephaf, samary_beta}, and also what seems to happen in the present paper, see equations (\ref{r_61}) and (\ref{r_62}). However, our analysis shows that the non-linear terms appearing in equations (\ref{r_61}) and (\ref{r_62}) spoil these expectations. The non-trivial dynamics of the $4$-valent coupling constant $u_4$ and the two marginal coupling constants $u_{6,1}$ and $u_{6,2}$ yield a complicated phase portrait (already in the vicinity of the origin), as is well illustrated by Figure \ref{phase_portrait2d}. 



Note also that our construction relies heavily on the introduction of dimensionless parameters. While this is required and well-understood in ordinary quantum field theory, the same procedure in TGFT deserves more clarification. It is indeed based on a more abstract notion of dimensionality given by the power-counting, and not immediately interpretable in terms of physical dimensions such as lengths, energies etc. It depends in particular on the choice of propagator, of Laplace-type in our situation, which is one other aspect of TGFTs for which a clear physical interpretation is lacking. Note also that in renormalizable models with quartic interactions, a computation along the lines of \cite{josephaf, samary_beta} and the discrete renormalization group picture developed in this paper would match. Therefore we expect such models to be asymptotically free. Good candidates in TGFTs with gauge invariance condition are rank-$3$ models on dimension $4$ groups \cite{4-eps}, rank-$4$ models on dimension $2$ groups, and rank-$6$ models on dimension $1$ groups (see the full classification of renormalizable models in \cite{cor_su2, thesis}).

Finally, while our analysis excludes the existence of asymptotically free trajectories in this model when $u_{6,1} > 0$ and $u_{6,2} > 0$, it does not prove the existence of a Landau pole either. Moreover, we did not include a detailed investigation of the situation in which $u_{6,1}$ and $u_{6,2}$ have opposite signs, which as we have argued is probably involved, but might simultaneously support asymptotic freedom and convergence of the path-integral. Therefore the question of the existence of this model with no cut-off remains open, and could be investigated further. In particular, our calculations are consistent with the existence of a non-trivial UV fixed point with $u_2 > 0$, $u_4 < 0$, $u_{6,1} > 0$ and $u_{6,2} > 0$. More generally, the existence of non-trivial fixed points for this model will be first investigated by means of an $\varepsilon$-expansion \cite{4-eps}. Similarly to ordinary scalar field theories, one can indeed construct renormalizable $\vphi^4$ TGFT models on groups of dimension $4$ (see \cite{cor_su2} and \cite{thesis}), and then analytically continue the dimension to $4 - \varepsilon$. This procedure should give first hints about the $\SU(2)$ model. It will then be desirable to investigate the same questions with different and more adapted tools, such as the Functional Renormalization Group \cite{astrid_tim, tt4}.

\section*{Acknowledgments}

I would like to thank Dine Ousmane Samary and Joseph Ben Geloun for discussions, especially in the early stages of this work. I am also grateful to Daniele Oriti and Vincent Rivasseau for their comments on a draft version of this paper. The final version greatly benefited from insightful remarks I owe to Razvan Gurau and an anonymous referee, which I wish to acknowledge.


\appendix

\section*{Appendix: Laplace approximation}

Two properties are involved in the evaluation of the amplitudes at large scales. The first one is the short time asymptotics of the heat-kernel
\beq\label{hk_as}
K_{\alpha}(g) \underset{\alpha \to 0}{\sim} k \frac{e^{- \frac{\Psi(g)^2}{\alpha}}}{\alpha^{3/2}} \frac{\Psi(g)}{\sin \Psi(g)} \,,
\eeq
valid on any compact which does not contain $- \one$. $\Psi(g) \in \left[ 0 , \pi \right]$ is the class angle of $g$, defined by
\beq
g = \e^{X_g} \; ; \qquad \vert X_g \vert = \frac{\Psi(g)}{2}\,, 
\eeq
and $\vert X \vert = \sqrt{\rm{tr} (X^{\dagger} X)}$ is the norm of $X \in \su(2)$. We can use the orthonormal basis $\{ \tau_k \} = \{ \rm{i} \frac{\sigma_k}{2} \}$, with $\{ \sigma_k \}$ the Pauli matrices. We then have:
\beq
X_g \equiv \sum_k X_{g,k} \, \tau_k  \;, \qquad  \vert X_g \vert^2 = 4 \Psi(g)^2 = \sum_k {X_{g,k}}^2 \;.  
\eeq 
In what follows, we shall adopt a vectorial notation for scalar products between Lie algebra elements. 

\

The second property we use in the computations is that arbitrarily close to the identity, the normalized Haar measure $\extd g$ is equivalent to the Lebesgue measure on the Lie algebra $\su(2)$:
\beq\label{haar_vect}
\extd g \underset{g \to \one}{\approx} \frac{1}{16  \pi^2 } \, \extd X_g \equiv \frac{1}{16  \pi^2 }  \,\extd X_{g,1} \extd X_{g,2} \extd X_{g,3} \,.
\eeq

\

Consider now an integral of the form
\beq
\cI(\{ \alpha(f) \}) = \int_{\SU(2)} [\extd h_l ]^L \prod_{f \in F} K_{\alpha(f)}\left( \overrightarrow{\prod_{l \in f}} {h_l}^{\varepsilon_{lf}} \right) \,,
\eeq
where: $L$ and $F$ are finite sets such that, for any $f \in F$, $f \subset L$; $\varepsilon_{lf} = \pm 1$ or $0$; and $\alpha(f) > 0$ for any $f \in F$. Let us moreover assume that:
\beq
\left( \forall f \in F\,, \qquad  \overrightarrow{\prod_{l \in f}} {h_l}^{\varepsilon_{lf}} = \one \right) \; \Rightarrow \; \left( \forall l \in L\,, \qquad h_l = \one \right)\,.
\eeq
When the parameters $\alpha(f)$ are simultaneously sent to $0$, the integrand of $\cI(\{ \alpha(f) \})$ is more and more peaked around the configurations such that
\beq
\overrightarrow{\prod_{l \in f}} {h_l}^{\varepsilon_{lf}} = \one
\eeq
for any $f \in F$. By hypothesis, the unique such configuration is $\{ h_l = \one \}$. One can therefore linearize the variables $h_l$ around the identity to accurately estimate the integral. Using this idea in combination with (\ref{hk_as}) and (\ref{haar_vect}) one finds that:
\bes
\cI(\{ \alpha(f) \}) &\underset{\alpha(f) \to 0}{\sim}& (4\pi)^{\frac{F}{2} - 2 L} \left( \prod_{l \in L} \alpha_l \right)^{-3/2} \\ 
&& \times \int_{\mathbb{R}^3} [\extd X_l ]^L \prod_{f \in F} \exp\left( \frac{1}{\alpha(f)} \left( \sum_{l \in L} \varepsilon_{lf} X_l \right)^2 \right)\,. \nn
\ees
Hence determining the asymptotic behaviour of $\cI(\{ \alpha(f) \})$ reduces to the computation of a Gaussian integral. Such formula have already been used in the literature, for instance in \cite{Valentin_Joseph} and \cite{melonic_phase}.


\bibliographystyle{hunsrt}
\bibliography{biblio}

\end{document}